\PassOptionsToPackage{unicode}{hyperref}
\PassOptionsToPackage{hyphens}{url}

\documentclass[
  11pt]{article}
\usepackage{xcolor}
\usepackage{amsmath,amssymb}
\usepackage{iftex}
\ifPDFTeX
  \usepackage[T1]{fontenc}
  \usepackage[utf8]{inputenc}
  \usepackage{textcomp} 
\else 
  \usepackage{unicode-math} 
  \defaultfontfeatures{Scale=MatchLowercase}
  \defaultfontfeatures[\rmfamily]{Ligatures=TeX,Scale=1}
\fi
\usepackage{lmodern}
\ifPDFTeX\else

\fi

\IfFileExists{upquote.sty}{\usepackage{upquote}}{}
\IfFileExists{microtype.sty}{
  \usepackage[]{microtype}
  \UseMicrotypeSet[protrusion]{basicmath} 
}{}
\makeatletter
\@ifundefined{KOMAClassName}{
  \IfFileExists{parskip.sty}{%
    \usepackage{parskip}
  }{
    \setlength{\parindent}{0pt}
    \setlength{\parskip}{6pt plus 2pt minus 1pt}}
}{
  \KOMAoptions{parskip=half}}
\makeatother

\makeatletter
\ifx\paragraph\undefined\else
  \let\oldparagraph\paragraph
  \renewcommand{\paragraph}{
    \@ifstar
      \xxxParagraphStar
      \xxxParagraphNoStar
  }
  \newcommand{\xxxParagraphStar}[1]{\oldparagraph*{#1}\mbox{}}
  \newcommand{\xxxParagraphNoStar}[1]{\oldparagraph{#1}\mbox{}}
\fi
\ifx\subparagraph\undefined\else
  \let\oldsubparagraph\subparagraph
  \renewcommand{\subparagraph}{
    \@ifstar
      \xxxSubParagraphStar
      \xxxSubParagraphNoStar
  }
  \newcommand{\xxxSubParagraphStar}[1]{\oldsubparagraph*{#1}\mbox{}}
  \newcommand{\xxxSubParagraphNoStar}[1]{\oldsubparagraph{#1}\mbox{}}
\fi
\makeatother

\usepackage{longtable,booktabs,array}
\usepackage{calc} 

\usepackage{etoolbox}
\makeatletter
\patchcmd\longtable{\par}{\if@noskipsec\mbox{}\fi\par}{}{}
\makeatother

\IfFileExists{footnotehyper.sty}{\usepackage{footnotehyper}}{\usepackage{footnote}}
\makesavenoteenv{longtable}
\usepackage{graphicx}
\makeatletter
\newsavebox\pandoc@box
\newcommand*\pandocbounded[1]{
  \sbox\pandoc@box{#1}%
  \Gscale@div\@tempa{\textheight}{\dimexpr\ht\pandoc@box+\dp\pandoc@box\relax}%
  \Gscale@div\@tempb{\linewidth}{\wd\pandoc@box}%
  \ifdim\@tempb\p@<\@tempa\p@\let\@tempa\@tempb\fi
  \ifdim\@tempa\p@<\p@\scalebox{\@tempa}{\usebox\pandoc@box}%
  \else\usebox{\pandoc@box}%
  \fi%
}

\def\fps@figure{htbp}
\makeatother

\providecommand{\tightlist}{%
  \setlength{\itemsep}{0pt}\setlength{\parskip}{0pt}}

\usepackage[]{natbib}
\usepackage{bbm}
\usepackage[nomarginpar]{geometry}
\usepackage{setspace}
\usepackage{lscape}
\usepackage{rotating}
\usepackage{tabularx, calc}
\usepackage{threeparttable}
\usepackage{adjustbox}
\usepackage{ragged2e}
\usepackage{changepage}
\usepackage{authblk}
\usepackage{caption}
\makeatletter
\@ifpackageloaded{caption}{}{\usepackage{caption}}
\AtBeginDocument{%
\ifdefined\contentsname
  \renewcommand*\contentsname{Table of contents}
\else
  \newcommand\contentsname{Table of contents}
\fi
\ifdefined\listfigurename
  \renewcommand*\listfigurename{List of Figures}
\else
  \newcommand\listfigurename{List of Figures}
\fi
\ifdefined\listtablename
  \renewcommand*\listtablename{List of Tables}
\else
  \newcommand\listtablename{List of Tables}
\fi
\ifdefined\figurename
  \renewcommand*\figurename{Figure}
\else
  \newcommand\figurename{Figure}
\fi
\ifdefined\tablename
  \renewcommand*\tablename{Table}
\else
  \newcommand\tablename{Table}
\fi
}
\@ifpackageloaded{float}{}{\usepackage{float}}
\floatstyle{ruled}
\@ifundefined{c@chapter}{\newfloat{codelisting}{h}{lop}}{\newfloat{codelisting}{h}{lop}[chapter]}
\floatname{codelisting}{Listing}

\usepackage{amsthm}
\theoremstyle{plain}
\newtheorem{theorem}{Theorem}[section]
\theoremstyle{plain}
\newtheorem{corollary}{Corollary}[section]
\theoremstyle{plain}
\newtheorem{lemma}{Lemma}[section]
\theoremstyle{remark}
\AtBeginDocument{}

\newtheorem{refremark}{Remark}[section]

\makeatother
\makeatletter
\@ifpackageloaded{caption}{}{\usepackage{caption}}
\@ifpackageloaded{subcaption}{}{\usepackage{subcaption}}
\makeatother
\makeatletter
\@ifpackageloaded{amsthm}{}{\usepackage{amsthm}}
\makeatother
\makeatletter
\@ifundefined{c@lemma}{}{\let\c@lemma\c@theorem}
\makeatother
\makeatletter
\@ifundefined{c@corollary}{}{\let\c@corollary\c@theorem}
\makeatother
\makeatletter
\@ifundefined{c@proposition}{}{\let\c@proposition\c@theorem}
\makeatother
\makeatletter
\@ifundefined{c@conjecture}{}{\let\c@conjecture\c@theorem}
\makeatother
\makeatletter
\@ifundefined{c@definition}{}{\let\c@definition\c@theorem}
\makeatother
\makeatletter
\@ifundefined{c@example}{}{\let\c@example\c@theorem}
\makeatother
\makeatletter
\@ifundefined{c@exercise}{}{\let\c@exercise\c@theorem}
\makeatother
\theoremstyle{remark}
\makeatletter
\@ifundefined{c@theorem}{
  \newtheorem{myexm}{Example}[section]
  \let\c@theorem\c@myexm
  \let\thetheorem\themyexm
  \let\p@theorem\p@myexm
}{
  \newtheorem{myexm}[theorem]{Example}
}
\makeatother
\theoremstyle{plain}
\newtheorem{mylem}[theorem]{Lemma}
\theoremstyle{plain}
\newtheorem{myprp}[theorem]{Proposition}
\theoremstyle{plain}

\theoremstyle{definition}
\newtheorem{mydef}[theorem]{Definition}
\theoremstyle{definition}
\newtheorem{asm}[theorem]{Assumption}
\theoremstyle{definition}
\newtheorem{axm}[theorem]{Axiom}
\theoremstyle{plain}
\newtheorem{mycor}[theorem]{Corollary}
\usepackage{bookmark}
\IfFileExists{xurl.sty}{\usepackage{xurl}}{} 
\hypersetup{
  pdftitle={Heterogeneous Elasticities, Aggregation, and Retransformation Bias},
  hidelinks,
  pdfcreator={LaTeX via pandoc}}

\begin{document}
\date{}
\title{Heterogeneous Elasticities, Aggregation, and Retransformation
Bias}
\renewcommand\Authfont{\large}
\renewcommand\Affilfont{\large}
\setlength{\affilsep}{1em}
\author[1]{Ellen Munroe}
\author[1]{Alexander Newton}
\author[1]{Meet Shah}
\affil[1]{Economics Department, London School of Economics and Political
Science}
\maketitle

\begin{abstract}
Economists often interpret estimates from linear regressions with log
dependent variables as elasticities. However, the coefficients from
log-log regressions estimate the elasticity of the geometric mean of
\(y_i|x_i\) , not the arithmetic mean. The unbounded difference between
the two is known as retransformation bias and can take either sign. We
develop a specification-robust debiased estimator of the average
arithmetic elasticity and re-estimate 50 results from top 5 papers
published in 2020. We find that 19 are significantly different, with the
median absolute difference being 65\% of the OLS elasticity estimate.
Furthermore, we show standard instrumental variables assumptions with
log dependent variables do not identify the elasticity. We specify a
control function approach and re-estimate papers that use 2SLS with log
dependent variables. We find that 13 of 19 results from top 5 papers are
significantly different between the two approaches. Retransformation
bias arises as a result of heterogeneous responses. The geometric mean
elasticity corresponds to the average response. Arithmetic and geometric
means are elements of the power mean family. We show power mean
elasticities are sufficient statistics for a common class of decision
problems.
\end{abstract}

\begingroup
\renewcommand\thefootnote{}\footnotetext{We thank Kimia Zargarzadeh for
excellent research assistance. We thank STICERD for financial support.}
\endgroup
\setcounter{footnote}{0}

\newpage

\section{Introduction}\label{sec-intro}

Economists are often interested in the arithmetic mean elasticity
\(\frac{\partial\log\mathbb{E}[y|x]}{\partial \log x}\).\footnote{Almost
  all discussion in this paper applies to semi-elasticities as it does
  to elasticities - save for the decision problem axiomatisation in
  Section~\ref{sec-discussion}. Our estimators work for
  semi-elasticities as they do for elasticities.} This quantity is used
to understand how, for example, quantity sold changes with price,
government revenue changes with tax rates, carbon emissions change with
carbon taxes, or aggregate welfare changes with policy.

To estimate this object, economists use a log-log regression. However,
as \citet{goldbergerInterpretationEstimationCobbDouglas1968} notes, the
estimated parameters in these regressions converge to the conditional
\textbf{geometric} mean elasticity
\(\frac{\partial\mathbb{E}[\log y|x]}{\partial \log x}\)- not the
arithmetic mean elasticity.

To demonstrate, suppose we estimate the log-log model \[
\log{y_{i}} = \beta \log x_{i} + v_{i}, \quad \mathbb{E}[v_{i}|x_{i}] = 0 \tag{1}
.\] The geometric mean elasticity will be \(\beta\), whereas the
arithmetic mean elasticity will be
\(\beta + \frac{\partial \log E[\exp(v_{i})|x_{i}]}{\partial \log x_{i}}\).
The second term will only be zero if the error, \(v_{i}\), and the
regressor, \(x_{i}\), are fully independent - even heteroskedasticity
will introduce a wedge between the two. This may be of any sign and
size. Thus, any consistent estimator \(\hat{\beta}\) for \(\beta\) will
in general be biased for the arithmetic mean elasticity. Following
\citet{duanSmearingEstimateNonparametric1983}, we call this
retransformation bias.

We find the arithmetic mean elasticity and the geometric mean elasticity
are frequently different in practice. Using our specification-robust
debiased machine learning estimator
(Section~\ref{sec-debiased-estimators}), we re-estimate 50 results from
top 5 papers published in 2020. We find that 19 have estimated average
arithmetic mean elasticities that are statistically significantly
different from the geometric mean elasticities estimated by OLS. Amongst
these results, we find the median absolute bias to be 65\% of the OLS
elasticity estimate. Four had opposite signs.

In previous work, \citet{santossilvaLogGravity2006} propose using PPML
to estimate the arithmetic mean elasticity for trade models. This is
consistent under a different model:

\[
y_{i} = x_{i}^{\gamma} \exp(u_{i}), \quad \mathbb{E}[\exp{ u_{i} }|x_{i}] = 1 \tag{2}.
\]

Here, \(\gamma\) is the arithmetic mean elasticity and PPML estimates it
consistently. However, direct comparisons between log-log OLS and PPML
estimates are flawed as the estimands are simply different. Furthermore,
under heterogeneous responses PPML does not estimate the arithmetic mean
elasticity. Indeed, with heterogeneous elasticities only the geometric
mean elasticity is constant. Under heterogeneity model (1) - the log-log
model - is the most natural statistical model.

If the log-log model, with constant geometric mean elasticity, is the
most natural why might economists want to estimate the
\textbf{arithmetic} mean elasticity? Using the potential outcomes
framework we provide monopolist and social planner examples which give
intuition for when heterogeneous responses require the use of our
estimator to estimate the arithmetic mean elasticity. We extend our
discussion to power mean elasticities in general and motivate their
estimation as the sufficient statistics for a general class of decision
problem.

When economists have endogeneity concerns they will often use
instrumental variables to estimate log-log structural equations. We show
an impossibility result for identification of the arithmetic mean
elasticity under standard IV assumptions. Identifying this object
requires the additional assumption of triangularity to enable a control
function approach. We develop a separate debiased machine learning
estimator with instrumental variables for the average arithmetic mean
elasticity and re-estimate results from the literature.

The paper proceeds as follows. Section~\ref{sec-intro} concludes with a
motivating example from public finance and a literature review.
Section~\ref{sec-model} presents a heterogeneous elasticities model,
derives a wedge between geometric and arithmetic means represented as
the retransformation bias term, and generalizes to wedges between other
power means. Section~\ref{sec-discussion} discusses aggregation in
economic models, provides axioms under which decision-makers would
estimate power mean elasticities, and shows how aggregation choice maps
to welfare weights in a social planner's problem.
Section~\ref{sec-identification} discusses how to obtain arithmetic mean
elasticities from RCTs and instrumental variables (IV) analysis, showing
that recovering the arithmetic mean elasticity from log-log IV
regressions is impossible under standard assumptions.
Section~\ref{sec-debiased-estimators} presents our estimators for
average arithmetic mean elasticities under regular identification
assumptions and for IV analysis. Section~\ref{sec-empirical}
re-estimates 50 papers from Top 5 economics journals, showing that the
retransformation bias term is sizeable in practice.
Section~\ref{sec-conclusion} concludes.

\subsection{Motivating Example: Marginal Value of Public
Funds}\label{motivating-example-marginal-value-of-public-funds}

Public economics often uses the marginal value of public funds (MVPF) to
evaluate policy impacts. The MVPF is a ratio of the beneficiaries'
willingness to pay and the net government cost. For evaluations of top
marginal tax rate changes, \citet{hendrenUnifiedWelfareAnalysis2020}
show that MVPF can be given by \[
MVPF = \frac{1}{1 - \frac{\tau}{1-\tau} z \epsilon_{eti}}
\] where, \(\tau\) is the tax rate, \(z\) is the pareto parameter for
the distribution of incomes Y and \[
\epsilon_{eti} = \frac{d\mathbb{E}[{Y}]}{d(1-\tau)} \cdot \frac{1-\tau}{\mathbb{E}[Y]}
\] is the elasticity of the arithmetic mean of top incomes with respect
to the after-tax keep rate.

To reach an estimate of the MVPF, their paper relies on elasticity
estimates obtained from log-log regressions, for example from
\citet{saezEffectMarginalTax2003}. As we show, these will estimate the
geometric mean elasticity of top incomes with respect to the after-tax
keep rate, which is also equal to the average elasticity under
heterogeneity. The average elasticity can be arbitrarily different from
the elasticity of the arithmetic mean.

To provide intuition, consider an economy with two individuals at the
top tax rate, individual 1 has an income of 1000\$ and an elasticity of
0.3, individual 2 has an income of 10000 dollars and an elasticity of
0.1. The average elasticity is 0.2, but the elasticity of the arithmetic
mean is (300+1000)/11000 = 0.11. In particular, to obtain the required
elasticity, one needs to weigh the elasticities of those with higher
incomes more than those with lower incomes, even within top income
brackets.

\subsection{Related literature}\label{related-literature}

We see our paper as uniting the statistical property of retransformation
bias with economic models of heterogenous elasticities and aggregation.

\citet{lewbelAggregationLogLinearModels1992} studies when individual
heterogenous elasticity agents can be approximated using a
representative agent. We, conversely, focus on situations where they
cannot be modelled as such, and focus on the properties of different
estimators under heterogenous elasticities.
\citet{breinlichTradeGravityAggregation2024} are closely related. They
study the performance of PPML in trade models when elasticities vary at
the sector level, finding that it approximates a weighted average of the
sectoral elasticities when trade costs do not vary at sector level, but
is biased otherwise. We generalize their results to arbitrary elasticity
distributions, consider the performance of OLS and provide a new
estimator for the average arithmetic mean elasticity.

Heterogenous responses are often modelled using random coefficients
models \citep{wooldridgeFixedEffectsRelatedEstimators2005}. We study
bias arising from non-linearity in the heterogenous coefficients,
specifically in constant individual elasticity models. To our knowledge,
we are the first to use heterogenous coefficient models to provide
economic meaning to the statistical problem of retransformation bias.
\citet{qianTestingOmittedHeterogeneity2026} provides a test for omitted
heterogeneity, and provides an estimator for the average coefficient for
the case when heterogeneity is asymptotically disappearing. We focus on
cases where the heterogeneity distribution is stable, and give examples
for when the relevant estimand is not the average coefficient.

Transformation bias has long been recognized in economics and statistics
literature. \citet{goldbergerInterpretationEstimationCobbDouglas1968}
notes that while transformation bias affects conditional expectations,
it does not affect conditional medians---though the latter requires a
different identifying assumption (zero conditional median of errors
rather than zero conditional mean). We focus on retransformation bias
from the model specification \(\mathbb{E}[\log y|x]= x \beta\).

\citet{duanSmearingEstimateNonparametric1983} provides a non-parametric
bias-corrected estimator for \(\mathbb{E}[y|x]\) under full independence
between errors and regressors.
\citet{manningLoggedDependentVariable1998} extends this to binary
treatments. Our estimator nests Manning's approach while considering a
more comprehensive distributional framework.
\citet{aiSemiparametricDerivativeEstimator2008} develop related
semi-parametric estimators for the conditional expectation and its
derivatives, but our paper builds on their estimator. Our work relies on
debiased machine learning \citep{chernozhukovDoubleDebiasedMachine2018},
which allows us to bypass the curse of dimensionality that their
proposed spline approach faces. The estimator we propose approximates an
unknown, potentially high-dimensional function of the regressors using
flexible machine-learning methods. Such methods typically require
regularization to control overfitting, which in turn slows convergence
and introduces bias. To address this issue, we implement the locally
robust semiparametrics of
\citet{chernozhukovLocallyRobustSemiparametric2022}~in our elasticity
setting. This allows us to remove the leading regularization bias and to
obtain an estimator that is both consistent and asymptotically normal
under their high-level conditions. Our IV estimator uses the automatic
debiasing procedure from
\citet{chernozhukovAutomaticDebiasedMachine2024} to avoid estimation of
complex Riesz representers. For both estimators we use score matching
\citep{hyvarinenEstimationNonnormalizedStatistical2005, hanNeuralNetworkbasedScore2024}.

When \(y\) takes zero values, researchers often use \(\log(1+y)\) or
\(\operatorname{arcsinh}(y)\) transformations.
\citet{chenLogsZerosProblems2024} show that coefficients from these
transformations are sensitive to scaling of \(y\), arising from
difficulties distinguishing extensive and intensive margin effects.
However, they do not consider transformation bias from non-linear
transformations when making inferences about \(\mathbb{E}[y_i|x_i]\). We
only focus on cases where the outcome variable is strictly positive,
relative to their focus on scaling when it can be zero.
\citet{mullahyWhyTransformPitfalls2024} discusses transformation bias in
the \(\log(1+y)\) context as another reason to avoid variable
transformations.

\citet{bellemareElasticitiesInverseHyperbolic2020} discuss when
coefficients from \(\operatorname{arcsinh}\) transformations can be
interpreted as percent changes, but do not address transformation bias
from conditional expectations of transformed errors.
\citet{nortonInverseHyperbolicSine2022} applies
\citet{duanSmearingEstimateNonparametric1983}'s smearing estimator to
arcsinh regressions, but doesn't tackle the case where there is
dependence between regressors and the error term.
\citet{thakralWhenAreEstimates2025} suggest power transformations over
\(\log(1+y)\) or \(\operatorname{arcsinh}\), citing scaling sensitivity.
Our paper demonstrates that transformation bias occurs with any
non-linear transformation.

\citet{santossilvaLogGravity2006} encourage researchers to use Poisson
pseudo-maximum likelihood (PPML) estimation to estimate the parameters
of the exponential model. However, the exponential is a different model
to the log-linear model, and in general they cannot both hold at the
same time. We show that the exponential model is incompatible with each
individual having a heterogeneous, constant elasticity. The log-linear
model allows for non-constant elasticities of \(\mathbb{E}[y_i|x_i]\)
over \(x\). In this paper we show conditions for which both models can
coexist, and show that in most empirical models, one model being
correctly specified implies the other cannot be.

\section{Model}\label{sec-model}

\subsection{Primitives}\label{primitives}

Let \(x \in \mathbb{R}_+\) denote a scalar variable of interest (e.g.~a
price, tax, policy parameter, or cost shifter). For each unit
\(\omega \in \Omega\), let \(Y(x;\omega) > 0\) denote an outcome of
interest, such as demand, output, emissions, or income. Units are
distributed according to a probability measure \(F\). Expectations
\(\mathbb{E}[\cdot]\) are taken with respect to \(F\).

The index \(\omega\) summarizes unit-specific characteristics that are
fixed with respect to \(x\) and unobserved by the econometrician. These
characteristics may reflect preferences, technologies, constraints, or
other sources of persistent heterogeneity.

We allow outcomes to respond proportionally to changes in \(x\), while
permitting heterogeneity in both scale and responsiveness across units.

\begin{asm}[Heterogenous Elasticity Potential Outcomes]\label{asm-randomcoef}

For each \(\omega\), \[
\log Y(x;\omega) = a(\omega) + \varepsilon(\omega)\,\log x.
\] The coefficients \(a(\omega)\) and \(\varepsilon(\omega)\) have
finite moments. This is equivalent to a random coefficients model.

\end{asm}

The term \(a(\omega)\) captures heterogeneity in scale, while
\(\varepsilon(\omega)\) captures heterogeneity in responsiveness to
\(x\). No restrictions are imposed on the joint distribution of
\((a(\omega), \varepsilon(\omega))\). Define the individual elasticity
\[
\varepsilon(\omega) \equiv \frac{d \log Y(x;\omega)}{d \log x}.
\] Under \hyperref[asm-randomcoef]{Assumption~\ref*{asm-randomcoef}},
individual elasticities are constant in \(x\) but heterogeneous across
units. That is, every unit behaves as if it has a constant elasticity,
but that constant elasticity is different between units.

Economic objectives typically depend on aggregate outcomes. We therefore
consider a family of aggregation rules that nests common cases.

\begin{mydef}[Power-mean aggregators]\label{mydef-powermean}

For \(\phi \in \mathbb{R}\), define \[
M_\phi(x) \equiv
\begin{cases}
\left( \mathbb{E}[Y(x)^\phi] \right)^{1/\phi}, & \phi \neq 0, \\
\exp\left( \mathbb{E}[\log Y(x)] \right), & \phi = 0.
\end{cases}
\]

\end{mydef}

Special cases include:

\begin{itemize}
\tightlist
\item
  \(\phi = 1\), the arithmetic mean, \(M_1(x) = \mathbb{E}[Y(x)]\),
\item
  \(\phi \rightarrow 0\), the geometric mean,
  \(M_0(x) = \exp(\mathbb{E}[\log Y(x)])\),
\item
  \(\phi \rightarrow -\infty\), the minimum function,
  \(M_{-\infty}=\min{Y(x)}\).
\end{itemize}

This family provides a continuous mapping between aggregation rules that
emphasize total outcomes and those that emphasize typical outcomes.

For each aggregator \(M_\phi(x)\), define the associated elasticity.

\begin{mydef}[Power-mean elasticity]\label{mydef-powermeanelasticity}

The elasticity of \(M_\phi(x)\) with respect to \(x\) is \[
\varepsilon_\phi(x) \equiv \frac{d \log M_\phi(x)}{d \log x}.
\]

\end{mydef}

Under \hyperref[asm-randomcoef]{Assumption~\ref*{asm-randomcoef}},
power-mean elasticities admit a simple representation.

\begin{mylem}[Power-mean elasticity representer]\label{mylem-represent}

For any \(\phi \in \mathbb{R}\), \[
\varepsilon_\phi(x)
=
\frac{\mathbb{E}_{\omega}\!\left[ Y(x; \omega)^\phi \, \varepsilon(\omega) \right]}
     {\mathbb{E}\!\left[ Y(x; \omega)^\phi \right]}.
\]

\end{mylem}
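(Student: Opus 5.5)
The plan is to compute $d\log M_\phi(x)/d\log x$ directly from Definition~\ref{mydef-powermean}, using the structure $Y(x;\omega) = \exp(a(\omega))\,x^{\varepsilon(\omega)}$ from Assumption~\ref{asm-randomcoef}. Writing $t=\log x$, we have $Y(x;\omega)^\phi = \exp\bigl(\phi a(\omega) + \phi\varepsilon(\omega) t\bigr)$, so $\partial_t Y(x;\omega)^\phi = \phi\,\varepsilon(\omega)\,Y(x;\omega)^\phi$. I treat the cases $\phi\neq 0$ and $\phi=0$ separately.

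For $\phi\neq 0$, $\log M_\phi(x) = \tfrac{1}{\phi}\log \mathbb{E}[Y(x)^\phi]$. Differentiating in $t$ and passing the derivative inside the expectation gives
\[
\varepsilon_\phi(x) = \frac{1}{\phi}\,\frac{\mathbb{E}\bigl[\phi\,\varepsilon(\omega)\,Y(x;\omega)^\phi\bigr]}{\mathbb{E}[Y(x;\omega)^\phi]} = \frac{\mathbb{E}[Y(x;\omega)^\phi\,\varepsilon(\omega)]}{\mathbb{E}[Y(x;\omega)^\phi]}.
\]
The factor $\phi$ cancels. For $\phi=0$, $\log M_0(x) = \mathbb{E}[a(\omega)] + t\,\mathbb{E}[\varepsilon(\omega)]$, which is linear in $t$ by finiteness of first moments. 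Its derivative is $\mathbb{E}[\varepsilon(\omega)]$, which is exactly the claimed formula since $Y^0=1$. As a consistency check, the $\phi\to 0$ limit of the $\phi\neq0$ formula also gives $\mathbb{E}[\varepsilon]$.

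The main obstacle is justifying the interchange of differentiation and expectation in the $\phi\neq 0$ case. We also need $\mathbb{E}[Y(x)^\phi]$ finite and positive, and $\mathbb{E}[|\varepsilon|\,Y(x)^\phi]<\infty$. Finite moments of $(a,\varepsilon)$ alone do not guarantee these, because $Y^\phi$ is an exponential of $(a,\varepsilon)$. My first attempt would be to read the statement as holding whenever these expectations are finite in a neighbourhood of $x$, interpreting finite moments to include the relevant exponential moments. Under that reading, pick $t$ in a compact interval $[t_0-\delta,t_0+\delta]$. The derivative $|\phi\varepsilon|\,e^{\phi a+\phi\varepsilon t}$ is bounded by $|\phi\varepsilon|\bigl(e^{\phi a+\phi\varepsilon (t_0-\delta)}+e^{\phi a+\phi\varepsilon (t_0+\delta)}\bigr)$. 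Integrability of this bound follows from finiteness of $\mathbb{E}[Y^\phi]$ at slightly wider $t$, since $|\varepsilon| e^{c\varepsilon}\le C_\eta (e^{(c+\eta)\varepsilon}+e^{(c-\eta)\varepsilon})$. Dominated convergence (equivalently, the mean value theorem plus dominated convergence) then licenses the interchange.

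Positivity of the denominator is automatic, since $Y>0$. Finally, the chain rule $d\log M_\phi/d\log x = d\log M_\phi/dt$ completes the argument.
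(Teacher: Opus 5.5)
Your proposal is correct and follows essentially the paper's own argument. The paper gives no standalone proof of this lemma, but its proof of the wedge-decomposition proposition, and the remark after it, do the same computation: write $\log M_\phi=\frac{1}{\phi}\log\mathbb{E}[Y(x)^\phi]$ and differentiate in $\log x$ under the expectation, justified only by the phrase ``since the distribution of $\omega$ is fixed.'' You add rigour the paper omits: the dominated-convergence step, and the observation that finite polynomial moments of $(a(\omega),\varepsilon(\omega))$ do not give the needed exponential moments. So $\mathbb{E}[Y(x)^\phi]<\infty$ must be assumed on a neighbourhood of $x$, slightly more than the paper's pointwise condition, and your separate treatment of $\phi=0$ is also right.
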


Thus, \(\varepsilon_\phi(x)\) is a reweighted average of individual
elasticities, where the weights depend on both the aggregation parameter
\(\phi\) and the level of \(x\).

The two common cases we have discussed are immediate,

\begin{itemize}
\item
  \textbf{Geometric mean elasticity (\(\phi \rightarrow 0\))}: \[
  \varepsilon_0(x) = \varepsilon_0 = \mathbb{E}[\varepsilon(\omega)],
  \] which is constant in \(x\).
\item
  \textbf{Arithmetic mean elasticity (\(\phi = 1\))}: \[
  \varepsilon_1(x)
  =
  \frac{\mathbb{E}[Y(x)\,\varepsilon(\omega)]}
     {\mathbb{E}[Y(x)]},
  \] which generally varies with \(x\) as the composition of aggregate
  outcomes changes.
\end{itemize}

The elasticity \(\varepsilon_\phi(x)\) measures the percentage response
of an aggregate outcome defined by the aggregation rule \(\phi\).
Different values of \(\phi\) emphasize different parts of the
cross-sectional distribution. \(\phi \rightarrow 0\) weights units
symmetrically in log space and reflects the response of a typical unit;
\(\phi = 1\) weights units in proportion to their contribution to total
outcomes. Holding elasticities fixed, higher values of \(\phi\) put more
weight on high outcomes, and lower values of \(\phi\) put more weight on
low outcomes.

As shown below, different economic objectives correspond to different
values of \(\phi\), and only a subset of these elasticities are
identified by standard empirical approaches.

\subsection{Results}\label{results}

We begin by formalizing the distinction between elasticities associated
with different aggregation rules.

\begin{myprp}[Non-equivalence under heterogeneity]\label{myprp-noneqhet}

Suppose \hyperref[asm-randomcoef]{Assumption~\ref*{asm-randomcoef}}
holds. If \(\varepsilon(\omega)\) is non-degenerate, then for any
\(\phi \neq 0\), \[
\varepsilon_\phi(x) \neq \varepsilon_0
\] for a set of \(x\) with positive measure. Moreover,
\(\varepsilon_\phi(x)\) is generically a non-constant function of \(x\).

\end{myprp}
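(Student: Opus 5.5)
Throughout, write $t = \log x$ and, using Lemma~\ref{mylem-represent} and Assumption~\ref{asm-randomcoef}, $Y(x;\omega)^\phi = \exp(\phi a(\omega) + \phi\varepsilon(\omega) t)$. Thus $\varepsilon_\phi(x)$ is the mean of $\varepsilon$ under the exponentially tilted measure $dF_{\phi,t} \propto e^{\phi a + \phi \varepsilon t}\,dF$. The plan is to study the cumulant-type function
\[
K(t) \equiv \log \mathbb{E}\bigl[e^{\phi a(\omega) + \phi \varepsilon(\omega) t}\bigr],
\]
which equals $\phi \log M_\phi(e^t)$. Its derivative is $K'(t) = \phi\,\varepsilon_\phi(e^t)$. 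I will assume the expectation is finite on an open interval of $t$; this is implicit in the definition of $M_\phi$, and I would state it as a standing regularity condition. Under it, $K$ is real-analytic there and one may differentiate under the integral.

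\textbf{Step 1 (non-constancy).} Differentiating once more gives
\[
K''(t) = \phi^2 \operatorname{Var}_{F_{\phi,t}}(\varepsilon),
\]
so that
\[
\frac{d\varepsilon_\phi(e^t)}{dt} = \phi\,\operatorname{Var}_{F_{\phi,t}}(\varepsilon(\omega)).
\]
The tilted measure $F_{\phi,t}$ is mutually absolutely continuous with $F$, because its density $e^{\phi a+\phi\varepsilon t}$ is strictly positive. Hence $\varepsilon$ is non-degenerate under $F_{\phi,t}$ whenever it is non-degenerate under $F$. Therefore, for $\phi\neq 0$, $t\mapsto \varepsilon_\phi(e^t)$ is strictly increasing if $\phi>0$ and strictly decreasing if $\phi<0$. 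This is already stronger than ``generically non-constant'': it holds for every non-degenerate joint law of $(a,\varepsilon)$.

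\textbf{Step 2 (non-equivalence on a positive-measure set).} A strictly monotone function of $t$ can equal the constant $\varepsilon_0 = \mathbb{E}[\varepsilon]$ at no more than one point $t^*$. Hence $\varepsilon_\phi(x)\neq\varepsilon_0$ for all $x$ in the domain except at most $x^* = e^{t^*}$. The domain is an open interval of positive measure, so the exceptional set is a null set. This delivers the first claim. The geometric case itself is immediate, since $\phi\to0$ gives the untilted measure and $\varepsilon_0=\mathbb{E}[\varepsilon]$, as noted after Lemma~\ref{mylem-represent}.

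\textbf{Main obstacle.} The hard part is the regularity needed to differentiate under the expectation. Assumption~\ref{asm-randomcoef} only gives finite moments of $a$ and $\varepsilon$, not finite exponential moments, so $\mathbb{E}[Y^\phi]$ could be infinite. I would handle this in two steps. First, restrict attention to the set of $x$ where $M_\phi(x)$ and $\varepsilon_\phi(x)$ are defined, meaning $\mathbb{E}[Y^\phi]<\infty$ and $\mathbb{E}[|\varepsilon| Y^\phi]<\infty$. By convexity of $K$, the set of $t$ where $\mathbb{E}[Y^\phi]$ is finite is an interval. Second, on its interior, invoke the standard analyticity of moment generating functions (dominated convergence via $|\varepsilon|^k e^{\phi\varepsilon t} \le C e^{\phi\varepsilon t'}$ for nearby $t'$) to justify Step 1. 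If that interior is empty or a single point, the proposition is vacuous or must be read as conditional on $\varepsilon_\phi$ being defined on an open set; I would add a remark stating this explicitly.
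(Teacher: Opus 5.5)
Your proof is correct, and it takes a genuinely different and more rigorous route than the paper's.

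\textbf{What the paper does.} The paper also starts from the tilted-mean form $\varepsilon_\phi(x)=\mathbb{E}[\varepsilon e^{\phi a}x^{\phi\varepsilon}]/\mathbb{E}[e^{\phi a}x^{\phi\varepsilon}]$. It then simply asserts that, because the weights $x^{\phi\varepsilon(\omega)}$ move with $x$ and $\varepsilon$ is non-degenerate, $\varepsilon_\phi(x)$ ``varies with $x$'' and equals $\varepsilon_0$ only on a null set. It never shows that a changing reweighting actually moves the mean. Nor does it show why the coincidence set is null rather than merely a proper subset.

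\textbf{What your argument adds.} Your identity $\frac{d}{dt}\varepsilon_\phi(e^t)=\phi\operatorname{Var}_{F_{\phi,t}}(\varepsilon)$, together with the mutual absolute continuity of $F_{\phi,t}$ and $F$, fills exactly that gap. It also proves more than is stated. For every non-degenerate law, $\varepsilon_\phi$ is strictly increasing in $\log x$ when $\phi>0$ and strictly decreasing when $\phi<0$. So it crosses $\varepsilon_0$ at most once, and the qualifier ``generically'' is unnecessary. This is the global version of a local fact the paper records after the wedge decomposition: the coefficient on $\log x$ in the expansion at $x=1$ is a scale-weighted variance. The paper's version buys only brevity and intuition.

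\textbf{Regularity.} Your regularity discussion is also worth keeping. Assumption~\ref{asm-randomcoef} gives only ordinary moments. The paper imposes $\mathbb{E}[Y(x)^\phi]<\infty$ only in the next proposition. So the statement as written tacitly needs the exponential-moment condition you make explicit, namely finiteness for $\log x$ in an open interval. Without it, $\varepsilon_\phi$ need not be defined on any set of positive measure.
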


\hyperref[myprp-noneqhet]{Proposition~\ref*{myprp-noneqhet}} can also be
stated as, even if every individual is isoelastic, the elasticities of
every mean except the geometric mean are not isoelastic.
\hyperref[myprp-noneqhet]{Proposition~\ref*{myprp-noneqhet}} highlights
a fundamental distinction between elasticities defined by different
aggregation rules. The geometric elasticity \(\varepsilon_0\) averages
individual elasticities symmetrically in log space and is invariant to
changes in \(x\). In contrast, elasticities associated with other
aggregation rules reweight individual responses in a manner that depends
on the level of \(x\).

This reweighting reflects how, as \(x\) changes, units with different
responsiveness contribute differentially to aggregate outcomes. As a
result, aggregate elasticities relevant for totals, profits, or other
non-log objectives generally vary with \(x\) even when individual
elasticities are constant.

This also implies that parameters obtained from log-log OLS cannot
correspond to any elasticity other than that of the geometric mean, as
all the other ones are non-constant in x. Moreover, any estimator which
targets a constant power mean elasticity except geometric cannot be
consistent. For example, PPML targets a constant arithmetic mean
elasticity, therefore under heterogenous elasticities, the PPML
estimator is inconsistent.

One illustrative example is demand, let \(D(p; \omega)\) reflect the
quantity demanded at price \(p\). Then the elasticity of the arithmetic
mean of demand to price is \[
\varepsilon_{1}(p) = \dfrac{\mathbb{E}[D(p; \omega) \varepsilon(\omega)]}{\mathbb{E}[D(p; \omega)]} .
\]

We necessarily weigh those with higher demand more. A high elasticity
individual with high demand affects total demand more than a high
elasticity individual with low demand! Alternatively - for the same
demand, one should weigh a high elasticity individual more than a low
elasticity individual, as they will affect total demand more.

\begin{myprp}[Wedge decomposition of power-mean elasticities]\label{myprp-wedgedecomp}

Suppose \hyperref[asm-randomcoef]{Assumption~\ref*{asm-randomcoef}}
holds and \(\mathbb{E}[Y(x)^\phi] < \infty\) for the \(\phi\) of
interest. Define \[
G_\phi(x) \equiv \mathbb{E}[Y(x)^\phi].
\] Then for any \(\phi \neq 0\), \[
\varepsilon_\phi(x)
=
\frac{1}{\phi}\frac{d \log G_\phi(x)}{d \log x}.
\] Moreover, writing
\(\varepsilon_0 \equiv \mathbb{E}[\varepsilon(\omega)]\) and defining
the centered slope
\(\tilde\varepsilon(\omega)\equiv \varepsilon(\omega)-\varepsilon_0\),
we have the identity \[
\varepsilon_\phi(x)
=
\varepsilon_0
+
\frac{1}{\phi}\frac{d}{d\log x}
\log \mathbb{E}\!\left[\exp\!\left(\phi a(\omega)\right)\,x^{\phi \tilde\varepsilon(\omega)}\right].
\] In particular, for \(\phi=1\) (the arithmetic mean), \[
\varepsilon_1(x)
=
\varepsilon_0
+
\frac{d}{d\log x}
\log \mathbb{E}\!\left[\exp(a(\omega))\,x^{\tilde\varepsilon(\omega)}\right].
\]

\end{myprp}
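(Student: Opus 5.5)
The plan is to work directly from the definition of $M_\phi$ and the random-coefficients structure of Assumption~\ref{asm-randomcoef}. For the first identity, I will note that for $\phi\neq 0$ we have $\log M_\phi(x)=\frac{1}{\phi}\log G_\phi(x)$ by Definition~\ref{mydef-powermean}. Differentiating with respect to $\log x$ and applying Definition~\ref{mydef-powermeanelasticity} then gives $\varepsilon_\phi(x)=\frac{1}{\phi}\frac{d\log G_\phi(x)}{d\log x}$ immediately. As a consistency check I will also differentiate $G_\phi$ under the expectation. Since $Y(x;\omega)^\phi=\exp(\phi a(\omega))\,x^{\phi\varepsilon(\omega)}$, we get $\frac{d}{d\log x}Y^\phi=\phi\,\varepsilon(\omega)Y^\phi$, which recovers Lemma~\ref{mylem-represent}.

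For the decomposition, I will factor the common term out of $Y(x;\omega)^\phi$. Writing $\varepsilon(\omega)=\varepsilon_0+\tilde\varepsilon(\omega)$ gives $Y(x;\omega)^\phi=x^{\phi\varepsilon_0}\exp(\phi a(\omega))\,x^{\phi\tilde\varepsilon(\omega)}$. Because $x^{\phi\varepsilon_0}$ is deterministic, it comes out of the expectation:
\[
\log G_\phi(x)=\phi\varepsilon_0\log x+\log \mathbb{E}\!\left[\exp(\phi a(\omega))\,x^{\phi\tilde\varepsilon(\omega)}\right].
\]
Differentiating in $\log x$ and dividing by $\phi$ yields the stated identity. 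Setting $\phi=1$ gives the arithmetic-mean special case.

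The only nontrivial step is justifying the interchange of differentiation and expectation, that is, showing $G_\phi$ is differentiable in $\log x$ with derivative $\mathbb{E}[\phi\varepsilon Y^\phi]$. (The first identity does not strictly need this, only differentiability of $G_\phi$, which the elasticity's definition presupposes.) I would prove it by dominated convergence on a neighbourhood $t\in[t_0-\delta,t_0+\delta]$ of $t=\log x$. The derivative integrand is bounded by
\[
|\phi\varepsilon(\omega)|\exp(\phi a(\omega))\max\!\left(e^{\phi\varepsilon(\omega)(t_0-\delta)},e^{\phi\varepsilon(\omega)(t_0+\delta)}\right).
\]
Finite moments of $(a,\varepsilon)$ alone do not make this integrable, so I would assume, as a mild regularity condition implicit in the statement, that $\mathbb{E}[Y(x)^\phi]<\infty$ holds on an open neighbourhood of $x$. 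Convexity of $t\mapsto e^{\phi\varepsilon t}$ then controls the maximum by the two endpoint values. A factor $|\varepsilon|$ is absorbed using $|\varepsilon|e^{c\varepsilon}\le C_\eta(e^{(c+\eta)\varepsilon}+e^{(c-\eta)\varepsilon})$, which is integrable after shrinking $\delta$. Positivity of $G_\phi$ is automatic since $Y>0$, so the logarithm is well defined.
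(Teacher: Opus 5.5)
Your proposal is correct and follows the paper's proof essentially step for step. For the first identity you take logs of $M_\phi=G_\phi^{1/\phi}$ and differentiate. For the decomposition you factor the deterministic $x^{\phi\varepsilon_0}$ out of $\mathbb{E}[Y(x)^\phi]$ before taking logs and differentiating. Your dominated-convergence argument, which assumes $\mathbb{E}[Y(x)^\phi]<\infty$ on a neighbourhood of $x$ to guarantee that $G_\phi$ is differentiable, goes beyond the paper, which takes differentiability for granted. It is a worthwhile addition, since, as you note, finite moments of $(a,\varepsilon)$ alone do not ensure it.
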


Proof in Section~\ref{sec-wedgedecompproof}

Since the distribution of \(\omega\) is fixed, we may differentiate
inside the expectation to obtain \[
\varepsilon_\phi(x)-\varepsilon_0
=
\frac{\mathbb{E}\!\left[\exp(\phi a(\omega))\,x^{\phi\tilde\varepsilon(\omega)}\,\tilde\varepsilon(\omega)\right]}
{\mathbb{E}\!\left[\exp(\phi a(\omega))\,x^{\phi\tilde\varepsilon(\omega)}\right]}
\] which is a tilted mean of the difference from average.

\hyperref[myprp-wedgedecomp]{Proposition~\ref*{myprp-wedgedecomp}}
demonstrates that the difference between \(\varepsilon_0\) and
\(\varepsilon_\phi(x)\) requires understanding how the tilted moment
\(\mathbb{E}[Y(x)^\phi]\) varies with \(x\). Log-linear methods identify
\(\varepsilon_0\) because it depends only on the first moment of
\(\log Y(x;\omega)\). By contrast, \(\varepsilon_\phi(x)\) depends on
how heterogeneity is reweighted in levels as \(x\) changes, which is
summarized by the second term in the decomposition.

This is a `tilted' average of the residuals, where those individuals
with higher scales (\(a\)), and larger differences from the average
elasticity, are given more weight.

Focusing on the arithmetic geometric wedge, \[
\varepsilon_1(x)-\varepsilon_0
=
\frac{\mathbb{E}\!\left[\exp(a(\omega))\,x^{\tilde\varepsilon(\omega)}\,\tilde\varepsilon(\omega)\right]}
{\mathbb{E}\!\left[\exp(a(\omega))\,x^{\tilde\varepsilon(\omega)}\right]}.
\]

A first-order expansion around \(x=1\) yields \[
\varepsilon_1(x)-\varepsilon_0
=
\frac{\mathbb{E}\!\left[\exp(a(\omega))\,\tilde\varepsilon(\omega)\right]}{\mathbb{E}\!\left[\exp(a(\omega))\right]}
\;+\;
\log x\left(
\frac{\mathbb{E}\!\left[\exp(a(\omega))\,\tilde\varepsilon(\omega)^2\right]}{\mathbb{E}\!\left[\exp(a(\omega))\right]}
-
\left(\frac{\mathbb{E}\!\left[\exp(a(\omega))\,\tilde\varepsilon(\omega)\right]}{\mathbb{E}\!\left[\exp(a(\omega))\right]}\right)^2
\right)
\;+\; o(\log x).
\]

The first term captures sorting between scale \(\exp(a(\omega))\) and
responsiveness \(\tilde\varepsilon(\omega)\) at \(x=1\). One can think
of this, in a demand scenario, as describing the covariance between the
demand shock and elasticity. If demand shocks are larger for those with
higher elasticities, then, this term will be positive, else it will be
negative. The coefficient on \(\log x\) is the (scale-weighted) variance
of \(\tilde\varepsilon(\omega)\) and is therefore weakly positive, with
strict positivity whenever \(\tilde\varepsilon(\omega)\) is
non-degenerate.

The tilted-mean representation immediately delivers a sharp condition
under which the arithmetic and geometric elasticities coincide.

\begin{mycor}[No Wedge]\label{mycor-nowedge}

Under Assumption 1, \[
\varepsilon_1(x)=\varepsilon_0 \ \text{for all } x
\quad\Longleftrightarrow\quad
\mathbb{E}\!\left[\exp(a(\omega))\,x^{\tilde\varepsilon(\omega)}\,\tilde\varepsilon(\omega)\right]=0 \ \text{for all } x,
\] where
\(\tilde\varepsilon(\omega)\equiv \varepsilon(\omega)-\varepsilon_0\).
In particular, a sufficient condition is
\(\tilde\varepsilon(\omega)\equiv 0\) almost surely (no heterogeneity in
elasticities), in which case \(\varepsilon_\phi(x)=\varepsilon_0\) for
all \(\phi\) and all \(x\).

\end{mycor}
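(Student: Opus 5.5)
The plan is to read the corollary directly off the tilted-mean representation that follows Proposition~\ref{myprp-wedgedecomp}. Specialising to \(\phi=1\), that representation gives
\[
\varepsilon_1(x)-\varepsilon_0
=
\frac{N(x)}{D(x)},\qquad
N(x)\equiv \mathbb{E}\!\left[\exp(a(\omega))\,x^{\tilde\varepsilon(\omega)}\,\tilde\varepsilon(\omega)\right],\quad
D(x)\equiv \mathbb{E}\!\left[\exp(a(\omega))\,x^{\tilde\varepsilon(\omega)}\right].
\]
The first step is to make the representation legitimate. Write \(\mathbb{E}[Y(x)]=x^{\varepsilon_0}D(x)\). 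Then \(\log \mathbb{E}[Y(x)]=\varepsilon_0\log x+\log D(x)\), and differentiating in \(\log x\) gives \(\varepsilon_1(x)=\varepsilon_0+D'(\log x)/D\). Exchanging derivative and expectation yields \(\frac{d}{d\log x}D=N\), since \(\frac{d}{d\log x}x^{\tilde\varepsilon}=\tilde\varepsilon\,x^{\tilde\varepsilon}\). I will assume, as the paper implicitly does, that \(\mathbb{E}[Y(x)]<\infty\) and that \(\mathbb{E}[Y(x)|\varepsilon(\omega)|]<\infty\) locally uniformly in \(x\). Under that assumption, dominated convergence justifies the interchange. This interchange is the only delicate point, and the finite-moment requirement should be stated explicitly as a maintained condition.

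The second step handles both directions of the equivalence at once. Since \(\exp(a(\omega))x^{\tilde\varepsilon(\omega)}>0\) pointwise, we have \(0<D(x)<\infty\) for every \(x>0\). Hence \(\varepsilon_1(x)-\varepsilon_0=0\) if and only if \(N(x)=0\), for each fixed \(x\). Quantifying over all \(x\) gives the stated equivalence in both directions.

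The third step proves the sufficient condition. If \(\tilde\varepsilon(\omega)=0\) almost surely, the integrand of \(N\) vanishes almost surely, so \(N(x)=0\) for all \(x\) and \(\varepsilon_1(x)=\varepsilon_0\). For general \(\phi\), I would go back to Lemma~\ref{mylem-represent}. With \(\varepsilon(\omega)=\varepsilon_0\) almost surely, the ratio \(\mathbb{E}[Y^\phi\varepsilon]/\mathbb{E}[Y^\phi]\) collapses to \(\varepsilon_0\) whenever \(0<\mathbb{E}[Y(x)^\phi]<\infty\), and the case \(\phi=0\) holds by definition.

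I expect no real obstacle beyond the integrability needed to differentiate under the expectation. As a remark, the sufficiency direction is in fact also necessary: \(N(x)=\frac{d}{d\log x}D\) vanishing for all \(x\) means \(D\) is constant in \(\log x\). Differentiating again, \(\mathbb{E}[e^{a}x^{\tilde\varepsilon}\tilde\varepsilon^2]=0\), which forces \(\tilde\varepsilon=0\) almost surely. This is consistent with the positive variance term in the expansion above, but it is not needed for the corollary as stated.
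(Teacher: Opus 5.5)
Your proposal is correct and follows the paper's own route: the paper gives no separate proof and simply reads the corollary off the tilted-mean representation $\varepsilon_1(x)-\varepsilon_0=N(x)/D(x)$ with $D(x)>0$, obtained by differentiating under the expectation exactly as you do. Two of your points go beyond the paper and are sound. First, you state the integrability condition explicitly; it is needed because Assumption~1's finite moments do not by themselves guarantee $\mathbb{E}[Y(x)]<\infty$. Second, you observe that the condition actually forces $\tilde\varepsilon=0$ a.s., which is consistent with Proposition~\ref{myprp-noneqhet}. For that second point, strict convexity of $t\mapsto\mathbb{E}[e^{a+t\tilde\varepsilon}]$ gets you there without the second derivative interchange.
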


\begin{myexm}[Closed-form power-mean elasticities under Gaussianity]\label{myexm-gaussian}

This section provides a tractable special case that illustrates how the
entire family of power-mean elasticities varies with \(x\).

Assume \(a(\omega)\equiv a\) is constant and
\(\varepsilon(\omega)\sim \mathcal{N}(\bar\varepsilon,\sigma^2)\). Then,
for any \(\phi\in\mathbb{R}\), \[
M_\phi(x)
=
\exp\!\left(a+\bar\varepsilon\log x+\tfrac{1}{2}\phi\sigma^2(\log x)^2\right),
\] and the associated power-mean elasticity is \[
\varepsilon_\phi(x)
\equiv
\frac{d\log M_\phi(x)}{d\log x}
=
\bar\varepsilon+\phi\sigma^2\log x.
\] Two benchmarks follow immediately: \[
\varepsilon_0=\bar\varepsilon,
\qquad
\varepsilon_1(x)=\bar\varepsilon+\sigma^2\log x.
\] Moreover, the full family admits the affine representation \[
\varepsilon_\phi(x)=(1-\phi)\varepsilon_0+\phi\,\varepsilon_1(x),
\] so that \(\phi\) linearly indexes a continuum between the constant
geometric elasticity and the \(x\)-varying arithmetic elasticity.

\end{myexm}

A natural estimator in this model under Gaussianity is to use PPML to
estimate a regression with \(\log{x}\) and \((\log{x})^2\), and combine
the obtained parameters to get the arithmetic mean elasticity.

The results in Section~\ref{sec-model} have shown that, under
heterogeneous responsiveness, the elasticity of an aggregate outcome is
not a single parameter but depends on how outcomes are aggregated. The
geometric elasticity \(\varepsilon_0\) is invariant to \(x\) and
corresponds to the response of the log-aggregate (a typical-unit
notion). This also means that OLS generally estimates the average
response. By contrast, elasticities relevant for aggregates in levels
(including the arithmetic mean) generally vary with \(x\) because
changes in \(x\) tilt the relative importance of units with different
elasticities. The knife-edge condition in
\hyperref[mycor-nowedge]{Corollary~\ref*{mycor-nowedge}} clarifies when
this reweighting is absent, while the Gaussian example illustrates how
the entire family of power-mean elasticities can be characterized in
closed form.

\section{Discussion}\label{sec-discussion}

A natural question that arises is why economists may care to estimate
one power mean elasticity over another. This section motivates why
different economic decision problems depend on different aggregation
rules, and therefore on different elasticities. We give two examples and
appeal to an axiomatisation of a common decision problem. We show that
power mean elasticities are sufficient statistics for this decision
problem. Finally, we note that different power means correspond to
different risk or inequality aversion for the decision maker facing this
problem.

\begin{myexm}[Demand, profits, and welfare]\label{myexm-demand-monopolist}

Let \(p>0\) denote price. Each type \(\omega\) has individual demand \[
D(p;\omega)=\exp(a(\omega))\,p^{\varepsilon(\omega)},
\] so that \(\varepsilon(\omega)\equiv d\log D(p;\omega)/d\log p\) is
the individual price elasticity. Aggregate demand is \[
D(p)\equiv \mathbb{E}[D(p;\omega)].
\] The elasticity of aggregate demand in levels is the arithmetic-mean
elasticity \[
\varepsilon_1(p)
\equiv
\frac{d\log D(p)}{d\log p}
=
\frac{\mathbb{E}\!\left[D(p;\omega)\,\varepsilon(\omega)\right]}{\mathbb{E}[D(p;\omega)]}.
\] By contrast, the elasticity of the geometric mean of demand is \[
\varepsilon_0\equiv \mathbb{E}[\varepsilon(\omega)].
\]

Consider a monopolist with constant marginal cost \(c\) choosing price
\(p\) to maximize expected their profits. They face decision problem

\[
\max_p \ \pi(p) \equiv (p-c)\,D(p).
\] The first-order condition can be written as \[
0=\frac{d\pi(p)}{dp}
=
D(p) + (p-c)D'(p)
\quad\Longleftrightarrow\quad
\frac{p-c}{p}
=
-\frac{1}{\varepsilon_1(p)}.
\] Thus, the monopolist's optimal markup depends on the elasticity of
\textbf{aggregate demand in levels}, \(\varepsilon_1(p)\), not on
\(\varepsilon_0\).

Intuitively, profits depend on total quantity sold. Types with higher
demand receive more weight in determining how total demand responds to
price, and therefore more weight in the elasticity relevant for markups.
It doesn't matter if the monopolist is risk averse, as long as
\(\pi(p)\) is a strictly increasing, differentiable function of
\(D(p)\), they will still wish to estimate the aggregate demand
elasticity.

\end{myexm}

\begin{myexm}\label{myexm-demand-planner}

Now consider a planner who chooses \(p\) to maximize the expected
utility of consumers, taking production costs into account. Suppose each
type has quasi-linear preferences with indirect utility \[
U(\omega;p)=v(\omega)+\log D(p;\omega) - c\,D(p;\omega),
\] where \(v(\omega)\) collects terms independent of \(p\). Then
expected welfare is \[
W(p)\equiv \mathbb{E}[U(\omega;p)].
\] Differentiating, \[
\frac{dW(p)}{d\log p}
=
\frac{d}{d\log p}\mathbb{E}[\log D(p;\omega)]
\;-\;
c\,\frac{d}{d\log p}\mathbb{E}[D(p;\omega)].
\] The first term depends on the elasticity of the \textbf{geometric
mean of demand}, \[
\frac{d}{d\log p}\mathbb{E}[\log D(p;\omega)]
=
\varepsilon_0,
\] while the second depends on the elasticity of the \textbf{arithmetic
mean}, \[
\frac{d}{d\log p}\mathbb{E}[D(p;\omega)]
=
\varepsilon_1(p)\,\mathbb{E}[D(p;\omega)].
\] As a result, the planner's optimal pricing condition combines two
distinct elasticities: one governing changes in average log demand (a
typical-unit object) and one governing changes in total demand (a level
aggregate). Whenever the planner's objective features log terms the
geometric elasticity naturally enters welfare derivatives.

\end{myexm}

Generalising the two previous examples, we provide the following
axiomatisation of a natural decision problem for which power mean
elasticities are sufficient statistics. Consider a statistical decision
problem
\(\mathcal{D} = \langle \Omega, \mathcal{X}, \mathcal{Y}, \succeq \rangle\),
where a Decision Maker (DM) chooses a policy
\(x \in \mathcal{X} \subseteq \mathbb{R}_{+}\), or distribution of
policies \(F_{x }\in\Delta(\mathcal{X})\), to maximize a functional over
a population of heterogeneous units \(\Omega\) with outcomes
\(\mathcal{Y}\).

The DM believes that the units the inputs and outcomes are denominated
in should not matter to the intensity of response for each type.
\hyperref[axm-scale-invariance]{Axiom~\ref*{axm-scale-invariance}}
implies that types respond to percentage changes in inputs with
percentage changes in outcomes, ruling out level-dependent functional
forms.

\begin{axm}[Individual Scale Invariance]\label{axm-scale-invariance}

For every type \(\omega \in \Omega\), the functional relationship
between inputs \(x\) and outcomes \(Y\) is invariant to scaling.
Specifically, for any scaling factor \(\lambda > 0\),

\[Y(\lambda x; \omega) = g(\lambda; \omega) Y(x; \omega)\]

where \(g\) is a unit-specific scalar function.

\end{axm}

Standard assumptions guarantee that a complete, continuous preorder
\(\succsim\) over outcomes admits a numerical representation
\(M: \mathcal{Y}^\Omega \to \mathbb{R}\). We provide four further axioms
that describe how \(M\) should behave. Firstly,
\hyperref[axm-pareto]{Axiom~\ref*{axm-pareto}} states \(M\) should
increase in Pareto improvements. Secondly,
\hyperref[axm-anonymity]{Axiom~\ref*{axm-anonymity}} requires that the
DM does not care about the identities of any of the types. Thirdly,
\hyperref[axm-decomposability]{Axiom~\ref*{axm-decomposability}} is a
separability condition. \(M\) should be separable across types and the
aggregate value of the total population can be computed solely from the
aggregate values of its disjoint subpopulations and their relative
masses. Finally,
\hyperref[axm-homotheticity]{Axiom~\ref*{axm-homotheticity}} requires
the decision maker themselves not care about the units the outcome is
denominated in.

\begin{axm}[Pareto efficiency]\label{axm-pareto}

The aggregator \(M\) is strictly monotone in \(Y(\cdot; \omega)\) for a
set of \(\omega\) with positive measure.

\end{axm}

\begin{axm}[Anonymity]\label{axm-anonymity}

\(M\) is invariant to measure-preserving permutations of \(\omega\).

\end{axm}

\begin{axm}[Decomposability]\label{axm-decomposability}

For any partition \(\Omega=A \cup B\) there exists a function \(\Psi\)
such that \(M(\Omega)=\Psi(M(A),M(B),\mu(A),\mu(B))\) where \(\mu\)
denotes a measure from \((\Omega,\Sigma)\) to \([0,1]\).

\end{axm}

\begin{axm}[Homotheticity]\label{axm-homotheticity}

The ranking of outcomes is invariant to the unit of measurement of the
outcome variable. For any \(k > 0\),

\[M(k \cdot Y) = k \cdot M(Y).\]

\end{axm}

These axioms uniquely identify the structural model and the objective
function.

\begin{myprp}[Representation]\label{myprp-representation}

A decision problem satisfying
\hyperref[axm-scale-invariance]{Axiom~\ref*{axm-scale-invariance}} -
\hyperref[axm-homotheticity]{Axiom~\ref*{axm-homotheticity}} is
isomorphic to maximizing the power mean of a Random Coefficients Model

\[\max_{F_{x}} \ M_\phi(x) \equiv \left( \mathbb{E}_{\omega,x} \left[ \left( \alpha(\omega) x^{\varepsilon(\omega)} \right)^\phi \right] \right)^{1/\phi}\]

for some aggregation parameter \(\phi \in \mathbb{R}\) and unit-specific
parameters \(\alpha(\omega) > 0, \varepsilon(\omega) \in \mathbb{R}\).

\end{myprp}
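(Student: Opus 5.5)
The argument has two parts, one for each half of the representation. The structural half uses Axiom~\ref{axm-scale-invariance} to recover the random coefficients form of Assumption~\ref{asm-randomcoef}. The aggregation half uses Axioms~\ref{axm-pareto}--\ref{axm-homotheticity} to recover the power mean of Definition~\ref{mydef-powermean}.

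\textbf{Structural half.} Fix $\omega$. Applying Axiom~\ref{axm-scale-invariance} twice gives
\[
g(\lambda\mu;\omega)Y(x;\omega)=Y(\lambda\mu x;\omega)=g(\lambda;\omega)g(\mu;\omega)Y(x;\omega).
\]
Since $Y>0$, this yields the multiplicative Cauchy equation $g(\lambda\mu;\omega)=g(\lambda;\omega)g(\mu;\omega)$.

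I will assume $Y(\cdot;\omega)$ is measurable, or better continuous, in $x$; the statement needs such a regularity condition and I would add it explicitly. Under it, $g(\lambda;\omega)=\lambda^{\varepsilon(\omega)}$ for some $\varepsilon(\omega)\in\mathbb{R}$. Setting $x=1$ then gives
\[
Y(\lambda;\omega)=\alpha(\omega)\lambda^{\varepsilon(\omega)}, \qquad \alpha(\omega)\equiv Y(1;\omega)>0.
\]
This is exactly Assumption~\ref{asm-randomcoef} with $a(\omega)=\log\alpha(\omega)$.

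\textbf{Aggregation half.} This is a classical characterisation of quasi-arithmetic means followed by homogeneity.

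First, Anonymity (Axiom~\ref{axm-anonymity}) makes $M$ a functional of the distribution of outcomes only. Decomposability (Axiom~\ref{axm-decomposability}) says $M$ is determined by sub-aggregates and their masses. By the Kolmogorov--Nagumo / de Finetti characterisation, extended to distributions via Bhattacharya's decomposable-means theorem, $M$ is a quasi-arithmetic mean
\[
M(Y)=f^{-1}\bigl(\mathbb{E}[f(Y)]\bigr)
\]
for a continuous, strictly monotone generator $f$. Strict monotonicity of $f$ comes from Pareto efficiency (Axiom~\ref{axm-pareto}). Continuity of $M$ is inherited from the continuity of the preorder.

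Second, Homotheticity (Axiom~\ref{axm-homotheticity}) says $f(k\,\cdot)$ generates the same mean as $f$. Generators of the same quasi-arithmetic mean are unique up to affine transformation, so $f(ky)=A(k)f(y)+B(k)$ for all $k,y>0$. Solving this Pexider-type functional equation under continuity gives $f(y)=y^\phi$ for $\phi\neq0$ or $f(y)=\log y$, up to affine maps. These are precisely $M_\phi$ and $M_0$ of Definition~\ref{mydef-powermean}.

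\textbf{Combining.} Substituting the structural form into $M_\phi$ gives the objective in the statement. When the DM chooses a distribution $F_x$, the expectation runs over the product of $F$ and $F_x$. Decomposability across the policy mixture is consistent with taking the expectation jointly over $(\omega,x)$, so the problem becomes $\max_{F_x} M_\phi$.

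\textbf{Main obstacle.} I expect the hardest step to be the quasi-arithmetic representation from Axiom~\ref{axm-decomposability} on a general measure space rather than finite samples. My plan is first to establish the representation on simple (finitely supported) outcome distributions. There, Decomposability and Anonymity reduce to the classical finite axioms, and the standard de Finetti argument produces $f$ via $f(y)=$ the mixture weight making $y$ indifferent between two reference outcomes. I then extend to general distributions by continuity, approximating by simple functions and using the finite-moment assumptions to pass the limit inside $\mathbb{E}[f(Y)]$.

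I would state explicitly that continuity of $M$ in the weak topology, together with strict monotonicity, is needed. Without them the axioms admit pathological solutions, and "isomorphic" should be read as "up to a strictly increasing transformation of $M$".
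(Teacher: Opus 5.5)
Your proposal is correct and follows the paper's own three-step route: scale invariance gives $Y(x;\omega)=\alpha(\omega)x^{\varepsilon(\omega)}$, Kolmogorov--Nagumo gives a quasi-arithmetic mean, and homotheticity forces the generator to be $y^{\phi}$ or $\log y$; the one real difference is that you solve the multiplicative Cauchy equation $g(\lambda\mu;\omega)=g(\lambda;\omega)g(\mu;\omega)$ instead of differentiating in $\lambda$ at $\lambda=1$, which needs only measurability of $Y(\cdot;\omega)$ rather than the differentiability of $g$ that the paper tacitly assumes. One caution on your regularity remark: weak continuity cannot be imposed on distributions with unbounded support in $(0,\infty)$, because no power mean is weakly continuous there (mixing in mass $1/n$ at a point where $|y^{\phi}|$, resp.\ $|\log y|$, equals $n$ shifts $\mathbb{E}[Y^{\phi}]$, resp.\ $\mathbb{E}[\log Y]$, by about $1$), so you should require either continuity along dominated truncate-and-discretise sequences, which is all your extension step uses, or weak continuity only on distributions supported in a fixed compact subset of $(0,\infty)$.
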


This representation justifies the potential outcomes random elasticities
model used in the literature, where
\(\log Y(x; \omega) = a(\omega) + \varepsilon(\omega) \log x\), with
\(a(\omega) = \log \alpha(\omega)\). Furthermore, for any decision
problem satisfying our axioms the aggregate outcome level and the
power-mean elasticity are sufficient statistics.

\begin{myprp}[Sufficiency of the Power-Mean Elasticity]\label{myprp-sufficiency}

For any decision problem satisfying
\hyperref[axm-scale-invariance]{Axiom~\ref*{axm-scale-invariance}} -
\hyperref[axm-homotheticity]{Axiom~\ref*{axm-homotheticity}}, the
optimal policy \(F_{x}^{*}\) is determined solely by the aggregate
outcome level \(M_\phi(x)\) and the power-mean elasticity

\[\varepsilon_\phi(x) \equiv \frac{d \log M_\phi(x)}{d \log x}.\]

\end{myprp}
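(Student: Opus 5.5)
We would prove this by reducing to Proposition~\ref{myprp-representation} and then reading off the first-order condition of the resulting program. By Proposition~\ref{myprp-representation}, any decision problem satisfying Axioms~\ref{axm-scale-invariance}--\ref{axm-homotheticity} is isomorphic to maximizing
\[
M_\phi(F_x)=\Bigl(\mathbb{E}_{\omega,x}\bigl[(\alpha(\omega)x^{\varepsilon(\omega)})^\phi\bigr]\Bigr)^{1/\phi}
\]
over distributions \(F_x\). The case \(\phi=0\) is read as \(\exp(\mathbb{E}_{\omega,x}[\log Y])\). Since \(t\mapsto t^{1/\phi}\) and \(\exp\) are strictly monotone, the maximizers are those of
\[
\mathcal{G}(F_x)=\int G_\phi(x)\,dF_x(x) \quad\text{if } \phi>0,
\]
where \(G_\phi(x)=\mathbb{E}_\omega[Y(x;\omega)^\phi]=M_\phi(x)^\phi\) is as in Proposition~\ref{myprp-wedgedecomp}. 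For \(\phi<0\) we minimize \(\mathcal{G}\), and for \(\phi=0\) we maximize \(\int \log M_0(x)\,dF_x\). For this step we take \(\omega\) independent of the policy draw \(x\), so the expectation factors as an iterated integral.

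In every case the objective is linear in \(F_x\) with integrand
\[
h_\phi(x)=\operatorname{sgn}(\phi)\,M_\phi(x)^\phi \quad (\phi\neq0), \qquad h_0(x)=\log M_0(x).
\]
Hence \(F_x^*\) is optimal if and only if it is supported on \(\arg\max_{x\in\mathcal{X}} h_\phi(x)\). This set depends only on the function \(x\mapsto M_\phi(x)\), which establishes the claim that the aggregate level determines the optimum.

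To bring in the elasticity, we write \(M_\phi\) through its log-derivative:
\[
\log M_\phi(x)=\log M_\phi(x_0)+\int_{x_0}^{x}\varepsilon_\phi(s)\,\frac{ds}{s}.
\]
This uses Definition~\ref{mydef-powermeanelasticity} together with differentiability under the integral sign, as in Lemma~\ref{mylem-represent}. The identity shows that the level at a single point together with the elasticity schedule pins down \(M_\phi\) everywhere. For interior optima, the first-order condition \(\frac{d}{d\log x}h_\phi(x)=0\) becomes
\[
\phi\,M_\phi(x)^\phi\,\varepsilon_\phi(x)=0 \quad\text{for }\phi\neq0,
\]
that is, \(\varepsilon_\phi(x^*)=0\). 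Boundary optima are characterized by sign conditions on \(\varepsilon_\phi\). Any extra policy terms, such as the cost terms in Examples~\ref{myexm-demand-monopolist} and~\ref{myexm-demand-planner}, enter only through \(M_\phi\) and \(\varepsilon_\phi\) by the chain rule, which gives the general version of the statement.

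The main obstacle is the reduction to a linear functional in \(F_x\). It needs \(x\) to be independent of \(\omega\), or at least the objective to be mixture-linear in \(F_x\) after the monotone transform. We would justify this from Axiom~\ref{axm-decomposability}: apply it to partitions by policy realizations, which forces the aggregator to be a quasi-arithmetic mean of the component \(M_\phi(x)\) values with weights \(\mu\). We also need an integrability condition on \(G_\phi\) so that differentiation under the expectation is valid. Without it, sufficiency can only be claimed for the argmax characterization and not for the first-order condition.
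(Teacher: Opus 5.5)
Your argument is correct, but it follows a different and more complete route than the paper's proof. The paper does not use the structure delivered by Proposition~\ref{myprp-representation}. It posits a generic objective \(\max_x \mathcal{V}(x, M_\phi(x))\) over a scalar \(x\), writes the first-order condition, and substitutes \(dM_\phi/dx = M_\phi(x)\varepsilon_\phi(x)/x\). It then notes that heterogeneity enters only through \(\varepsilon_\phi(x)\). That is a local, necessary-condition notion of sufficiency, and it never engages with the choice of a distribution \(F_x\).

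You instead take the representation at face value. You use independence of \(\omega\) and the policy draw to make the problem a monotone transform of a functional linear in \(F_x\). You conclude that \(F_x^*\) is supported on \(\arg\max_x M_\phi(x)\), so randomization has no value. The integral identity for \(\log M_\phi\) then shows that one level plus the elasticity schedule pins down that set.

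This global formulation is genuinely stronger. Because \(G_\phi(x)=\mathbb{E}[e^{\phi a(\omega)+\phi\varepsilon(\omega)\log x}]\) is convex in \(\log x\) for every \(\phi\), an interior zero of \(\varepsilon_\phi\) maximizes \(M_\phi\) when \(\phi<0\) but is a global minimizer when \(\phi>0\). So the first-order condition alone, which is all the paper's argument uses, cannot characterize the optimum of the pure power-mean objective; your level comparison can.

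What the paper's formulation buys is direct coverage of objectives that depend on \(x\) explicitly, such as the cost terms in Examples~\ref{myexm-demand-monopolist} and~\ref{myexm-demand-planner}. You treat that case only in a closing sentence that is essentially the paper's chain-rule argument, and there your linearity-in-\(F_x\) reduction no longer applies. Finally, the independence of \(\omega\) and \(x\) is better stated as a primitive (the policy cannot condition on type) than derived from Axiom~\ref{axm-decomposability}, which partitions types rather than policy draws.
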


The parameter \(\phi\) captures the curvature of the aggregation. As is
commonly known, this parameter nests two distinct, but incompatible,
economic interpretations. We may interpret the optimization problem
\(\max_x M_\phi(x)\) as equivalent to:

\begin{enumerate}
\def\labelenumi{\arabic{enumi}.}
\tightlist
\item
  a utilitarian planner aggregating constant relative risk aversion
  (CRRA) utility functions with coefficient of relative risk aversion
  \(\gamma = 1-\phi\).
\item
  an Atkinson social planner with inequality aversion parameter
  \(\rho = 1-\phi\) aggregating risk-neutral agents.
\end{enumerate}

To demonstrate, consider a utilitarian social planner who evaluates
policy \(t\) according to the social welfare function

\[ W(t) = \int u(y(t, \omega)) \  dF(\omega) \tag{3}\]

where \(y(t, \omega)\) denotes the outcome realized by individual
\(\omega\) under policy \(t\), and \(F(\omega)\) represents the
distribution of individual characteristics in the population. Suppose
\(u(\cdot)\) takes the form

\[ u(z) = \begin{cases} \dfrac{z^{1-\rho}}{1-\rho} & \text{if } \rho \neq 1 \\ \log(z) & \text{if } \rho = 1 \end{cases} \]

where \(\rho \geq 0\) parameterizes the degree of inequality aversion.
Then

\[
W(t)=\int \frac{y(t,\omega)^{1-\rho}}{1-\rho} dF(\omega).
\]

Alternatively, consider an inequality averse social planner who
evaluates policy \(t\) according to the social welfare function

\[
W(t)=\int \frac{v(y(t,\omega))^{1-\rho}}{1-\rho} dF(\omega).\tag{4}
\]

Equations \((3)\) and \((4)\) are only equal if \(v\) is the identity
function. The choice among power-mean elasticities therefore either
encodes an implicit choice among social welfare functions or an
assumption about the utility of the individuals being aggregated over.
The monopolist's problem depends on the response of \emph{total}
quantity sold and therefore on \(\varepsilon_1(p)\). A planner's welfare
calculus depends on how price changes affect \emph{average log outcomes}
(or other concave aggregates), and therefore on \(\varepsilon_0\) (or,
more generally, \(\varepsilon_\phi(p)\) for \(\phi\neq 1\)). When
elasticities are heterogeneous, these objects differ, so the elasticity
relevant for profit maximization need not coincide with the elasticity
relevant for welfare analysis. Yet both are represented in our
framework.

Different power-mean elasticities may be of interest, even for the same
problem. We characterized the wedge between these elasticities. This
wedge can be identified from observables under assumptions we discuss in
the next section. Once identified, our approach to estimation follows
from \citet{manningLoggedDependentVariable1998}, who estimates these
wedges in a specific setting with binary treatment variables and no
controls.

\section{Identification}\label{sec-identification}

We have shown power mean elasticities are sufficient statistics for a
common class of decision problems. When can we estimate them? This is a
question of identification. We outline the statistical model and discuss
two common identification strategies - randomised controlled trials and
instrumental variables.

An alternative way to write the potential outcome model is \[
\log{y_{i}}(x) = \alpha_{0} + \beta_{0} \log{x} + u_{i}(x)
\] with
\(u_{i}(x) =\alpha_{i} - \alpha_{0} + (\beta_{i} - \beta_{0}) \cdot \log{x}\)
such that \(\mathbb{E}[u_{i}(x)] = 0 \  \forall x\).

Suppose, the \(X\) we observe in the population is randomly distributed
and mean independent of the random coefficients. We observe \(Y\) such
that \(\log Y = \log y(X)\). Then, we can write our population model as
\[
\log Y_{i} = \alpha_{0} + \beta_{0} \log X_{i} + u_{i}, \quad \mathbb{E}[u_{i}|X_{i}] = 0.
\]

Under this model \(\beta_{0}\) is identified, OLS is an unbiased
estimator of it, and \(\beta_{0}=\epsilon_{0}\) - the average elasticity
of the population.

However, if we try to estimate a different elasticity, for example the
causal arithmetic mean elasticity \(\epsilon_{1}\), we will not be able
to do so without the stronger assumption that
\(X \perp \!\! \perp (\alpha_{i},\beta_{i})\). Independence here
requires both no selection on level and no selection on gains. We may
believe this under random assignment or similar assumptions.

Even in a randomized controlled trial with binary treatment,
retransformation bias still warrants attention as

\[\frac{\mathbb{E}[Y(1; \omega) - Y(0; \omega)]}{\mathbb{E}[Y(0; \omega)]} \neq \exp\big(\mathbb{E}[\log Y(1; \omega) - \log Y(0; \omega)]\big) - 1.\]

Under the heterogeneous semi-elasticity model where
\(\log Y(x;\omega) = a(\omega) + \varepsilon(\omega) x\) for
\(x \in \{0,1\}\), OLS on \(\log Y_i\) gives a coefficient estimate that
converges to the average treatment effect in logs,
\(\mathbb{E}[\varepsilon(\omega)] \equiv \varepsilon_0\). Hence,
parameters obtained from log-linear OLS do not directly correspond to
the percentage difference in average outcomes between the treated and
untreated.

However, in the binary treatment case, it is always possible to directly
do OLS in levels, or PPML to get the percentage change in the arithmetic
mean. In fact, the \citet{manningLoggedDependentVariable1998} estimator
(which is equivalent to ours with a binary variable) gives exactly the
same point estimates as PPML with binary treatments (see
Section~\ref{sec-manningppml}).

This is because with a single binary treatment the exponential and
log-linear models are compatible. The log-linear statistical model
decomposes as

\[\log Y_{i} = \alpha_{0} + \varepsilon_{0} x_{i} + u_{i}(x_i), \quad \mathbb{E}[u_{i}(x_i)|x_{i}] = 0\]

where \(\alpha_0 = \mathbb{E}[a(\omega)]\),
\(\varepsilon_0 = \mathbb{E}[\varepsilon(\omega)]\), and
\(u_i(x) = (a(\omega) - \alpha_0) + (\varepsilon(\omega) - \varepsilon_0)x\).

The corresponding exponential model is

\[Y_{i} = \exp(\gamma_{0} + \gamma_{1} x_{i}) \eta_{i}, \quad \mathbb{E}[\eta_{i}|x_{i}] = 1\]

with the parameters mapping exactly as

\[\begin{aligned}
\gamma_{0} &= \alpha_{0} + \log\mathbb{E}[\exp(u_i(0))|x_i=0] \\
\gamma_{1} &= \varepsilon_{0} + \log \mathbb{E}[\exp(u_i(1))|x_i=1] - \log \mathbb{E}[\exp(u_i(0))|x_i=0]\\
\eta_{i} &= \frac{\exp(u_{i}(x_i))}{\mathbb{E}[\exp(u_i(x_i))|x_{i}]}.
\end{aligned}\]

This equivalence does not hold if there is a randomly assigned
continuous treatment \(x\) or other regressors.

When random assignment is violated we must rely on instrumental
variables approaches. However, instrumental variables approaches face
challenges in this setting. Instrumental variables approaches cannot,
under standard assumptions, recover the arithmetic mean elasticity from
log-linear structural models with heterogeneous individual effects. We
show a sequence of impossibility results of increasing generality. We
then show that assuming a triangular first-stage structure restores
identification.

Let each individual have potential outcome
\[ \log Y_i(x) = a_i + \varepsilon_i \log x \] where \(a_i\) is an
individual intercept and \(\varepsilon_i\) is an individual
semi-elasticity, with joint distribution
\((a_i, \varepsilon_i) \sim F\). Write
\(\beta_0 = \mathbb{E}[\varepsilon_i]\) and
\(\alpha_0 = \mathbb{E}[a_i]\). The geometric mean semi-elasticity is
\(\mathbb{E}[\varepsilon_i] = \beta_0\), constant in \(x\). The
arithmetic mean semi-elasticity is
\[ \theta(x) = \frac{\partial \log \mathbb{E}[Y_i(x)]}{\partial x} = \frac{\mathbb{E}[\varepsilon_i  e^{a_i + \varepsilon_i \log x}]}{\mathbb{E}[e^{a_i + \varepsilon_i \log x}]}, \]
a level-weighted average of individual semi-elasticities that varies
with \(x\).

We observe \((Y_i, X_i, Z_i)\) where
\(\log Y_i = a_i + \varepsilon_i X_i\) and \(Z_i\) is an instrument for
\(X_i\). The statistical model decomposes as
\[ \log Y_i = \alpha_0 + \beta_0 \log X_i + u_i(X_i), \qquad u_i(x) = (a_i - \alpha_0) + (\varepsilon_i - \beta_0) \log x, \]
with \(\mathbb{E}[u_i(x)] = 0\) for all \(x\).

First, let's consider standard IV assumptions. Let \(Z\) satisfy mean
independence
\(\mathbb{E}[\varepsilon_i \mid Z] = \mathbb{E}[\varepsilon_i]\) and
\(\mathbb{E}[a_i \mid Z] = \mathbb{E}[a_i]\), together with relevance
\(\text{Cov}(X,Z) \neq 0\). Under these conditions, IV consistently
estimates \(\beta_0 = \mathbb{E}[\varepsilon_i]\), the geometric mean
elasticity.

\begin{myprp}[Mean independence does not identify \(\theta(x)\)]\label{myprp-iv-impossibility}

As is known, under the given conditions, the arithmetic mean elasticity
\(\theta(x)\) is not identified by IV, even when IV consistently
estimates \(\mathbb{E}[\varepsilon_i]\).

\end{myprp}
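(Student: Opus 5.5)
The plan is to prove non-identification by exhibiting two data-generating processes that are observationally equivalent but have different $\theta(x)$. Concretely, I will construct two joint distributions of $(a_i,\varepsilon_i,X_i,Z_i)$. Both will satisfy mean independence $\mathbb{E}[\varepsilon_i\mid Z]=\mathbb{E}[\varepsilon_i]$, $\mathbb{E}[a_i\mid Z]=\mathbb{E}[a_i]$ and relevance. Both will induce the same joint law of the observables $(\log Y_i,X_i,Z_i)$, and hence the same IV probability limit $\beta_0$. Yet the arithmetic mean elasticity
\[
\theta(x)=\frac{\mathbb{E}[\varepsilon_i e^{a_i+\varepsilon_i\log x}]}{\mathbb{E}[e^{a_i+\varepsilon_i\log x}]}
\]
will differ at some $x$. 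This suffices, since $\theta(x)$ is then not a functional of the observable distribution.

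The construction exploits the fact that mean independence restricts only first moments, while $\theta(x)$ depends on the whole joint law of $(a_i,\varepsilon_i)$ through an exponential tilt. I will take $Z$ binary and $X$ a deterministic function of $Z$, say $X=x_0$ if $Z=0$ and $X=x_1\neq x_0$ if $Z=1$, with both $x_0,x_1>0$ and $\log x_1\neq 0$ (for example $x_0=1$). Relevance is then immediate.

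In DGP A, $(a_i,\varepsilon_i)$ is independent of $Z$. In DGP B, I keep the conditional law of $\log Y_i=a_i+\varepsilon_i\log X_i$ given each $Z=z$ identical to A, which pins down the observables. The unobserved law of $(a_i,\varepsilon_i)$ must still satisfy the two mean conditions, and I will choose it so that the law of $(a,\varepsilon)$ in the population has a different joint structure.

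The simplest way to arrange this: in A, let $\varepsilon_i\equiv\beta_0$ be degenerate and $a_i\sim\mathcal{N}(\alpha_0,\sigma^2)$, so $\theta(x)=\beta_0$. In B, let $\varepsilon_i=\beta_0+\eta_i$ with $\eta_i$ mean zero and nondegenerate. Then $\theta(x)\neq\beta_0$ for a set of $x$ of positive measure by Proposition~\ref{myprp-noneqhet}, since $\varepsilon$ is nondegenerate.

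The main obstacle is matching the observables in B. Within $Z=z$, $\log Y=a+\varepsilon\log x_z$ must be $\mathcal{N}(\alpha_0+\beta_0\log x_z,\sigma^2)$, while $(a,\varepsilon)$ keeps its means and its law does not depend on $z$. With $x_0=1$, the $Z=0$ cell only sees $a$, so $a\sim\mathcal{N}(\alpha_0,\sigma^2)$ as in A.

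For the $Z=1$ cell, I will pick $\eta$ so that $a+\eta\log x_1\stackrel{d}{=}a$. One option is $\eta=-2(a-\alpha_0)/\log x_1$, which reflects $a$ about its mean and preserves its normal law. This gives $\mathbb{E}[\eta]=0$ with $\eta$ nondegenerate. Since $(a,\varepsilon)$ is independent of $Z$, both mean-independence conditions hold trivially.

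Finally, I will verify directly that $\theta(x)\neq\beta_0$ for generic $x$. Writing $c=-2/\log x_1$, we have
\[
\theta(x)-\beta_0=\frac{\mathbb{E}[\eta e^{a(1+c\log x)}]}{\mathbb{E}[e^{a(1+c\log x)}]}=c\sigma^2(1+c\log x)
\]
by the Gaussian tilt. This is nonzero except at a single $x$. It also illustrates the sorting term in Proposition~\ref{myprp-wedgedecomp}, and completes the argument.
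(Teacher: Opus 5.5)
Your counterexample is correct, but it places the source of non-identification somewhere other than the paper does.

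\textbf{What the two arguments share.} Both pair a homogeneous-slope Gaussian model with a heterogeneous one under a binary treatment.

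\textbf{The paper's mechanism.} Its heterogeneous model (independent $a_i\sim\mathcal N(0,1)$, $\varepsilon_i\sim\mathcal N(\beta_0,\sigma^2)$) produces heteroskedasticity across treatment cells. The homogeneous model reproduces this through \emph{selection on levels}: the variance of $a_i$ depends on treatment status while $\mathbb E[a_i\mid Z]=0$ is kept. This pins the failure squarely on mean independence restricting only first moments, and fits the paper's ``selection versus heterogeneity'' story. However, the paper only matches $\log Y\mid X$ and leaves the joint structure of $(Z,X,a_i)$ implicit.

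\textbf{Your mechanism.} You do the reverse. The reflection $\varepsilon-\beta_0=-2(a-\alpha_0)/\log x_1$ makes the heterogeneous model homoskedastic across $Z$-cells, so no selection is needed. Hence $(a,\varepsilon)\perp Z$ holds in both DGPs, and with a deterministic first stage the full law of $(\log Y,X,Z)$ visibly coincides.

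\textbf{What each route buys.} You get a formally stronger conclusion: non-identification even under full independence with a binary instrument, a discrete counterpart to Proposition~\ref{myprp-iv-exact-nonid}. The price is that the failure comes from the two-point support of $\log X$, since only two projections of the law of $(a,\varepsilon)$ are observed. It does not come from the weakness of mean independence as such. Indeed, your two DGPs agree on $\mathbb E[Y(x_1)]/\mathbb E[Y(1)]$, because $\int_0^{\log x_1}c\sigma^2(1+cs)\,ds=0$. So only the derivative is lost, not the discrete arithmetic change.

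\textbf{Minor points.} Your tilt computation $\theta_B(x)-\beta_0=c\sigma^2(1+c\log x)$ is right, and it makes the appeal to Proposition~\ref{myprp-noneqhet} unnecessary. If perfectly correlated coefficients look artificial, any bivariate normal with $\operatorname{Cov}(a,\varepsilon)=-\tfrac12\operatorname{Var}(\varepsilon)\log x_1$ and $\operatorname{Var}(\varepsilon)(\log x_1)^2<4\sigma^2$ gives the same observables. It then yields $\theta_B(x)-\beta_0=\operatorname{Var}(\varepsilon)\bigl(\log x-\tfrac12\log x_1\bigr)$.
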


IV assumptions constrain only first moments of \((a_i, \varepsilon_i)\)
conditional on \(Z\). The arithmetic mean elasticity depends on
\(\mathbb{E}[\varepsilon_i  e^{a_i + \varepsilon_i \log x}]\), which
involves the entire joint distribution of \((a_i, \varepsilon_i)\). Our
proof provides two observationally equivalent data-generating processes
that share the same \(\beta_0\) but yield different values of
\(\theta(x)\). In one, the observed heteroskedasticity of residuals
arises from selection (high-variance individuals choose certain
treatment values). In the other, it arises from genuine heterogeneity in
\(\varepsilon_i\) across the population. Since IV restricts only
conditional means, it cannot distinguish these mechanisms.

Strengthening IV to a non-parametric specification, identifying \(f(x)\)
via completeness, does not resolve the problem since \(\theta(x)\)
depends on the joint distribution of \((a_{i},\epsilon_{i})\) not just
conditional means.

One might hope that continuous instruments with full independence might
provide identification . The following result shows that they cannot.
For any model satisfying mild regularity conditions, there exists an
observationally equivalent model with a different arithmetic mean
semi-elasticity.

\begin{myprp}[Full independence with continuous instruments does not identify
\(\theta(x)\)]\label{myprp-iv-exact-nonid}

Let \((X, Z)\) be continuously distributed and assume full independence
\(Z \perp\!\!\perp (a_i, \varepsilon_i)\). Let \((f_0, k_0)\) be any
model with \(f_0\) continuous and strictly positive, \(k_0\) continuous
and strictly positive on \(\text{supp}(X)\) with
\(\inf_z k_0(x \mid a, \varepsilon, z) > 0\) for each
\((a, \varepsilon, x)\), and
\(\mathbb{E}[e^{a + \varepsilon \log x}] < \infty\) in a neighbourhood
of some \(x_0 \neq 0\). There exists an observationally equivalent model
\((f_1, k_1)\) with \(\theta_1(x_0) \neq \theta_0(x_0)\).

\end{myprp}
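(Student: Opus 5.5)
The plan is to exploit the fact that, conditional on \(Z=z\) and \(X=x\), the data reveal only the distribution of the scalar \(\log Y=a+\varepsilon\log x\). In other words, for each \((x,z)\) we observe only the projections of the joint law of \((a,\varepsilon)\) onto the single direction \((1,\log x)\).

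Write \(g_j(a,\varepsilon,x\mid z)=f_j(a,\varepsilon)\,k_j(x\mid a,\varepsilon,z)\). Two models are observationally equivalent iff, for every \((x,z)\) and every \(s\), the line integrals of \(g_0\) and \(g_1\) over \(\{a+\varepsilon\log x=s\}\) coincide. Full independence requires \(f_j(a,\varepsilon)=\int g_j\,dx\) to be free of \(z\).

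I would therefore perturb the joint object rather than \(f\) alone:
\[
g_1=g_0+\eta\,\psi(x)\,q(a,\varepsilon),
\]
with \(\eta>0\) small. Then \(f_1=f_0+\eta\big(\int\psi\big)q\), which is automatically independent of \(z\), and
\[
k_1=\frac{g_1}{f_1},
\]
which integrates to one in \(x\) by construction. Here \(\psi\ge 0\) is a continuous bump supported on a small interval \(I\) of \(x\)-values with \(\log x_0\notin\log I\), and \(q\) is a real function with \(\int q=0\).

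Observational equivalence then reduces to a single requirement. The projections of \(q\) onto the direction \((1,t)\) must vanish for every \(t\in\log I\). By the Fourier slice theorem this holds when \(\hat q(\xi_1,\xi_2)=0\) on the double cone \(\{\xi_2=t\xi_1:\ t\in\log I\}\). I would construct \(q\) as the inverse Fourier transform of a smooth, symmetric, compactly supported function \(\chi\) whose support lies in the complementary cone. I would also take \(\chi\) to be nonzero near the direction \(t_0=\log x_0\), and choose it to vanish at the origin so that \(\int q=0\).

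Next I would show that \(\theta\) moves. Set
\[
P(t)=\int e^{a+\varepsilon t}q\,da\,d\varepsilon,
\]
so that \(P'(t)=\int\varepsilon e^{a+\varepsilon t}q\). Then
\[
\theta_1(x_0)=\frac{N_0+\eta c\,P'(t_0)}{D_0+\eta c\,P(t_0)},
\]
where \(N_0,D_0\) are the numerator and denominator of \(\theta_0(x_0)\) and \(c=\int\psi\). Hence
\[
\frac{d\theta_1}{d\eta}\Big|_{\eta=0}\propto P'(t_0)-\theta_0(x_0)P(t_0),
\]
and it suffices to choose \(\chi\) so that this quantity is nonzero. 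Because the cone constraint only restricts directions in \(\log I\), and \(t_0\notin\log I\), there is freedom to rescale or modulate \(\chi\) near direction \(t_0\) to make it nonzero. Since the map \(q\mapsto P'(t_0)-\theta_0(x_0)P(t_0)\) is linear, if one choice gives zero, adding a small admissible modification breaks the equality.

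The main obstacle is admissibility: \(f_1\ge0\) and \(k_1\ge0\), together with the exponential moments near \(x_0\) needed for \(P\) and \(P'\) to exist. A band-limited \(q\) is Schwartz but not compactly supported, whereas \(f_0\) may decay faster. A compactly supported \(q\) cannot have Fourier transform vanishing on an open cone, since \(\hat q\) would then be analytic.

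I would first try to replace \(q\) by a \(q\) whose projections vanish for \(t\in\log I\) and which satisfies \(|q|\le C f_0\). One route is a weighted construction \(q=f_0\cdot\tilde q\), with \(\tilde q\) chosen by a Hahn–Banach or duality argument in the weighted space \(L^1(f_0)\). There the vanishing-projection constraints form a proper closed subspace on which the linear functional \(q\mapsto P'(t_0)-\theta_0(x_0)P(t_0)\) is nonzero. This is where the hypotheses that \(f_0\) is continuous and strictly positive and that \(\mathbb{E}[e^{a+\varepsilon\log x}]<\infty\) near \(x_0\) enter.

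With \(|q|\le Cf_0\) in hand, positivity follows for small \(\eta\). For \(k_1\), the bound
\[
g_0\ge f_0\inf_z k_0>0
\]
on \(\operatorname{supp}\psi\) gives \(g_1\ge0\) once \(\eta\,\|\psi\|_\infty C<\inf_z k_0\). Continuity of \(f_1\) and \(k_1\) follows from continuity of \(\psi\) and \(q\).
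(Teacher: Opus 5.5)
There is a genuine gap, and it is structural rather than technical: the separable perturbation $g_1-g_0=\eta\,\psi(x)\,q(a,\varepsilon)$ cannot work. Write $\mathcal{R}_tq(s)=\int q(s-\varepsilon t,\varepsilon)\,d\varepsilon$ for the projection along $\{a+\varepsilon t=s\}$.

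\begin{itemize}
\item \emph{The cone condition.} Because the same $q$ must be invisible at every $x$ with $\psi(x)>0$, you need $\mathcal{R}_tq=0$ for all $t$ in an open interval. Equivalently, $\hat q=0$ on an open double cone.
\item \emph{The moment condition is forced.} You also need $\int|q|\,e^{a+\varepsilon t}\,da\,d\varepsilon<\infty$ for $t$ near $t_0=\log x_0$. This is not just convenient. Since $f_0,f_1\ge 0$, you have $|q|\le (f_0+f_1)/(\eta c)$, and $\theta_1(x_0)=M_1'/M_1$ requires $M_1$ to be finite near $x_0$.
\item \emph{Together they force $q\equiv 0$.} Your own remark about compactly supported $q$ extends to every admissible $q$. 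With those moments, $F(\xi+i\zeta)=\int q(w)\,e^{\zeta\cdot w}e^{-i\xi\cdot w}\,dw$ is holomorphic on the tube $\mathbb{R}^2+i\Gamma$, where $\Gamma=\{s(1,t):0<s<1,\ |t-t_0|<r\}$ is open. It is continuous up to $\zeta=0$, where it equals $\hat q$. Take $e\in\Gamma$ and $\xi$ in the cone. Then $\lambda\mapsto F(\xi+\lambda e)$ is holomorphic on a strip and vanishes on a real interval of its edge, so by Schwarz reflection it vanishes identically. Hence $F$ vanishes on an open subset of the tube, so everywhere, so $\hat q\equiv 0$.
\item \emph{A direct check with your own $P$.} If the moments also held on the slopes in question, your $P(t)=\int e^{s}\,\mathcal{R}_tq(s)\,ds$ would vanish there. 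Being real-analytic, it would then give $P'(t_0)-\theta_0(x_0)P(t_0)=0$.
\end{itemize}

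So the subspace in your Hahn--Banach step is $\{0\}$. The band-limited $q$ escapes only because it has no exponential moments, which leaves $\theta_1(x_0)$ undefined, and taking $t_0$ outside the slope interval does not help. Separately, because $q$ is not compactly supported, your positivity bound for $k_1$ needs $\inf_z k_0$ bounded below uniformly over all $(a,\varepsilon)\in\mathbb{R}^2$, which the hypotheses do not provide.

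The missing idea, which the paper's proof uses, is to let the $(a,\varepsilon)$-profile of the perturbation depend on $x$. Then each slice only has to lie in the kernel of a single projection, and single-angle kernels contain many compactly supported functions, namely derivatives along the projection lines. The paper works with slope $x$; in your parametrisation replace $x$ by $\log x$. It sets $n_x=p(x)\,(-x\partial_a+\partial_\varepsilon)\Psi_0$ with $\Psi_0=\rho(a)\Phi(\varepsilon)$, $\Phi'=\varphi$, $\int\varphi=0$, and $p$ a signed compactly supported bump with $\int p=1$ and $\int x\,p(x)\,dx=0$.

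\begin{itemize}
\item $\mathcal{R}_xn_x=0$ for every $x$, by the fundamental theorem of calculus along each line.
\item $\int n_x\,dx=\rho\varphi=:\delta f$, which is compactly supported. It is chosen (from four bumps) so that $\int\varepsilon e^{\varepsilon x_0}\varphi/\int e^{\varepsilon x_0}\varphi\neq\theta_0(x_0)$.
\end{itemize}

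Compact support makes every exponential moment trivially finite and confines the positivity check for $f_1$ and $k_1=h_1/f_1$ to a compact set. This sidesteps both obstacles above.
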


The proof exploits the geometry of the identification problem. Write
\(h(a, \varepsilon, x, z) = k(x \mid a, \varepsilon, z) f(a, \varepsilon)\)
for the joint density of \((a_i, \varepsilon_i, X_i)\) conditional on
\(Z = z\). The structural equation \(\log Y = a + \varepsilon \log x\)
implies that for each fixed \((x, z)\), the observable density
\(f(\ell, x \mid z)\) is the Radon projection of
\(h(\cdot, \cdot, x, z)\) along lines
\({(a, \varepsilon) : a + \varepsilon \log x = \ell}\). Without
triangularity, \(h(\cdot, \cdot, x, z)\) is a different unknown function
for each \((x, z)\). We observe one projection of each, rather than many
projections of one. The construction produces a compactly supported
perturbation of \(f\) that lies in the null space of every relevant
Radon projection, thereby changing the joint distribution of
\((a_i, \varepsilon_i)\) without affecting any observable.

With full independence of discrete instruments for discrete treatments,
an arithmetic mean elasticity for compliers is identified via standard
LATE arguments as per \citet{imbensIdentificationEstimationLocal1994}.

The preceding results establish that IV methods cannot identify
\(\theta(x)\) without further structural restrictions. A triangular
first-stage assumption resolves this by collapsing the selection
mechanism to a single dimension.

\begin{myprp}[Control function assumptions identify \(\theta(x)\)]\label{myprp-control-function-identification}

Assume a triangular first stage \(X = g(Z, V)\) with \(g\) strictly
monotone in \(V\), \(V \perp\!\!\perp Z\), and
\((a_i, \varepsilon_i) \perp\!\!\perp Z \mid V\). Suppose that for each
\(v\) in the support of \(V\), the map \(z \mapsto g(z, v)\) generates a
set of treatment values \({g(z,v) : z \in \text{supp}(Z)}\) with
nonempty interior. Then the conditional distribution
\(f(a, \varepsilon \mid V = v)\) is identified for each \(v\), and the
arithmetic mean elasticity \(\theta(x)\) is identified.

\end{myprp}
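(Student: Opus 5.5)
The plan is the standard control function argument in three steps: recover \(V\), identify \(f(a,\varepsilon\mid V=v)\) by Radon/Cramér–Wold inversion, then integrate over \(V\).

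\textbf{Step 1: recover \(V\).} Since \(g\) is strictly monotone in \(V\) and \(V\perp\!\!\perp Z\), normalise \(V\sim U[0,1]\) without loss. Then \(V = F_{X\mid Z}(X\mid Z)\), which is identified from the joint law of \((X,Z)\). So we may treat \((Y,X,Z,V)\) as observed.

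\textbf{Step 2: identify \(f(a,\varepsilon\mid V=v)\).} Conditional on \(V=v\) and \(Z=z\), the treatment is fixed at \(x=g(z,v)\). By \((a,\varepsilon)\perp\!\!\perp Z\mid V\), the conditional law of \(\log Y=a+\varepsilon\log x\) given \((V=v,Z=z)\) is the law of the linear index \(a+\varepsilon\log x\) under the single distribution \(f(a,\varepsilon\mid v)\). This is exactly the point stressed after Proposition~\ref{myprp-iv-exact-nonid}: we now observe many projections of one function rather than one projection of many.

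As \(z\) ranges over \(\operatorname{supp}(Z)\), \(t=\log g(z,v)\) ranges over a set containing an open interval \(I_v\). For each \(t\in I_v\) the characteristic function
\[
\psi_v(s,t)\equiv\mathbb{E}\big[e^{is(a+\varepsilon t)}\mid V=v\big]
\]
is identified for all \(s\in\mathbb{R}\). Write \(\Phi_v(u_1,u_2)=\mathbb{E}[e^{i(u_1a+u_2\varepsilon)}\mid V=v]\). Then \(\psi_v(s,t)=\Phi_v(s,st)\), so \(\Phi_v\) is identified on the cone \(\{(s,st):s\in\mathbb{R},\,t\in I_v\}\). This cone has nonempty interior in \(\mathbb{R}^2\).

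\textbf{Main obstacle: extending \(\Phi_v\) from the cone to all of \(\mathbb{R}^2\).} A characteristic function known only on an open set does not pin down the distribution in general. I would add a regularity condition, which is already implicit in the finiteness assumptions needed for \(\theta(x)\) to exist:
\[
\mathbb{E}\big[e^{c(|a|+|\varepsilon|)}\mid V=v\big]<\infty \quad\text{for some } c>0 .
\]
Under this condition \(\Phi_v\) extends analytically to a strip around \(\mathbb{R}^2\). Since it agrees with an identified function on an open set, the identity theorem for real-analytic functions in several variables gives \(\Phi_v\) everywhere, and Fourier inversion gives \(f(a,\varepsilon\mid v)\).

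An equivalent route works through moments. From the conditional moments \(\mathbb{E}[(a+\varepsilon t)^k\mid v]\), polynomials of degree \(k\) in \(t\), knowing them on an interval identifies every mixed moment \(\mathbb{E}[a^j\varepsilon^{k-j}\mid v]\). A Carleman-type condition then determines the distribution.

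I would state this tail condition explicitly as the regularity hypothesis. Without it, one can mimic the null-space construction of Proposition~\ref{myprp-iv-exact-nonid} using functions whose Fourier transforms vanish on the cone.

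\textbf{Step 3: identify \(\theta(x)\).} Integrate over the identified marginal of \(V\):
\[
f(a,\varepsilon)=\int f(a,\varepsilon\mid v)\,dF_V(v).
\]
Plugging this into
\[
\theta(x)=\frac{\mathbb{E}[\varepsilon e^{a+\varepsilon\log x}]}{\mathbb{E}[e^{a+\varepsilon\log x}]}
\]
identifies \(\theta(x)\), provided the moments are finite, as assumed. Alternatively, and with a view to the estimator of Section~\ref{sec-debiased-estimators}, one can write
\[
\mathbb{E}[e^{a+\varepsilon\log x}]=\int \mathbb{E}[Y\mid X=x,V=v]\,dF_V(v),
\]
which is valid whenever \(x\) lies in the conditional support of \(X\) given \(V=v\) for all \(v\). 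The distributional argument above avoids needing that full-support condition.
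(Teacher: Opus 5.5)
Your argument is correct once the tail condition is added, and it departs from the paper's proof at the decisive inversion step. The paper follows the same outline: recover \(V\), read each \((x,z)\) as a Radon projection of the single density \(f(a,\varepsilon\mid V=v)\), then integrate over \(f_V\). It inverts, however, by asserting that the Radon transform is injective on \(L^2(\mathbb{R}^2)\) whenever the angles contain an interval. That assertion is false. By the Fourier slice theorem, there are nonzero Schwartz functions whose Fourier transform vanishes on the double cone \(\{(s,st):s\in\mathbb{R},\,t\in I_v\}\), and these have zero projection at every observed angle. This is exactly the obstruction you point to. Limited-angle injectivity does hold for compactly supported densities, because their Fourier transforms are entire (Paley--Wiener), and that is in substance your analyticity argument. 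So your characteristic-function route, or the moment route, is what actually closes the step, at the cost of an explicit regularity hypothesis. Your Step 1 is also more careful than the paper's \(V=g^{-1}(Z,X)\), which presumes \(g\) is known. Your normalisation \(V=F_{X\mid Z}(X\mid Z)\) needs only that \(V\) be continuously distributed.

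One correction. The condition \(\mathbb{E}[e^{c(|a|+|\varepsilon|)}\mid V=v]<\infty\) is not implied by what \(\theta(x)\) needs. Finiteness of \(\mathbb{E}[e^{a+\varepsilon\log x}]\) controls only the directions \((1,\log x)\), and permits, say, a Cauchy-type left tail in \(a\). You can weaken your condition to one that is genuinely implicit. Suppose \(\mathbb{E}[e^{a+\varepsilon t}\mid V=v]<\infty\) for \(t\) in a subinterval \(J\subset I_v\); this is equivalent to \(\mathbb{E}[Y\mid X=e^t,V=v]<\infty\), a condition on observables. Then the identified laws of \(a+\varepsilon t\) pin down the bivariate Laplace transform \(\mathbb{E}[e^{\lambda a+\lambda t\varepsilon}\mid V=v]\), which is finite on the open set \(\{(\lambda,\lambda t):0<\lambda<1,\ t\in J\}\). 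A Laplace transform finite on an open set determines the distribution: tilt by \(e^{w_0\cdot(a,\varepsilon)}\) for an interior point \(w_0\) to reduce to an MGF finite near zero. Moreover, any observationally equivalent model automatically satisfies the same finiteness, so no restriction on the model class is needed.
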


Triangularity and the sufficient support condition ensures that the
Radon transform, restricted to the angles generated, is injective on
\(L^2(\mathbb{R}^2)\), so \(f(a, \varepsilon \mid V = v)\) is fully
recovered. Integrating over \(f_V\) then gives
\[ \mathbb{E}[e^{a + \varepsilon \log x}] = \int \mathbb{E}[e^{a + \varepsilon \log x} \mid V = v]\ f_V(v)\ dv \]
where each conditional expectation is identified.

\section{Debiased Estimation}\label{sec-debiased-estimators}

Regardless of whether causal identification holds, we may wish to
estimate the arithmetic mean elasticity of a log-log model. As a
semi-parametric estimator the estimator we provide in this section will
converge to the average arithmetic mean elasticity under regularity
assumptions. Even when economists run log-log (log-linear) projection
models, they can use our estimator to obtain the average arithmetic
(semi-)elasticity of the observed variables.

We present estimators that converge under general conditions to the
average arithmetic mean elasticity. In the first part, we provide an
estimator which applies a correction to the log-log model directly which
will give causal estimates under full independence of the regressor and
random coefficients. In the second part, we provide an instrumental
variables estimator which uses a control function approach. We give
conditions for their convergence and asymptotic distribution.

As we must estimate high-dimensional functions we turn to machine
learning methods to avoid the curse of dimensionality.
Neyman-orthogonalised debiased machine learning methods
\citep{chernozhukovLocallyRobustSemiparametric2022, chernozhukovDoubleDebiasedMachine2018}
permit root-\(n\) convergence even with neural networks.

We estimate the average arithmetic mean elasticity (semi-elasticity)
starting from a log-log (log-linear) model. We cannot provide faster
converging estimators for the non-parametric power mean sufficient
statistic in our decision problem than other non-parametric estimators -
though our semi-parametric estimator for the \textbf{average} converges
faster than the naive ``plug-in'' estimator
\citep{chernozhukovLocallyRobustSemiparametric2022}. However, we note
that if the decision problem assumptions hold then estimating a
non-parametric model on the residuals of the log-log model will have
strictly lower asymptotic variance to second order than a general
non-parametric approach without the isoelastic heterogeneity assumption.
We don't yet provide asymptotics for non-parametric estimation.

Our estimators develop on
\citet{aiSemiparametricDerivativeEstimator2008} who use splines to
estimate the conditional mean of the outcome variable and its derivative
in a non-linear model under transformation.

Researchers may want to understand the arithmetic percentage change
rather than the arithmetic elasticity when regressors are discrete. We
have developed this estimator but do not provide it here yet. Some of
our results in Section~\ref{sec-empirical} use this estimator.

\subsection{Ordinary estimator}\label{ordinary-estimator}

Let \(W=(Y,X,Z)\), where \(X\in\mathbb R^{d_x}\) are the variables of
interest (all derivatives are taken with respect to \(x\)) and \(Z\) are
controls. Assume the outcome satisfies \[
\log Y = \beta_0^\top X + \gamma_0^\top Z + u,
\qquad
\mathbb E[u\mid X,Z]=0.
\] This is isomorphic to the random coefficients model when observed X
is independent from \(\alpha(\omega)\) and \(\varepsilon(\omega)\).

Define the primitive nuisance object \[
m_0(x,z) := \mathbb E[e^u\mid X=x,Z=z],
\qquad
0<\underline m \le m_0(x,z)\le \overline m<\infty.
\] Then \[
\mathbb E[Y\mid X=x,Z=z]
=
e^{\beta_0^\top x+\gamma_0^\top z}\,m_0(x,z),
\] and \[
\nabla_x \log \mathbb E[Y\mid X,Z]
=
\beta_0 + \nabla_x \log m_0(X,Z).
\]

We target the average semi-elasticity of the conditional mean with
respect to \(X\): \[
\theta_0
:=
\mathbb E\!\left[\nabla_x \log \mathbb E[Y\mid X,Z]\right]
=
\mathbb E\!\left[\beta_0 + \frac{\nabla_x m_0(X,Z)}{m_0(X,Z)}\right].
\] Elasticities with respect to \(Z\) are not targeted. Note that if X
was log K for some variable K, this would be the average arithmetic mean
elasticity with respect to K.

Let \(f_0(x\mid z)\) denote the conditional density of \(X\) given
\(Z=z\), differentiable in \(x\) with \(f_0(x\mid z)>0\). Define \[
\alpha_0(x,z)
:=
-\frac{\nabla_x f_0(x\mid z)}{f_0(x\mid z)\,m_0(x,z)}.
\]

Define the residual \[
\delta := e^u - m_0(X,Z),
\qquad
\mathbb E[\delta\mid X,Z]=0.
\]

Note that \[
\alpha_0(X,Z)\,\delta
= -
\frac{\nabla_x f_0(X\mid Z)}{f_0(X\mid Z)}
\left(\frac{e^u}{m_0(X,Z)}-1\right),
\]

which is the standard locally robust correction term.

Let \(m_0'(x,z):=\nabla_x m_0(x,z)\).

Define the score \[
\phi(W;\theta, \beta, \gamma, m, f)
:=
\beta + \frac{m'(X,Z)}{m(X,Z)} - \theta
\;+\;
\alpha(X,Z)\,\big(e^{u(\beta,\gamma)} - m(X,Z)\big),
\] where \[
u(\beta,\gamma) := \log Y - \beta^\top X - \gamma^\top Z,
\qquad
\alpha(x,z) := \frac{\nabla_x f(x\mid z)}{f(x\mid z)\,m(x,z)}.
\]

At the true parameter values \((\theta_0,\beta_0,\gamma_0,m_0,f_0)\), \[
\mathbb E[\phi(W;\theta_0)]=0.
\]

\begin{theorem}[Neyman
Orthogonality]\protect\hypertarget{thm-neymanorthogonality}{}\label{thm-neymanorthogonality}

\(\phi\) is Neyman-orthogonal with respect to the nuisance functions
\((m,f)\) and the finite-dimensional parameters \((\beta,\gamma)\).

\end{theorem}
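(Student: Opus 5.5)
The plan is to verify Neyman orthogonality directly. For each nuisance component I compute the Gateaux (pathwise) derivative of $\eta\mapsto\mathbb E[\phi(W;\theta_0,\eta)]$ at the truth $\eta_0=(\beta_0,\gamma_0,m_0,f_0)$ and show that it vanishes in every admissible direction. Throughout I take the correction weight at the truth to be $\alpha_0(x,z)=-\nabla_x f_0(x\mid z)/\{f_0(x\mid z)m_0(x,z)\}$, as in the definition of $\alpha_0$. The displayed definition of $\alpha$ inside the score omits the minus sign, and the argument only works with the sign of $\alpha_0$, so I read the score with that sign.

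I will use two tools repeatedly. The first is the conditional moment restriction $\mathbb E[\delta\mid X,Z]=0$, equivalently $\mathbb E[e^{u}\mid X,Z]=m_0(X,Z)$. The second is integration by parts in $x$ conditional on $Z=z$. For this I assume regularity conditions: $f_0(\cdot\mid z)\,h(\cdot,z)/m_0(\cdot,z)$ and $x f_0(x\mid z)$ vanish at the boundary of the support of $X$, and dominated convergence permits differentiating under the integral. The needed bounds on $m$ are already given by $0<\underline m\le m_0\le\overline m$.

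\emph{Direction $f$.} Perturb $f_t=f_0+t g$, with $g$ chosen so that $f_t$ remains a conditional density. Only $\alpha$ depends on $f$, and it enters multiplied by $e^{u}-m_0=\delta$. Hence the derivative is $\mathbb E[\dot\alpha(X,Z)\,\delta]$ for some function $\dot\alpha$ of $(X,Z)$, which is $0$ by iterated expectations.

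\emph{Direction $m$.} Perturb $m_t=m_0+th$. The part of the derivative that comes from $\alpha$'s dependence on $m$ again multiplies $\delta$, so it vanishes. What remains is
\[
\mathbb E\!\left[\frac{\nabla_x h}{m_0}-\frac{h\,\nabla_x m_0}{m_0^2}\right]-\mathbb E[\alpha_0 h].
\]
Conditional on $Z=z$, I integrate $\int \nabla_x h\, f_0/m_0\,dx$ by parts. This gives $-\int h\,\nabla_x(f_0/m_0)\,dx$, and the $\nabla_x m_0$ piece produced by the product rule cancels the second term exactly. The first bracket therefore equals $-\mathbb E[h\,\nabla_x f_0/(f_0m_0)]=\mathbb E[\alpha_0 h]$, and the total derivative is zero. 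This is the Riesz-representer property of $\alpha_0$ for the functional $m\mapsto\mathbb E[\nabla_x m/m_0]$.

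\emph{Direction $(\beta,\gamma)$.} Differentiating the plug-in term in $\beta$ gives the identity matrix $I$. The correction term contributes
\[
-\mathbb E[\alpha_0 X^\top e^{u}]=-\mathbb E[\alpha_0 X^\top m_0]=\mathbb E\!\left[\frac{\nabla_x f_0}{f_0}X^\top\right]=\mathbb E_Z\!\left[\int \nabla_x f_0(x\mid Z)\,x^\top dx\right]=-I,
\]
using first the conditional moment and then integration by parts. The two contributions sum to zero. For $\gamma$ the plug-in term has no dependence. The correction gives $\mathbb E\big[Z\int\nabla_x f_0(x\mid Z)^\top dx\big]=0$, because a density integrates to one and vanishes at the boundary.

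The main obstacle is not algebraic but the justification of the boundary terms and the interchange of differentiation and expectation. These hold under regularity conditions (densities vanishing at the support boundary, enough integrability of $X e^{u}$ and of $\nabla_x f_0/f_0$), which I will state explicitly as maintained assumptions rather than derive. I will also flag the sign convention on $\alpha$: with the opposite sign, both the $m$- and the $\beta$-derivatives would be $2\times$ the plug-in terms rather than zero.
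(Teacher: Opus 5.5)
Your proposal is correct and follows the paper's proof essentially step for step. The $f$-direction vanishes by $\mathbb{E}[\delta\mid X,Z]=0$. The $m$-direction cancels via integration by parts with $q=h/m_0$; your version, integrating $\nabla_x h$ against $f_0/m_0$, is the same computation. The $\beta$-direction cancels $I$ against $\mathbb{E}[s_0X^\top]=-I$.

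Two of your additions are exactly what the paper's proof relies on without spelling out. Your explicit $\gamma$ argument, $\mathbb{E}[s_0\mid Z]=0$, fills in the paper's ``analogous argument''. Your reading of $\alpha$ with the minus sign matches the $\alpha_t=-s_0/m_t$ that the paper's proof and algorithm actually use.
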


A formal proof is in Section~\ref{sec-noproof}.

\begin{enumerate}
\def\labelenumi{\arabic{enumi}.}
\item
  Estimate \((\beta_0,\gamma_0)\) and form residuals \[
  \hat u = \log Y - \hat\beta^\top X - \hat\gamma^\top Z.
  \]
\item
  Estimate \[
  m_0(x,z)=\mathbb E[e^u\mid X=x,Z=z]
  \] by regressing \(e^{\hat u}\) on \((X,Z)\) using a flexible learner
  (e.g.~a neural network), yielding \(\hat m(x,z)\) and \[
  \hat m'(x,z)=\nabla_x \hat m(x,z)
  \] (e.g.~via automatic differentiation).
\item
  Estimate the conditional density \(f_0(x\mid z)\) and form \[
  \hat\alpha(x,z)= - \frac{\nabla_x \hat f(x\mid z)}{\hat f(x\mid z)\,\hat m(x,z)}.
  \]
\item
  Use sample splitting and cross-fitting. Define \(\hat\theta\) as the
  solution to \[
  \frac1n\sum_{i=1}^n
  \phi\!\left(W_i;\theta,\hat\beta,\hat\gamma,\hat m,\hat f\right)
  =0,
  \] with each observation evaluated using nuisance estimates trained on
  a fold that excludes it. In our estimation, we also estimate beta and
  gamma on the fold excluding the observation.
\end{enumerate}

We call our estimator the doubly robust estimator of the arithmetic mean
(DREAM).

\begingroup

\renewcommand{\thetheorem}{A1}

\begin{asm}[Sampling]\label{asm-ordinary-sampling}

\(\{(Y_i,X_i,Z_i)\}_{i=1}^n\) are i.i.d.

\end{asm}

\addtocounter{theorem}{-1}

\endgroup

\hyperref[asm-ordinary-sampling]{Assumption~\ref*{asm-ordinary-sampling}}
can be weakened, and the estimator still converges, for example in fixed
effects models.

\begingroup

\renewcommand{\thetheorem}{A2}

\begin{asm}[Regularity]\label{asm-ordinary-regularity}

\(m_0(x,z)\) is bounded away from zero and infinity; \(f_0(x\mid z)\) is
continuously differentiable in \(x\) and strictly positive. The
integration-by-parts condition in the appendix A of holds.

\end{asm}

\addtocounter{theorem}{-1}

\endgroup

The first condition requires all moments of \(u|X=x\) to be finite.
Positivity is ensured as \(e^u > 0 \quad \forall u.\)

\begingroup

\renewcommand{\thetheorem}{A3}

\begin{asm}[Moments and identification]\label{asm-ordinary-moments}

\(\mathbb E[\phi(W;\theta_0)^2]<\infty\) and \(\theta_0\) is uniquely
identified by \[
\mathbb E[\phi(W;\theta)]=0.
\]

\end{asm}

\addtocounter{theorem}{-1}

\endgroup

\hyperref[asm-ordinary-moments]{Assumption~\ref*{asm-ordinary-moments}}
is a standard GMM condition, which ensures that \(\theta_{0}\) is
identifiable and its estimators have finite variance.

\begingroup

\renewcommand{\thetheorem}{A4}

\begin{asm}[Nuisance rates with cross-fitting]\label{asm-ordinary-nuisance-rates}

The nuisance estimators satisfy product-rate conditions sufficient for
locally robust Z-estimation, for example, \[
\left\|\nabla_x\log \hat m - \nabla_x\log m_0\right\|_{L^2}=o_p(1),
\qquad
\|\hat m-m_0\|_{L^2}\cdot\|\hat\alpha-\alpha_0\|_{L^2}
=o_p(n^{-1/2}),
\] and analogous conditions for the components entering \(\hat u\)
through \((\hat\beta,\hat\gamma)\).

\end{asm}

\addtocounter{theorem}{-1}

\endgroup

For a sufficiently smooth m and density, both these conditions are
satisfied.

Let \[
\phi_0(W) := \phi(W;\theta_0,\beta_0,\gamma_0,m_0,f_0),
\qquad
V := \mathbb E[\phi_0(W)^2].
\]

Since \[
\partial_\theta \mathbb E[\phi(W;\theta)]\big|_{\theta=\theta_0} = -1,
\] the influence function equals \(\phi_0(W)\).

Under assumptions
\hyperref[asm-ordinary-sampling]{Assumption~\ref*{asm-ordinary-sampling}}
--
\hyperref[asm-ordinary-nuisance-rates]{Assumption~\ref*{asm-ordinary-nuisance-rates}}
and the Neyman orthogonality of the score established in Appendix A,
Theorem 3.1 of \citet{chernozhukovDoubleDebiasedMachine2018} implies

\begin{theorem}[Asymptotic
Normality]\protect\hypertarget{thm-asympnorm}{}\label{thm-asympnorm}

Under
\hyperref[asm-ordinary-sampling]{Assumption~\ref*{asm-ordinary-sampling}}
--
\hyperref[asm-ordinary-nuisance-rates]{Assumption~\ref*{asm-ordinary-nuisance-rates}}
and using cross-fitting, \[
\sqrt n(\hat\theta-\theta_0)
=
\frac{1}{\sqrt n}\sum_{i=1}^n \phi_0(W_i) + o_p(1)
\;\Rightarrow\;
N(0,V).\]

\end{theorem}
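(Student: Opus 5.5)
The plan is to verify the hypotheses of Theorem 3.1 of \citet{chernozhukovDoubleDebiasedMachine2018} (DML2 with cross-fitting) for the score $\phi$. Their conditions are: (i) a linear-in-$\theta$ score with nonsingular Jacobian; (ii) Neyman orthogonality; (iii) rate conditions on the nuisance realisation set. Then read off the expansion.

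For (i), $\phi(W;\theta,\eta)=\psi^b(W;\eta)-\theta$ with $\eta=(\beta,\gamma,m,f)$, so the Jacobian $J_0=-1$ is trivially nonsingular. Identification is Assumption~A3. The variance $V=\mathbb E[\phi_0^2]$ is finite and bounded away from zero; the lower bound needs the mild nondegeneracy that $\phi_0$ is not a.s.\ constant, which I would add as a side condition. Hence the estimator is explicitly
\[
\hat\theta=\frac1n\sum_{k}\sum_{i\in I_k}\psi^b(W_i;\hat\eta_{-k}),
\]
and
\[
\sqrt n(\hat\theta-\theta_0)=\frac{1}{\sqrt n}\sum_i\phi_0(W_i)+\sum_k\frac{1}{\sqrt n}\sum_{i\in I_k}\bigl[\psi^b(W_i;\hat\eta_{-k})-\psi^b(W_i;\eta_0)\bigr].
\]
For (ii), Neyman orthogonality is Theorem~\ref{thm-neymanorthogonality}, which I take as given.

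The remainder is then split per fold into two pieces. The first is an empirical-process term, $\mathbb G_{n,k}[\psi^b(\cdot;\hat\eta_{-k})-\psi^b(\cdot;\eta_0)]$. Conditional on the auxiliary fold, this is a centred i.i.d.\ average. By Chebyshev it is $O_p(\|\psi^b(\cdot;\hat\eta)-\psi^b(\cdot;\eta_0)\|_{L^2})$. That $L^2$ distance is controlled by $\|\nabla_x\log\hat m-\nabla_x\log m_0\|_{L^2}$, $\|\hat\alpha-\alpha_0\|_{L^2}$, $\|\hat m-m_0\|_{L^2}$, and $|\hat\beta-\beta_0|+|\hat\gamma-\gamma_0|$. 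The parametric parts enter through $e^{\hat u}=e^{u}e^{-(\hat\beta-\beta_0)^\top X-(\hat\gamma-\gamma_0)^\top Z}$. Using the bounds $\underline m\le m_0\le\overline m$ from Assumption~A2, each of these is $o_p(1)$ by Assumption~A4, so this term vanishes.

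The second piece is the bias term $\sqrt n\,\mathbb E[\psi^b(W;\hat\eta)-\psi^b(W;\eta_0)]$. Here I would take a second-order Taylor expansion along $\eta_0+r(\hat\eta-\eta_0)$. The first-order term vanishes by orthogonality. The second-order term has the product structure $\mathbb E[(\hat\alpha-\alpha_0)(\hat m-m_0)]$, bounded by Cauchy–Schwarz. To see this I would use the integration-by-parts identity from A2: $\mathbb E[\nabla_x m/m_0]$-type terms convert into $\mathbb E[\alpha_0 m]$. There are also quadratic terms in $(\hat\beta-\beta_0,\hat\gamma-\gamma_0)$, which are $O_p(n^{-1})$ given root-$n$ estimation of the finite-dimensional parameters. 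Finally there are cross terms between parameters and $\hat\alpha-\alpha_0$, which are $O_p(n^{-1/2})\cdot o_p(1)$. By Assumption~A4 all of these are $o_p(n^{-1/2})$.

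The main obstacle is this second-order bound, specifically the term involving $\nabla_x\hat m/\hat m$. It is a derivative functional and is not directly controlled by $\|\hat m-m_0\|$. I would handle it by writing
\[
\frac{\hat m'}{\hat m}-\frac{m_0'}{m_0}=\nabla_x\log\hat m-\nabla_x\log m_0
\]
and integrating by parts against $f_0$. This turns its mean into $-\mathbb E[\nabla_x\log f_0\,(\log\hat m-\log m_0)]$. Expanding $\log\hat m-\log m_0=(\hat m-m_0)/m_0+O((\hat m-m_0)^2)$, the linear part cancels exactly against the correction term evaluated at $\alpha_0$. That leaves the quadratic remainder, which is bounded by $\|\hat m-m_0\|_{L^2}^2$ times bounded quantities, together with the $\hat\alpha$ product. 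This requires that $\nabla_x\log f_0$ be in $L^2$ and that boundary terms vanish, which I would fold into the A2 integration-by-parts condition.

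With both remainder terms negligible, the Lindeberg–Lévy CLT applied to $n^{-1/2}\sum\phi_0(W_i)$ gives convergence to $N(0,V)$, using finite variance from A3. Combining this with the expansion above establishes the stated result.
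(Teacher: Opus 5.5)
Your proposal is correct and follows the paper's route. The paper obtains the theorem directly from Theorem 3.1 of \citet{chernozhukovDoubleDebiasedMachine2018}, using the linear score with Jacobian $-1$, the Neyman orthogonality theorem, and A1--A4; you are verifying that theorem's hypotheses by hand, and your split into a conditional-Chebyshev empirical-process term and an orthogonality-killed bias term is exactly its proof. Your explicit bias computation is more careful than the paper's assumptions, because the own-quadratic remainder $\mathbb{E}[\nabla_x\log f_0\,\{\log\hat m-\log m_0-(\hat m-m_0)/m_0\}]$ needs $\hat m$ to converge faster than $n^{-1/4}$ (in $L^2$ if $\nabla_x\log f_0$ is bounded, in $L^4$ if it is only square-integrable), which A4's displayed cross-product condition does not deliver and must be read into its ``sufficient for locally robust Z-estimation'' clause.
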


A consistent variance estimator is \[
\hat V := \frac1n\sum_{i=1}^n \hat\phi_i^2,
\qquad
\hat\phi_i :=
\phi(W_i;\hat\theta,\hat\beta,\hat\gamma,\hat m_{-k(i)},\hat f_{-k(i)}),\]
and confidence intervals can be constructed as \[
\hat\theta \pm z_{1-\alpha/2}\sqrt{\hat V/n}. \]

\begin{corollary}[]\protect\hypertarget{cor-consistency}{}\label{cor-consistency}

Under
\hyperref[asm-ordinary-sampling]{Assumption~\ref*{asm-ordinary-sampling}}
--
\hyperref[asm-ordinary-nuisance-rates]{Assumption~\ref*{asm-ordinary-nuisance-rates}},
\(\hat{\theta}\) is consistent for \(\theta_{0}\)

\end{corollary}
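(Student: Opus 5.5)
The corollary follows directly from Theorem~\ref{thm-asympnorm}, so I would derive it as a one-step consequence of the asymptotic linear representation rather than reprove it from first principles. Under Assumptions~\ref{asm-ordinary-sampling}--\ref{asm-ordinary-nuisance-rates}, with cross-fitting, Theorem~\ref{thm-asympnorm} gives
\[
\sqrt n(\hat\theta-\theta_0)=\frac{1}{\sqrt n}\sum_{i=1}^n \phi_0(W_i)+o_p(1)\Rightarrow N(0,V).
\]
Dividing by $\sqrt n$ yields $\hat\theta-\theta_0 = \frac1n\sum_{i=1}^n\phi_0(W_i)+o_p(n^{-1/2})$.

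The first term is handled by the weak law of large numbers. By Assumption~\ref{asm-ordinary-sampling} the $\phi_0(W_i)$ are i.i.d., and by Assumption~\ref{asm-ordinary-moments} they have $\mathbb E[\phi_0(W)^2]=V<\infty$. The mean is zero, since $\mathbb E[\phi(W;\theta_0)]=0$ at the true nuisance values. Chebyshev's inequality then gives $\frac1n\sum_i\phi_0(W_i)=O_p(n^{-1/2})=o_p(1)$, and so $\hat\theta\to_p\theta_0$.

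Alternatively, tightness alone suffices. A sequence $\sqrt n(\hat\theta-\theta_0)$ that converges in distribution is $O_p(1)$. Multiplying by $n^{-1/2}\to0$ gives $o_p(1)$ by Slutsky's lemma.

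No step here is a real obstacle; all the work sits in Theorem~\ref{thm-asympnorm}. The one point I would check explicitly is that $\hat\theta$ is well defined as the solution of the cross-fitted empirical moment equation. This is immediate, because the score is linear in $\theta$ with derivative $-1$. Hence $\hat\theta$ equals the sample average of the remaining terms of $\phi$, evaluated at the cross-fitted nuisance estimates, and no separate identification or compactness argument is needed.
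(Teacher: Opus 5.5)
Your argument is correct. Theorem~\ref{thm-asympnorm} gives $\sqrt n(\hat\theta-\theta_0)=O_p(1)$, so $\hat\theta-\theta_0=O_p(n^{-1/2})=o_p(1)$. Your remark that the score is linear in $\theta$ with slope $-1$ also settles existence and uniqueness of the root: $\hat\theta$ is just the cross-fitted sample average of $\beta+m'/m+\alpha(e^{u}-m)$, as in Step~3 of the algorithm.

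The paper writes out no separate proof for this corollary. The only consistency argument it gives is for the IV analogue (Corollary~\ref{cor-iv-consistency}), and that proof does not read consistency off the asymptotic-normality theorem. It uses a standard Z-estimation argument instead: uniqueness of the population root from the identification assumption (here A3), a uniform law of large numbers, and continuity of the moment function.

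Your route is shorter and fully rigorous given Theorem~\ref{thm-asympnorm}. Because the score is linear in $\theta$, it makes both the identification part of A3 and any uniform LLN over $\theta$ unnecessary. The price is that it inherits the full strength of A4's $o_p(n^{-1/2})$ product-rate conditions and the cited result of Chernozhukov et al.

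The Z-estimation route is the generic one. It would survive a score that is nonlinear in $\theta$, and in principle it needs only consistency of the nuisance estimators in the norms entering the score, not product rates. As sketched in the paper, however, it leaves implicit the step that actually requires work: showing that the cross-fitted empirical moment evaluated at the estimated nuisances converges to the population moment at the true ones.
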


Neural networks can be used to estimate
\(m_0(x,z)=\mathbb E[e^u\mid x,z]\) and its derivative
\(\nabla_x m_0(x,z)\) via automatic differentiation. Sieve-type
approximation and rate results are given in
\citet{farrellDeepNeuralNetworks2021} and
\citet{schmidt-hieberNonparametricRegressionUsing2020}. In our
implementation, we also use neural networks to estimate
\(\frac{\nabla_x \hat f(x\mid z)}{\hat f(x\mid z)\,}\) using score
matching
\citep[\citet{hanNeuralNetworkbasedScore2024}]{hyvarinenEstimationNonnormalizedStatistical2005}

\subsection{Instrumental variables
estimator}\label{sec-iv-debiased-estimators}

Let \(W = (Y, X, Z)\), where \(X \in \mathbb{R}^{d_x}\) is endogenous
and \(Z \in \mathbb{R}^{d_z}\) are instruments. Assume a triangular
system \[ X = g_0(Z) + V, \qquad Z \perp V, \]
\[ \log Y = \beta_0^\top X + \rho_0^\top V + \epsilon, \qquad \mathbb{E}[\epsilon \mid X, V] = 0. \]
The first-stage residual \(V := X - g_0(Z)\) captures the endogeneity in
\(X\). Conditional on \(V\), variation in \(X\) is driven solely by the
instrument \(Z\) and is therefore exogenous.

Define the primitive nuisance objects
\[ m_0(x, v) := \mathbb{E}[e^u \mid X = x, V = v], \qquad u := \log Y - \beta_0^\top X - \rho_0^\top V, \]
\[ \mu_0(x) := \mathbb{E}_V[m_0(x, V)] = \int m_0(x, v) f_V(v) , dv. \]
The function \(\mu_0(x)\) is the average structural function (ASF),
representing the expected outcome under an intervention setting
\(X = x\) while integrating over the marginal distribution of \(V\).

We target the average semi-elasticity of the ASF with respect to \(X\),
\[ \theta_0 := \mathbb{E}\left[\nabla_x \log \mu_0(X)\right] = \mathbb{E}\left[\beta_0 + \frac{\nabla_x \mu_0(X)}{\mu_0(X)}\right]. \]

Define the density ratio
\[ \omega_0(x, v) := \frac{f_V(v)}{f_{V \mid X}(v \mid x)}, \] and the
marginal score \[ S_X(x) := \nabla_x \log f_X(x). \]

Let \(\mu_0'(x) := \nabla_x \mu_0(x)\). The score is given by
\[ \phi(W; \theta, \beta, \rho, m, g, \omega, S_X, \lambda) := \beta + \frac{\mu'(X)}{\mu(X)} - \theta + \phi^m(W) + \phi^g(W), \]
where the correction for the outcome nuisance \(m\) is
\[ \phi^m(W) := -\frac{\omega(X, V) S_X(X)}{\mu(X)} \left( e^{u(\beta, \rho)} - m(X, V) \right), \]
with \(u(\beta, \rho) := \log Y - \beta^\top X - \rho^\top V\), and the
correction for the first-stage nuisance \(g\) is
\[ \phi^g(W) := -\lambda(Z)^\top (X - g(Z)). \] The function
\(\lambda_0(z)\) is the Riesz representer for the pathwise derivative
with respect to \(g\). Rather than deriving its closed form, we estimate
it via automatic debiasing
\citep{chernozhukovAutomaticDebiasedMachine2022}.

At the true parameter values, \(\mathbb{E}[\phi(W; \theta_0)] = 0\).

\begin{theorem}[Neyman Orthogonality
(IV)]\protect\hypertarget{thm-iv-neymanorthogonality}{}\label{thm-iv-neymanorthogonality}

\(\phi\) is Neyman-orthogonal with respect to the nuisance functions
\((m, g, \omega, S_X, \lambda)\) and the finite-dimensional parameters
\((\beta, \rho)\).

\end{theorem}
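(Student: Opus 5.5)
The plan is to verify Neyman orthogonality directly. For each nuisance component $\eta\in\{m,g,\omega,S_X,\lambda\}$ and each finite-dimensional parameter in $(\beta,\rho)$, I will compute the Gateaux derivative $\partial_r \mathbb{E}[\phi(W;\theta_0,\eta_0+r\,\delta\eta)]\big|_{r=0}$ and show that it vanishes. Throughout, I will use two facts: $\mathbb{E}[e^{u}-m_0(X,V)\mid X,V]=0$, and $\mathbb{E}[X-g_0(Z)\mid Z]=\mathbb{E}[V]=0$, normalising $\mathbb{E}[V]=0$ as part of the triangular model. The argument mirrors the ordinary case (Theorem~\ref{thm-neymanorthogonality}), with the marginal of $V$ replacing the conditional density of $X$.

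\emph{Steps 1 and 2: directions that multiply a conditionally mean-zero residual, and the outcome nuisance $m$.} The directions $\omega$ and $S_X$ enter only through $\phi^m$. There they multiply $e^{u}-m_0(X,V)$, which has conditional mean zero given $(X,V)$, so these derivatives are zero. Likewise $\lambda$ enters only through $-\lambda(Z)^\top V$, whose derivative is zero because $\mathbb{E}[V\mid Z]=0$.

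For $m$, perturb $m_0$ by $h(x,v)$. This perturbs the average structural function $\mu_0$ by $H(x):=\int h(x,v)f_V(v)\,dv$. The plug-in part then has derivative $\mathbb{E}[H'(X)/\mu_0(X)-\mu_0'(X)H(X)/\mu_0(X)^2]$. Integrating by parts against $f_X$, with boundary terms vanishing by the same regularity condition as in Assumption~\ref{asm-ordinary-regularity}, gives
\[
\mathbb{E}\!\left[\frac{H'(X)}{\mu_0(X)}\right]=-\mathbb{E}\!\left[H(X)\left(\frac{S_X(X)}{\mu_0(X)}-\frac{\mu_0'(X)}{\mu_0(X)^2}\right)\right].
\]
Hence the plug-in derivative is $-\mathbb{E}[S_X(X)H(X)/\mu_0(X)]$. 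The correction $\phi^m$ contributes $+\mathbb{E}[\omega_0(X,V)S_X(X)h(X,V)/\mu_0(X)]$. Since $\omega_0=f_V/f_{V\mid X}$, the inner expectation over $V\mid X$ converts $h$ into $\int h(X,v)f_V(v)\,dv=H(X)$. The two terms therefore cancel. Perturbing $\mu$ inside $\phi^m$ is harmless, because that term multiplies a mean-zero residual.

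\emph{Step 3: the parameters $(\beta,\rho)$.} The estimand $\theta_0$ depends only on the identified function $\mathbb{E}[Y\mid X,V]=e^{\beta^\top X+\rho^\top V}m(X,V)$ integrated over $f_V$. I will therefore reparametrise. A move from $(\beta_0,\rho_0)$ to $(\beta,\rho)$ is equivalent to keeping $\beta_0,\rho_0$ fixed and replacing $m_0$ by $m_0(x,v)e^{(\beta-\beta_0)^\top x+(\rho-\rho_0)^\top v}$ in the residual. The explicit $\beta$ in the score compensates the extra $\nabla_x$ term. Concretely, the direct derivative in $\beta_j$ equals $e_j-\mathbb{E}[\omega_0S_X X_j m_0/\mu_0]$. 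By the $m$-calculation applied with $h=x_jm_0$, this becomes $e_j-\mathbb{E}[S_X(X)X_j]$. Integration by parts gives $\mathbb{E}[S_X(X)X_j]=e_j$, so the derivative is zero. The $\rho$ derivative is handled the same way with $h=v_jm_0$. Here $\int v_j m_0(x,v) f_V(v)\,dv$ enters, and the needed cancellation again follows from the $m$-step identity.

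\emph{Step 4: the first stage $g$ (main obstacle).} Perturbing $g_0$ by $\delta g$ changes $V=X-g(Z)$ everywhere it appears: in $u$, in $m(X,V)$, in $\omega(X,V)$, and in $f_V$ inside $\mu$. This produces a linear functional $\delta g\mapsto D(\delta g)$. The correction $-\lambda(Z)^\top(X-g(Z))$ contributes $+\mathbb{E}[\lambda_0(Z)^\top\delta g(Z)]$. Orthogonality then holds exactly when $\lambda_0$ is the Riesz representer of $-D$, that is, $D(\delta g)=-\mathbb{E}[\lambda_0(Z)^\top\delta g(Z)]$ for all $\delta g$. This is the definition used in the automatic debiasing of \citet{chernozhukovAutomaticDebiasedMachine2022}.

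The hard part is showing that $D$ is a mean-square continuous linear functional on $L^2(Z)$, so that the representer exists. I would first differentiate under the integral and use the chain rule through $V$. Because $V$ enters $f_V$ and $\omega$ through densities, one more integration by parts in $v$ is needed. This requires bounded derivatives of $m_0$ in $v$, the $V$-score being square integrable, and $\omega_0$ being bounded, which I would add as regularity conditions. Under these conditions, Cauchy--Schwarz bounds $|D(\delta g)|\le C\|\delta g\|_{L^2}$, and Riesz representation completes the proof.
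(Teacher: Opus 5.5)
Most of your argument follows the paper's lemma-by-lemma route. The $\omega$, $S_X$ and $\lambda$ directions vanish against a conditionally mean-zero factor. The $m$ direction cancels through integration by parts against $f_X$ plus the change of measure $\mathbb{E}[\omega_0(X,V)k(X,V)\mid X]=\int k(X,v)f_V(v)\,dv$. For $g$, $\lambda_0$ is defined as the representer of the induced linear functional; your point that this requires mean-square continuity improves on the paper's bare definition $\lambda_0:=\mathbb{E}[\Lambda\mid Z]$. Your $\beta$ step is correct, and integration by parts is the right justification; the paper's appeal to the first-order condition of the control-function regression does not give $\mathbb{E}[\alpha_0Xm_0]=I$. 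It does contain two sign slips that cancel: the correction's $\beta_j$-derivative is $+\mathbb{E}[\omega_0S_XX_jm_0/\mu_0]$, and $\mathbb{E}[S_X(X)X_j]=-e_j$.

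The genuine gap is the $\rho$ step. Your reparametrisation assumes $\theta_0$ depends only on $\int\mathbb{E}[Y\mid X=x,V=v]f_V(v)\,dv$. But the paper sets $\mu_0(x)=\int m_0(x,v)f_V(v)\,dv$ with $m_0=e^{-\beta_0^\top x-\rho_0^\top v}\,\mathbb{E}[Y\mid X=x,V=v]$. Hence $\beta_0+\nabla_x\log\mu_0(x)=\nabla_x\log\int e^{-\rho_0^\top v}\mathbb{E}[Y\mid X=x,V=v]f_V(v)\,dv$, which depends on $\rho_0$ in its own right. In the score, $\rho$ enters only through $e^{u(\beta,\rho)}$, and the plug-in $\beta+\mu'/\mu$ has no counterpart to the explicit $\beta$ that absorbs the $\beta$-shift. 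Holding $m=m_0$,
\[
\partial_{\rho_j}\mathbb{E}[\phi]
=\mathbb{E}\!\left[\frac{\omega_0(X,V)S_X(X)V_j\,m_0(X,V)}{\mu_0(X)}\right]
=\mathbb{E}\!\left[\frac{S_X(X)\kappa_j(X)}{\mu_0(X)}\right]
=-\mathbb{E}\!\left[\nabla_x\!\left(\frac{\kappa_j}{\mu_0}\right)\!(X)\right],
\qquad \kappa_j(x):=\int v_j\,m_0(x,v)f_V(v)\,dv .
\]
Your $m$-step identity only rewrites the correction's contribution as this plug-in-type term; nothing in the score cancels it.

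The term is nonzero in general. Take scalar $X,V$, $V$ symmetric, and $m_0(x,v)=2+\tanh(x)\tanh(v)$, generated for instance by $\epsilon\mid X,V\sim N(0,2\log m_0)$. Then $\mu_0\equiv2$, $\kappa(x)=\tanh(x)\,\mathbb{E}[V\tanh V]$, and the derivative equals $-\tfrac12\,\mathbb{E}[\operatorname{sech}^2X]\,\mathbb{E}[V\tanh V]\neq0$.

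The paper's proof does not close this either; it only says ``an analogous argument applies for $\rho$''. The $\rho$ part therefore needs an extra restriction or a modified score. One option is to assume $m_0$ does not depend on $v$, so that $\kappa_j=m_0\,\mathbb{E}[V_j]=0$. Another is to use the actual average structural function $\mu(x)=\int e^{\rho^\top v}m(x,v)f_V(v)\,dv$ with correction weight $\omega S_Xe^{\rho^\top V}/\mu$. With that choice the score is exactly invariant under your reparametrisation, and your argument goes through as written.
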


A formal proof is in Section~\ref{sec-iv-proof}.

Estimate \(g_0(z) = \mathbb{E}[X \mid Z = z]\) using a flexible learner,
yielding \(\hat{g}(z)\), and form residuals
\[ \hat{V} = X - \hat{g}(Z). \]

Estimate \((\beta_0, \rho_0)\) from the regression
\(\log Y = \beta^\top X + \rho^\top \hat{V} + \epsilon\) and form
residuals
\(\hat{u} = \log Y - \hat{\beta}^\top X - \hat{\rho}^\top \hat{V}\).
Estimate \(m_0(x, v) = \mathbb{E}[e^u \mid X = x, V = v]\) by regressing
\(e^{\hat{u}}\) on \((X, \hat{V})\) using a neural network, yielding
\(\hat{m}(x, v)\).

Compute the marginal integration
\[ \hat{\mu}(x) = \frac{1}{n} \sum_{j=1}^n \hat{m}(x, \hat{V}_j), \qquad \nabla_x \hat{\mu}(x) = \frac{1}{n} \sum_{j=1}^n \nabla_x \hat{m}(x, \hat{V}_j), \]
where \(\nabla_x \hat{m}\) is obtained via automatic differentiation.

Estimate \(\omega_0(x, v) = f_V(v) / f_{V \mid X}(v \mid x)\) using
density ratio estimation or by separately estimating \(f_V\) and
\(f_{V \mid X}\). Estimate \(S_X(x) = \nabla_x \log f_X(x)\) via score
matching. Estimate \(\lambda_0(z)\) using the automatic debiasing
procedure of \citet{chernozhukovAutomaticDebiasedMachine2022}, which
learns the Riesz representer by minimizing the expected squared norm of
the correction term subject to the moment constraint.

Use sample splitting with \(K\) folds. Define \(\hat{\theta}\) as the
solution to
\[ \frac{1}{n} \sum_{i=1}^n \phi\left(W_i; \theta, \hat{\eta}_{-k(i)}\right) = 0, \]
where \(\hat{\eta}_{-k(i)}\) denotes nuisance estimates trained on folds
excluding observation \(i\).

\begingroup

\renewcommand{\thetheorem}{B1}

\begin{asm}[Sampling]\label{asm-iv-sampling}

\({(Y_i, X_i, Z_i)}_{i=1}^n\) are i.i.d.

\end{asm}

\addtocounter{theorem}{-1}

\endgroup

\begingroup

\renewcommand{\thetheorem}{B2}

\begin{asm}[First stage]\label{asm-iv-first-stage}

\(X = g_0(Z) + V\) with \(Z \perp V\), and \(g_0\) is identified.

\end{asm}

\addtocounter{theorem}{-1}

\endgroup

\begingroup

\renewcommand{\thetheorem}{B3}

\begin{asm}[Regularity]\label{asm-iv-regularity}

\(m_0(x, v)\) is bounded away from zero and infinity; \(f_V(v) > 0\) and
\(f_{V \mid X}(v \mid x) > 0\) on the support; the density functions are
continuously differentiable with bounded derivatives.

\end{asm}

\addtocounter{theorem}{-1}

\endgroup

\begingroup

\renewcommand{\thetheorem}{B4}

\begin{asm}[Moments and identification]\label{asm-iv-moments}

\(\mathbb{E}[\phi(W; \theta_0)^2] < \infty\) and \(\theta_0\) is
uniquely identified by \(\mathbb{E}[\phi(W; \theta)] = 0\).

\end{asm}

\addtocounter{theorem}{-1}

\endgroup

\begingroup

\renewcommand{\thetheorem}{B5}

\begin{asm}[Nuisance rates with cross-fitting]\label{asm-iv-nuisance-rates}

The nuisance estimators satisfy product-rate conditions
\[ |\hat{m} - m_0|_{L^2} \cdot |\hat{\alpha} - \alpha_0|_{L^2} = o_p(n^{-1/2}), \]
\[ |\hat{g} - g_0|_{L^2} \cdot |\hat{\lambda} - \lambda_0|_{L^2} = o_p(n^{-1/2}), \]
where \(\alpha_0(x, v) := -\omega_0(x, v) S_X(x) / \mu_0(x)\) is the
Riesz representer for \(m\).

\end{asm}

\addtocounter{theorem}{-1}

\endgroup

Let
\[ \phi_0(W) := \phi(W; \theta_0, \eta_0), \qquad V := \mathbb{E}[\phi_0(W)^2]. \]

\begin{theorem}[Asymptotic Normality
(IV)]\protect\hypertarget{thm-iv-asympnorm}{}\label{thm-iv-asympnorm}

Under \hyperref[asm-iv-sampling]{Assumption~\ref*{asm-iv-sampling}} --
\hyperref[asm-iv-nuisance-rates]{Assumption~\ref*{asm-iv-nuisance-rates}} and using
cross-fitting,
\[ \sqrt{n}(\hat{\theta} - \theta_0) = \frac{1}{\sqrt{n}} \sum_{i=1}^n \phi_0(W_i) + o_p(1) ;\Rightarrow; N(0, V). \]

\end{theorem}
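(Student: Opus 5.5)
The plan is to apply the general cross-fitted DML theorem for Neyman-orthogonal moment conditions (Theorem 3.1 of \citet{chernozhukovDoubleDebiasedMachine2018}), exactly as was done for Theorem~\ref{thm-asympnorm}. Its high-level conditions will be verified in turn from Assumption~\ref{asm-iv-sampling}--Assumption~\ref{asm-iv-nuisance-rates} and Theorem~\ref{thm-iv-neymanorthogonality}. Write \(\eta=(\beta,\rho,m,g,\omega,S_X,\lambda)\). The score is linear in \(\theta\) with \(\partial_\theta\mathbb{E}[\phi]=-1\), so the Jacobian is nonsingular and \(\hat\theta\) has the closed form \(\hat\theta=\frac1n\sum_i\psi(W_i;\hat\eta_{-k(i)})\), where \(\psi=\phi+\theta\). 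Hence \(\sqrt n(\hat\theta-\theta_0)=\frac{1}{\sqrt n}\sum_i\phi_0(W_i)+R_n\), with \(R_n=\frac{1}{\sqrt n}\sum_k\sum_{i\in I_k}[\psi(W_i;\hat\eta_{-k})-\psi(W_i;\eta_0)]\). The task reduces to showing \(R_n=o_p(1)\); the Lindeberg--L\'evy CLT then gives \(N(0,V)\), since \(V<\infty\) by Assumption~\ref{asm-iv-moments}.

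For each fold, split \(R_n\) into an empirical-process term and a bias term. The empirical-process term is \((\mathbb{P}_{n,k}-P)[\psi(\cdot;\hat\eta_{-k})-\psi(\cdot;\eta_0)]\). Conditional on the auxiliary sample, \(\hat\eta_{-k}\) is fixed and the summands are i.i.d., so by Chebyshev this term is \(O_p(\|\psi(\cdot;\hat\eta)-\psi(\cdot;\eta_0)\|_{L^2}/\sqrt n)\). It is therefore enough to have \(L^2\) consistency of each component of the score. This uses Assumption~\ref{asm-iv-regularity}: boundedness of \(m_0\) away from zero, so that \(1/\hat\mu\) is controlled, and bounded density derivatives. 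It also needs consistency of \(\nabla_x\hat\mu/\hat\mu\), which I would add as the analogue of the first rate in Assumption~\ref{asm-ordinary-nuisance-rates}.

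The bias term is \(\sqrt n\,P[\psi(\cdot;\hat\eta)-\psi(\cdot;\eta_0)]\). I would handle it with a second-order Taylor (von Mises) expansion in \(\eta\) along the path \(\eta_0+t(\hat\eta-\eta_0)\). The first-order term vanishes by Theorem~\ref{thm-iv-neymanorthogonality}. The second-order remainder should be bounded by mixed products of nuisance errors. For the outcome block, the score is affine in the correction \(\alpha(e^u-m)\), so the remainder is \(\langle\hat\alpha-\alpha_0,\hat m-m_0\rangle\) together with the nonlinear error in \(\mu'/\mu\). The latter is quadratic in \(\hat\mu-\mu_0\) once linearized through the representer \(\alpha_0=-\omega_0S_X/\mu_0\). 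For the first-stage block the remainder is \(\langle\hat\lambda-\lambda_0,\hat g-g_0\rangle\). Cauchy--Schwarz and the product rates in Assumption~\ref{asm-iv-nuisance-rates} make both \(o_p(n^{-1/2})\). The finite-dimensional \((\hat\beta,\hat\rho)\) enter through \(e^{u(\beta,\rho)}\). Orthogonality in \((\beta,\rho)\) together with their \(\sqrt n\)-consistency, which follows from the control-function regression, gives a remainder of order \(O_p(n^{-1})\) times bounded moments of \(e^uX\) and \(e^uV\).

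The main obstacle is the plug-in \(\hat V=X-\hat g(Z)\) and the marginal integration \(\hat\mu(x)=\frac1n\sum_j\hat m(x,\hat V_j)\). These make \(\hat\mu\) a V-statistic that depends on the estimated \(g\), and they make the second-order remainder involve cross terms such as \((\hat g-g_0)(\hat m-m_0)\) and \((\hat\mu-\mu_0)^2\), which are not listed in Assumption~\ref{asm-iv-nuisance-rates}. I would first bound the averaging error \(\frac1n\sum_j m_0(x,V_j)-\mu_0(x)\) uniformly in \(x\), which is an \(O_p(n^{-1/2})\) empirical-process term under bounded \(m_0\). I would treat its contribution as part of the linear influence function; it is absorbed by the \(\omega\)-weighted correction \(\phi^m\), which is what the density ratio \(\omega_0\) is for. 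The \(g\)-induced cross terms I would control by a mean-value expansion in \(v\) using the bounded derivatives of Assumption~\ref{asm-iv-regularity}. This yields products of \(\|\hat g-g_0\|\) with \(\|\hat m-m_0\|\) or \(\|\hat\alpha-\alpha_0\|\), and I would assume these are \(o_p(n^{-1/2})\), as implied if all nuisances converge faster than \(n^{-1/4}\). Finally, consistency of \(\hat V\) for the variance follows from the same \(L^2\) convergence of the score.
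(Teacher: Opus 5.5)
Your route is the same as the paper's: Neyman orthogonality from Theorem~\ref{thm-iv-neymanorthogonality}, a quadratic bias bound, Jacobian \(-1\), then the CLT. The step you single out as the main obstacle, however, does not go through the way you propose, and this is a genuine gap.

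The averaging error \(\Delta_n(x)=\frac{1}{n}\sum_j m_0(x,V_j)-\mu_0(x)\) in \(\hat\mu\) is a first-order term, and \(\phi^m\) does not absorb it. Orthogonality in \(m\) is established with \(f_V\) held fixed. \(\Delta_n\) instead perturbs \(\mu\) without perturbing \(m\), and the score is not orthogonal in that direction.

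To see this, set every other nuisance to its true value and write \(\mu_n=\mu_0+\Delta_n\). Linearizing the plug-in term gives \(\frac{1}{n}\sum_i[\mu_n'/\mu_n-\mu_0'/\mu_0](X_i)=\frac{1}{n^2}\sum_{i,j}\partial_x[m_0(X_i,V_j)/\mu_0(X_i)]+o_p(n^{-1/2})\). Because \(\mathbb{E}_V[m_0(x,V)]/\mu_0(x)\equiv 1\), the H\'ajek projection onto \(X_i\) vanishes. The sum therefore equals \(\frac{1}{n}\sum_j\kappa(V_j)+o_p(n^{-1/2})\), where \(\kappa(v):=\int\partial_x[m_0(x,v)/\mu_0(x)]\,f_X(x)\,dx=-\int S_X(x)\,m_0(x,v)/\mu_0(x)\,f_X(x)\,dx\).

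This term is \(O_p(n^{-1/2})\) and nondegenerate unless \(\kappa(V)=0\) a.s. (for example, when \(m_0\) is multiplicatively separable in \(x\) and \(v\)). No product-rate condition removes it, and \(\phi^m\) cannot cancel it:
\begin{itemize}
\item \(\phi^m\) is already counted in \(\phi_0\).
\item Since \(\mathbb{E}[e^u-m_0\mid X,V]=0\), \(\phi^m\) is \(L^2\)-orthogonal to every function of \(V\).
\item The density ratio \(\omega_0\) only transports the \(m\)-correction from \(f_X\otimes f_V\) to the joint law of \((X,V)\). It carries no information about sampling error in the \(V\)-marginal.
\end{itemize}

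Restricting the average to the auxiliary folds does not help, because the same \(\kappa\) term then appears in your bias term \(\sqrt n\,P[\psi(\cdot;\hat\eta)-\psi(\cdot;\eta_0)]\). The full-sample average used in the algorithm also breaks your conditional-i.i.d. argument for the empirical-process term. So the leading term is \(n^{-1/2}\sum_i[\phi_0(W_i)+\kappa(V_i)]\). Obtaining the displayed expansion with \(\phi_0\) and \(V=\mathbb{E}[\phi_0^2]\) requires either adding \(\kappa(V)\) to the score, which changes \(V\), or assuming \(\kappa\equiv 0\).

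The paper's own proof does not close this hole. It simply asserts \(|\mathbb{E}[\phi(W;\theta_0,\hat\eta)]-\mathbb{E}[\phi(W;\theta_0,\eta_0)]|=O_p(\|\hat\eta-\eta_0\|^2)=o_p(n^{-1/2})\) under Assumption B5 and never treats the marginal integration. The extra conditions you add are what that assertion silently needs, since B5's two product conditions do not deliver a full quadratic bound. These are consistency of \(\nabla_x\hat\mu/\hat\mu\), rates for cross products involving \(\hat g-g_0\), and \(n^{-1/4}\) rates. That part of your proposal improves on the paper rather than falling short of it.

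Apply the same caution to the \(\rho\)-step you import from Theorem~\ref{thm-iv-neymanorthogonality}. Since \(\rho\) enters \(\phi\) only through \(e^{u(\beta,\rho)}\), you get \(\partial_\rho\mathbb{E}[\phi]=\int S_X(x)\,\frac{\int v\,m_0(x,v)f_V(v)\,dv}{\mu_0(x)}\,f_X(x)\,dx=-\int\partial_x\Big(\frac{\int v\,m_0(x,v)f_V(v)\,dv}{\mu_0(x)}\Big)f_X(x)\,dx\). Unlike the \(\beta\) case, there is no identity term to cancel against, so this derivative is generically nonzero. Hence \(\hat\rho-\rho_0\) also contributes at order \(n^{-1/2}\), not \(O_p(n^{-1})\) as you claim.
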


A consistent variance estimator is
\[ \hat{V} := \frac{1}{n} \sum_{i=1}^n \hat{\phi}_i^2, \qquad \hat{\phi}_i := \phi(W_i; \hat{\theta}, \hat{\eta}_{-k(i)}), \]
and confidence intervals can be constructed as
\(\hat{\theta} \pm z_{1-\alpha/2} \sqrt{\hat{V}/n}\).

\begin{corollary}[]\protect\hypertarget{cor-iv-consistency}{}\label{cor-iv-consistency}

Under \hyperref[asm-iv-sampling]{Assumption~\ref*{asm-iv-sampling}} --
\hyperref[asm-iv-nuisance-rates]{Assumption~\ref*{asm-iv-nuisance-rates}},
\(\hat{\theta}\) is consistent for \(\theta_0\).

\end{corollary}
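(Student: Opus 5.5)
Corollary~\ref{cor-iv-consistency} follows from Theorem~\ref{thm-iv-asympnorm}; we would not build a separate consistency argument. The theorem gives the asymptotically linear representation
\[
\sqrt n(\hat\theta-\theta_0)=\frac{1}{\sqrt n}\sum_{i=1}^n\phi_0(W_i)+o_p(1).
\]
Dividing by \(\sqrt n\) yields
\[
\hat\theta-\theta_0=\frac1n\sum_{i=1}^n\phi_0(W_i)+o_p(n^{-1/2}).
\]
It then suffices to show that the sample average of \(\phi_0(W_i)\) is \(o_p(1)\).

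By Assumption~\ref{asm-iv-sampling} the \(\phi_0(W_i)\) are i.i.d. By Assumption~\ref{asm-iv-moments} they have finite second moment \(V\). At the true parameters they have mean zero, since \(\mathbb{E}[\phi(W;\theta_0)]=0\) is the moment condition stated just before Theorem~\ref{thm-iv-neymanorthogonality}. Chebyshev's inequality then gives \(\frac1n\sum_i\phi_0(W_i)=O_p(n^{-1/2})\), so \(\hat\theta-\theta_0=O_p(n^{-1/2})=o_p(1)\). Equivalently, \(\sqrt n(\hat\theta-\theta_0)\Rightarrow N(0,V)\) makes the left side tight, and multiplying by \(n^{-1/2}\to0\) gives convergence in probability to zero via Slutsky.

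The only step needing care is that the theorem's expansion is valid under exactly the hypotheses listed (B1--B5 with cross-fitting). Those hypotheses cover the product-rate conditions and the orthogonality from Theorem~\ref{thm-iv-neymanorthogonality}, so nothing beyond the theorem is used.

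A secondary point is that \(\hat\theta\) is well defined as the root of the empirical moment equation. The score is linear in \(\theta\) with coefficient \(-1\), so the root exists and is unique, \(\hat\theta=\frac1n\sum_i[\hat\beta+\hat\mu'(X_i)/\hat\mu(X_i)+\hat\phi^m_i+\hat\phi^g_i]\), and identification in B4 is automatic. Hence there is no real obstacle beyond invoking the theorem; the corollary is a direct consequence of \(\sqrt n\)-consistency.
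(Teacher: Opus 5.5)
Your argument is correct, but it follows a different route from the paper. The paper proves the corollary as a stand-alone Z-estimation consistency result. B4 makes $\theta_0$ the unique zero of $\mathbb{E}[\phi(W;\theta)]$, $\hat\theta$ is the zero of the cross-fitted sample analogue $\mathbb{P}_n[\phi(W;\theta,\hat\eta)]$, and a uniform law of large numbers plus continuity of the moment function give $\hat\theta \xrightarrow{p} \theta_0$. It never invokes Theorem~\ref{thm-iv-asympnorm}.

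You instead read consistency off the asymptotically linear expansion in that theorem. Chebyshev, using B1, B4 and the mean-zero property of $\phi_0$, then gives $\hat\theta-\theta_0=O_p(n^{-1/2})$.

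\textbf{What your route buys.} Given the theorem, it is shorter, fully rigorous, and delivers a rate rather than bare consistency. Your remark that the score is affine in $\theta$ with slope $-1$ also shows that the identification and uniform-in-$\theta$ parts of the paper's argument are trivial. Since $\hat\theta$ is an explicit cross-fitted average, the only real content is that $\frac{1}{n}\sum_i \phi(W_i;\theta_0,\hat\eta_{-k(i)}) \xrightarrow{p} 0$. The paper's sketch leaves this step implicit when it applies a law of large numbers with estimated nuisances.

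\textbf{What the paper's route buys.} In principle it needs neither the product-rate conditions of B5 nor the asymptotic normality theorem. A Z-estimation consistency argument with cross-fitting only requires the nuisance estimators to converge in a sense under which $\eta \mapsto \mathbb{E}[\phi(W;\theta_0,\eta)]$ is continuous. It would therefore survive under weaker assumptions. Your argument inherits all of B5, and any fragility in the theorem's proof, through the theorem.

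Since the corollary is stated under B1--B5 anyway, both routes are adequate here.
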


\section{Empirical evidence}\label{sec-empirical}

In this section we present empirical evidence from a re-estimation
exercise. We conducted a census of all articles published in the ``Top
5'' economics journals in the year 2020 (the top 5 journals being
Econometrica, Quarterly Journal of Economics, Journal of Political
Economy, American Economic Review, Review of Economic Studies). Our
census revealed that 36.6\% of the 416 articles published in these 5
journals in 2020 contained regressions with log-dependent variables. Out
of the papers that employed log-DV regressions, approximately 70\% of
them interpreted their coefficients as percentage changes of the
arithmetic mean. We also conducted a census of articles from the first
two issues published in 2020 of a selection top field journals. From
this census of 113 papers, we found that 53\% of papers ran
log-dependent variable regressions.
Figure~\ref{fig-percentage-of-papers} presents the results from this
census.

\begin{figure}

\centering{

\includegraphics[width=1\linewidth,height=\textheight,keepaspectratio]{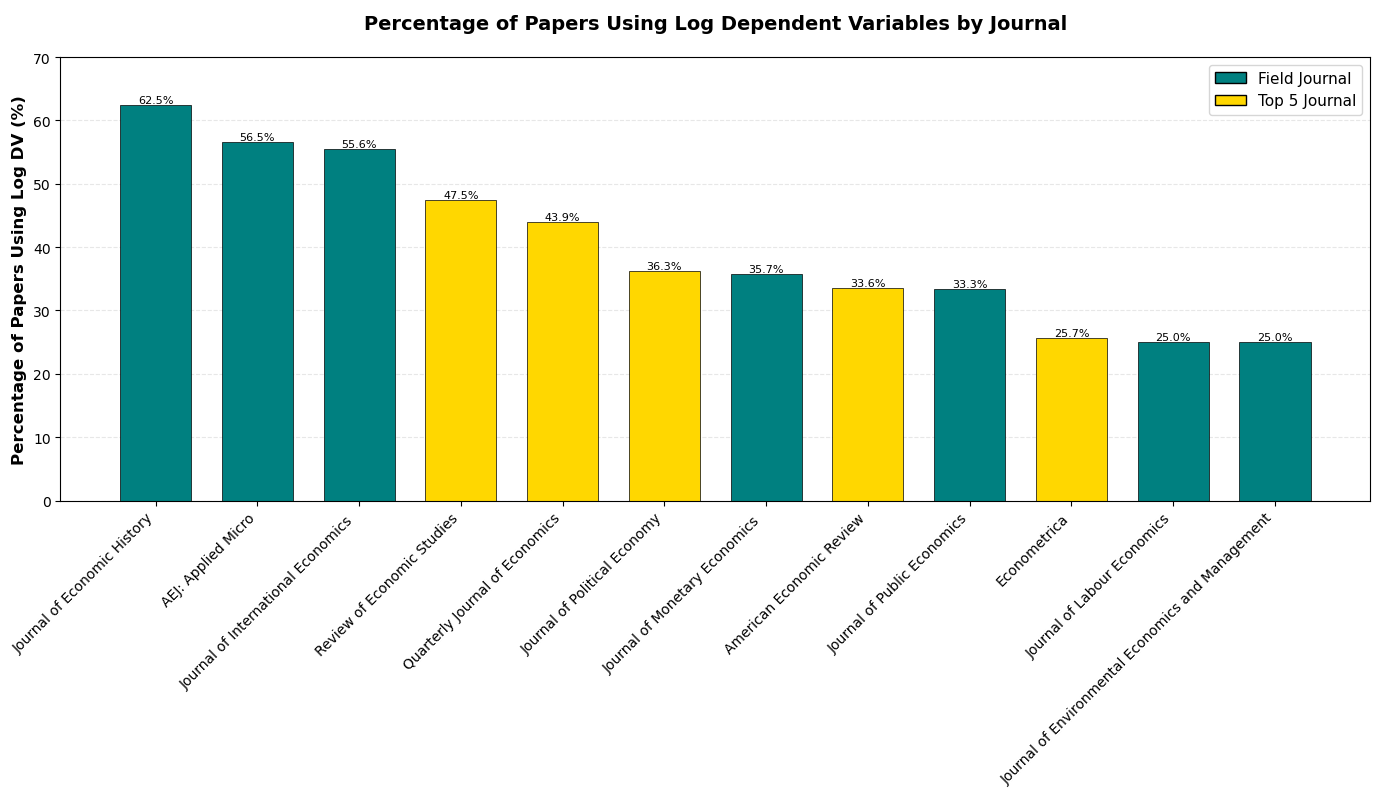}

}

\caption{\label{fig-percentage-of-papers}}

\end{figure}%

After identifying articles published in the Top 5 journals that used log
dependent variable regressions we collected the replication package and
associated datasets for these papers. The inclusion criteria for
replicating were as follows: (0) in our list of top 5 journal articles
published in 2020 with a log-DV regression, (1) publicly available or
easily available data, (2) replication package exists, (3) not
structural estimations. After the data was collected and results of
log-linear or log-log regressions were correctly re-estimated, we
re-estimated results that were not estimated using an instrumental
variable identification strategy.

With these exclusion criteria for non-IV results, we were left with 60
results from 29 papers. For each result we estimated the OLS coefficient
from the log-linear OLS model, the PPML coefficient from the exponential
model, and the doubly robust elasticity of the arithmetic mean (DREAM)
estimator which is consistent for the average semi-elasticity of
\(\mathbb{E}[y_{i}| x_{i}]\). Our re-estimation exercise shows 3 facts:
(1) PPML and OLS estimates are statistically significantly different
from each other in 27\% of the results . If there was full independence
between \(x_i\) and the error term \(v_i\) in the log-linear model, then
the coefficient \(\beta\) would equal the exponential model coefficient
\(\gamma\), because the arithmetic and geometric elasticities would be
equal. (2) OLS and DREAM report statistically significantly different
estimates 32\% of the time. Of the 19 results that are statistically
significantly different from the OLS results, the median absolute
percent difference was 109\%, and the 25th percentile was 67\%. (3) PPML
and DREAM report statistically significantly different estimates 30\% of
the time. Of the 18 statistically significantly different results, the
median absolute percent difference was 75\%, and the 25th percentile was
54\%.\footnote{To appropriately compare elasticity or semi-elasticity
  estimates, we need to treat discrete treatment variables of interest
  differently from continuous ones. Specifically for continuous
  treatment variables, we can run a direct test of the OLS coefficient
  against the doubly-robust semi-elasticity \(\psi\), in this case the
  null is \(H_0: \beta_{log-lin}=\psi\) . For discrete treatment
  variables, the appropriate semi-elasticity comparison is
  \(H_0:e^{\beta_{log-lin}}-1=\psi\) . This holds for elasticities too,
  and works the same for exponential coefficients estimated with PPML.}

Figure~\ref{fig-retransformation-bias-ppml} shows the results for the
difference between OLS and PPML. If there is full independence between
the regressors and the error terms then the OLS coefficient and the PPML
coefficient will converge to the same constant. We recognize that this
exercise of comparing OLS and PPML coefficients is not a definitive test
for independence or reliability of one model over another. We do this to
demonstrate that OLS and PPML coefficients are statistically
significantly different in many settings, and therefore we encourage
researchers to consider that heterogeneity may be a reason for why we
observe significant differences between these two estimates. Here we may
be comparing results from two mutually exclusive models. Therefore, we
encourage readers to only take this as an illustrative example of the
differences of results coming from two estimation strategies of
corresponding (but often conflicting) regression models.

\begin{figure}

\centering{

\includegraphics[width=1\linewidth,height=\textheight,keepaspectratio]{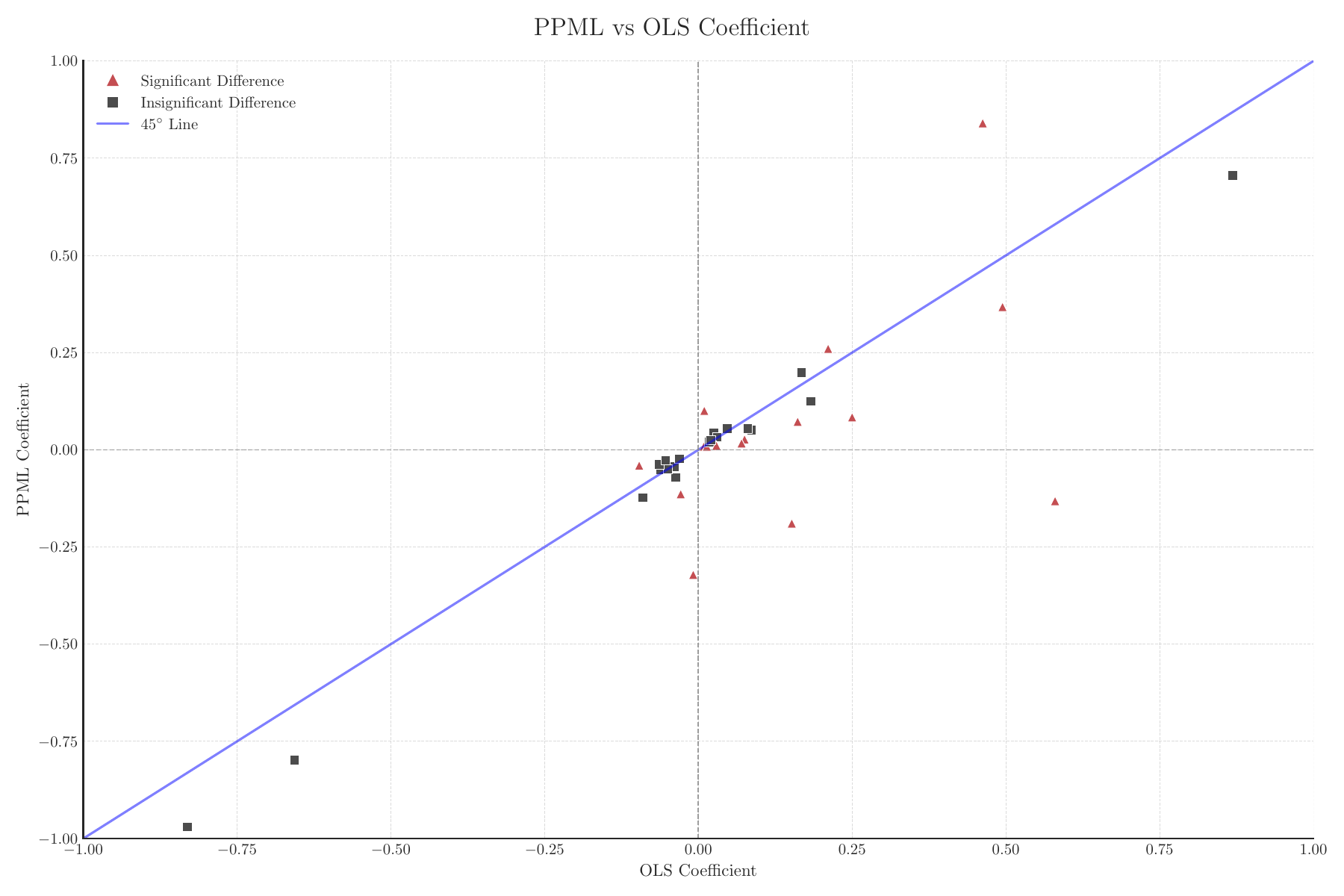}

}

\caption{\label{fig-retransformation-bias-ppml}}

\end{figure}%

Differences in elasticity estimates obtained from PPML and OLS
estimation strategies motivate us to consider using our
specification-robust Neyman Orthogonal estimator. We first compare
elasticities obtained from OLS estimation with our semi-elasticity
estimates from DREAM. Figure Figure~\ref{fig-elast-vs-ols-2d} presents
these results. Here, the OLS elasticity estimate is plotted on the
x-axis, and the DREAM elasticity estimate is plotted on the y-axis. The
parity line is presented for reference. The red triangles represent
estimates where OLS is statistically significantly different from DREAM,
and the black squares represent those where the difference is not
statistically significant. Note, any dot in the second or fourth
quadrant represents a sign reversal -meaning a positive semi-elasticity
was estimated with OLS, and a negative one was estimated with DREAM, or
vice versa. Of the 50 estimations, 18 of the estimates obtained using
OLS are statistically significantly different from those obtained using
DREAM. Of these 18 that are statistically significantly different, the
median absolute bias is 96.4\% of the OLS coefficient. If we restrict
results to those where the absolute value of the average elasticity
estimate is less than 5, then we have 42 total results, 16 of which are
statistically significantly different from the OLS estimator. Of these
16 results, the median absolute bias is 64.97\% of the OLS coefficient.
Our preferred bias estimate removes results with absolute DREAM
estimates greater than 5, as these extreme values likely reflect
instability in the machine learning nuisance parameter estimation.

\begin{figure}

\centering{

\includegraphics[width=1\linewidth,height=\textheight,keepaspectratio]{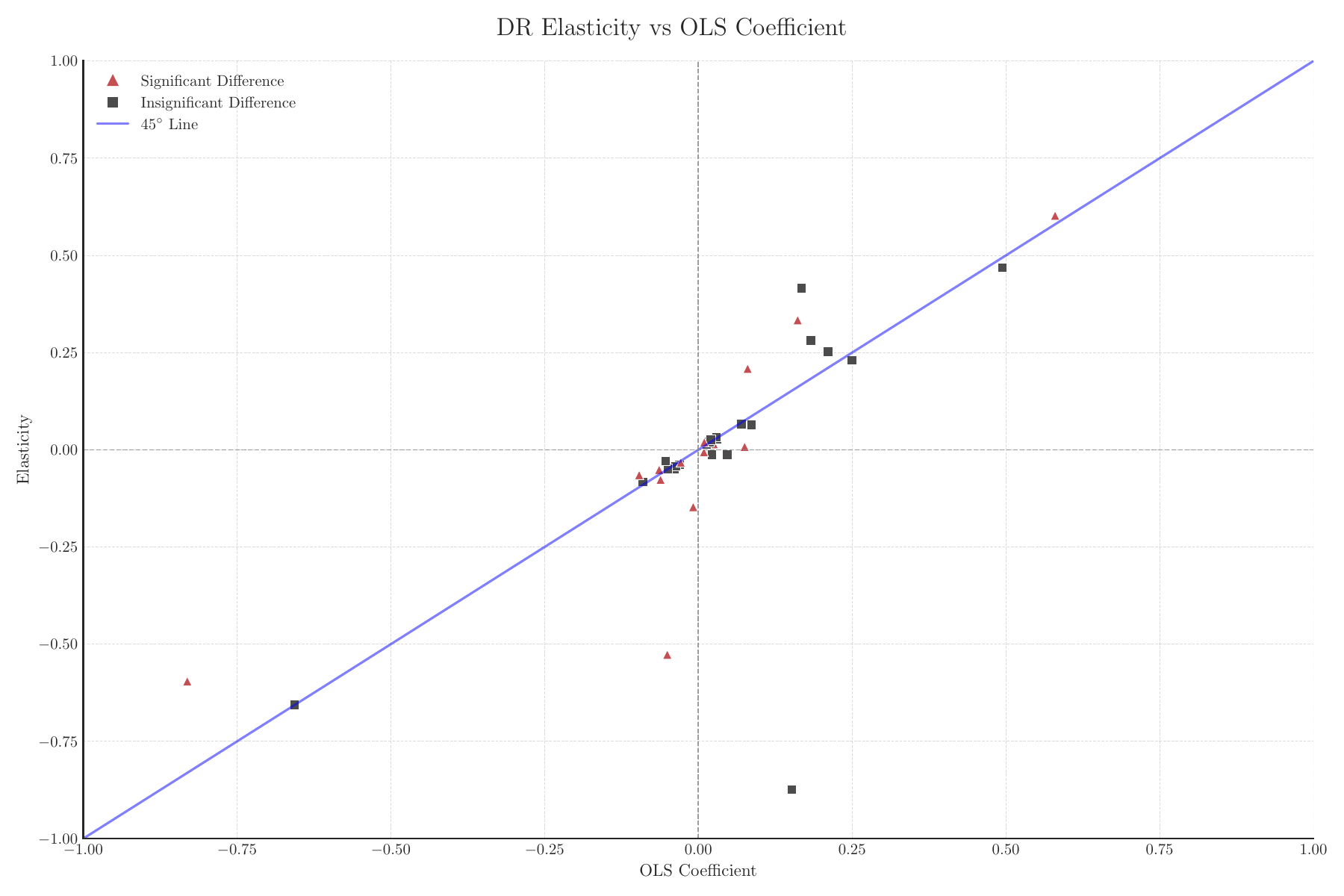}

}

\caption{\label{fig-elast-vs-ols-2d}}

\end{figure}%

For reference, we also compare elasticity estimates from PPML with those
from DREAM. While no papers in our sample used PPML.
Figure~\ref{fig-elast-vs-ppml-2d} presents these results. Here the
x-axis is the PPML estimate and the y-axis is the DREAM estimate. PPML
and DREAM report statistically significantly different estimates 38\% of
the time. There are two statistically significant sign reversals. As in
Figure~\ref{fig-elast-vs-ols-2d}, we can see that standard errors for
this test matter as well. Some estimates sit close to the parity line
but are statistically significantly different from each other, while
others are relatively distant from the parity line but are not
statistically significantly different.

\begin{figure}

\centering{

\includegraphics[width=1\linewidth,height=\textheight,keepaspectratio]{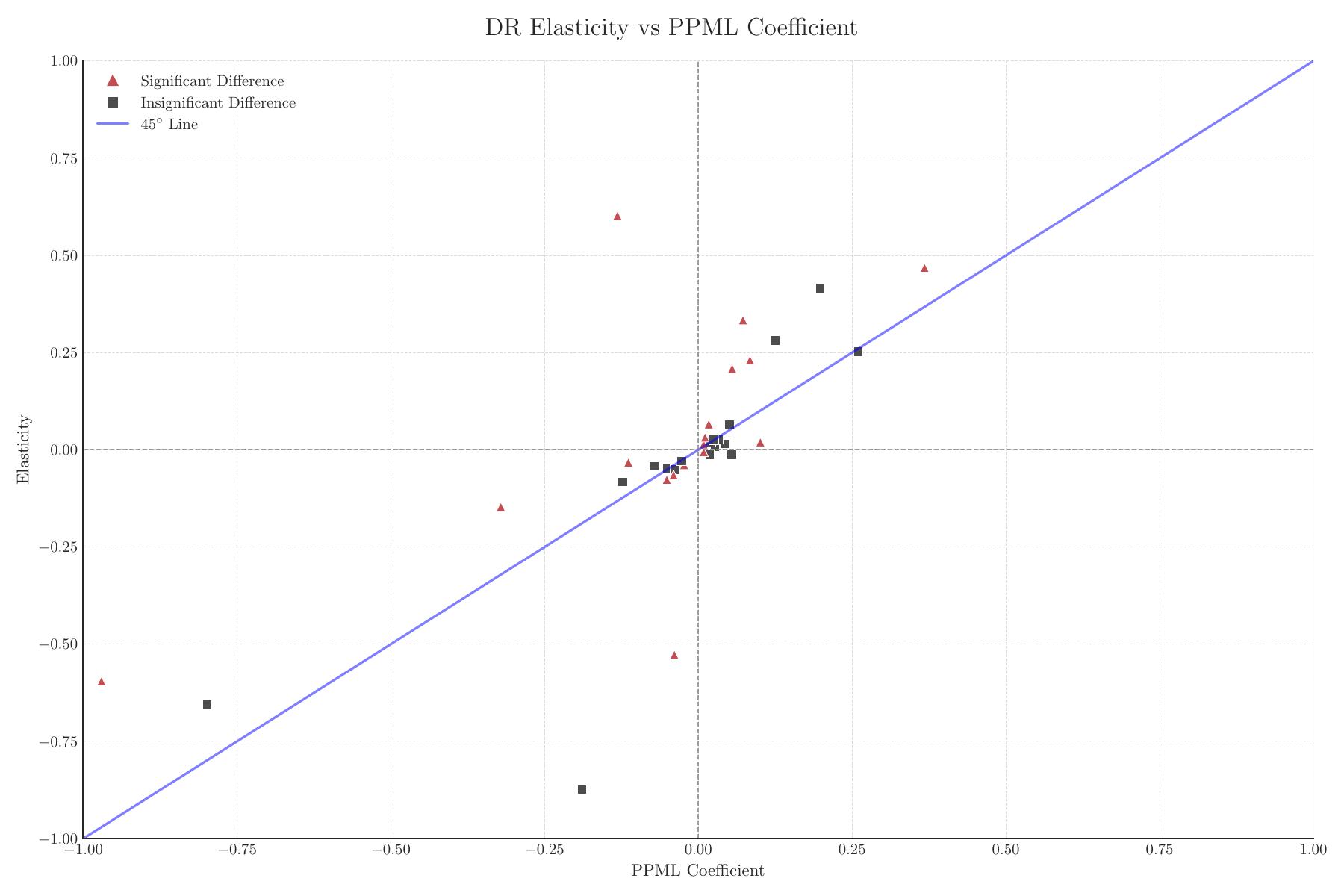}

}

\caption{\label{fig-elast-vs-ppml-2d}}

\end{figure}%

Table~\ref{tbl-differences} provides an aggregate summary of the
differences between the OLS estimator and DREAM, and the PPML estimator
and DREAM. Of the 50 results we test, 32 have no statistically
significant difference with OLS. Of those that are statistically
significantly different, 8 have an increase in estimate with DREAM (in
absolute terms), and 6 have a decrease (in absolute terms). Four
estimates reverse sign between OLS and DREAM. The comparison between
DREAM and PPML is similar. There are 31 estimates that are not
statistically significantly different, and 19 that are. Of the
statistically significantly different, 13 increase in absolute value
with DREAM, 4 decrease in absolute value, and 2 reverse signs.

\begin{longtable}[]{@{}
  >{\raggedright\arraybackslash}p{(\linewidth - 10\tabcolsep) * \real{0.0588}}
  >{\raggedright\arraybackslash}p{(\linewidth - 10\tabcolsep) * \real{0.1324}}
  >{\raggedright\arraybackslash}p{(\linewidth - 10\tabcolsep) * \real{0.2059}}
  >{\raggedright\arraybackslash}p{(\linewidth - 10\tabcolsep) * \real{0.2206}}
  >{\raggedright\arraybackslash}p{(\linewidth - 10\tabcolsep) * \real{0.2206}}
  >{\raggedright\arraybackslash}p{(\linewidth - 10\tabcolsep) * \real{0.1618}}@{}}
\caption{}\label{tbl-differences}\tabularnewline
\toprule\noalign{}
\begin{minipage}[b]{\linewidth}\raggedright
\end{minipage} & \begin{minipage}[b]{\linewidth}\raggedright
No Change
\end{minipage} & \begin{minipage}[b]{\linewidth}\raggedright
Sig. Different
\end{minipage} & \begin{minipage}[b]{\linewidth}\raggedright
Effect Increase
\end{minipage} & \begin{minipage}[b]{\linewidth}\raggedright
Effect Decrease
\end{minipage} & \begin{minipage}[b]{\linewidth}\raggedright
Sign Change
\end{minipage} \\
\midrule\noalign{}
\endfirsthead
\toprule\noalign{}
\begin{minipage}[b]{\linewidth}\raggedright
\end{minipage} & \begin{minipage}[b]{\linewidth}\raggedright
No Change
\end{minipage} & \begin{minipage}[b]{\linewidth}\raggedright
Sig. Different
\end{minipage} & \begin{minipage}[b]{\linewidth}\raggedright
Effect Increase
\end{minipage} & \begin{minipage}[b]{\linewidth}\raggedright
Effect Decrease
\end{minipage} & \begin{minipage}[b]{\linewidth}\raggedright
Sign Change
\end{minipage} \\
\midrule\noalign{}
\endhead
\bottomrule\noalign{}
\endlastfoot
OLS & 32 & 18 & 8 & 6 & 4 \\
PPML & 31 & 19 & 13 & 4 & 2 \\
\end{longtable}

Figure~\ref{fig-elast-vs-both-ci} present a direct comparison between
the DREAM, PPML, and OLS for the same result. The graph on the left
provides the point estimate and standard errors for estimates of the
difference in elasticities estimated from DREAM and OLS. These are
ordered in descending order, from most positive to most negative
difference. Note, you can't observe sign flips on this graph because the
points are just the differences in estimates. The results on the
right-hand panel correspond to those on the left. For example, the top
result shown for the Elasticity - OLS panel is the same result estimated
in the top of the Elasticity - PPML panel. Here we can see that
sometimes DREAM is statistically significantly different from both
estimates, while other times it is only statistically significantly
different from one or no estimates. DREAM disagrees with both PPML and
OLS in 11 out of the 50 regressions we re-estimated (9 out of 42 if we
exclude absolute DREAM estimates greater than 5). DREAM is not
statistically significantly different from OLS and PPML in 24 out of the
50 results (or 18 out of 42 in the restricted sample). DREAM is
significantly different from OLS and not from PPML in 7 results (no
difference with restricted sample). DREAM is significantly different
from PPML and not from OLS in 8 results (again, no difference with
restricted sample).

\begin{figure}

\centering{

\includegraphics[width=1\linewidth,height=\textheight,keepaspectratio]{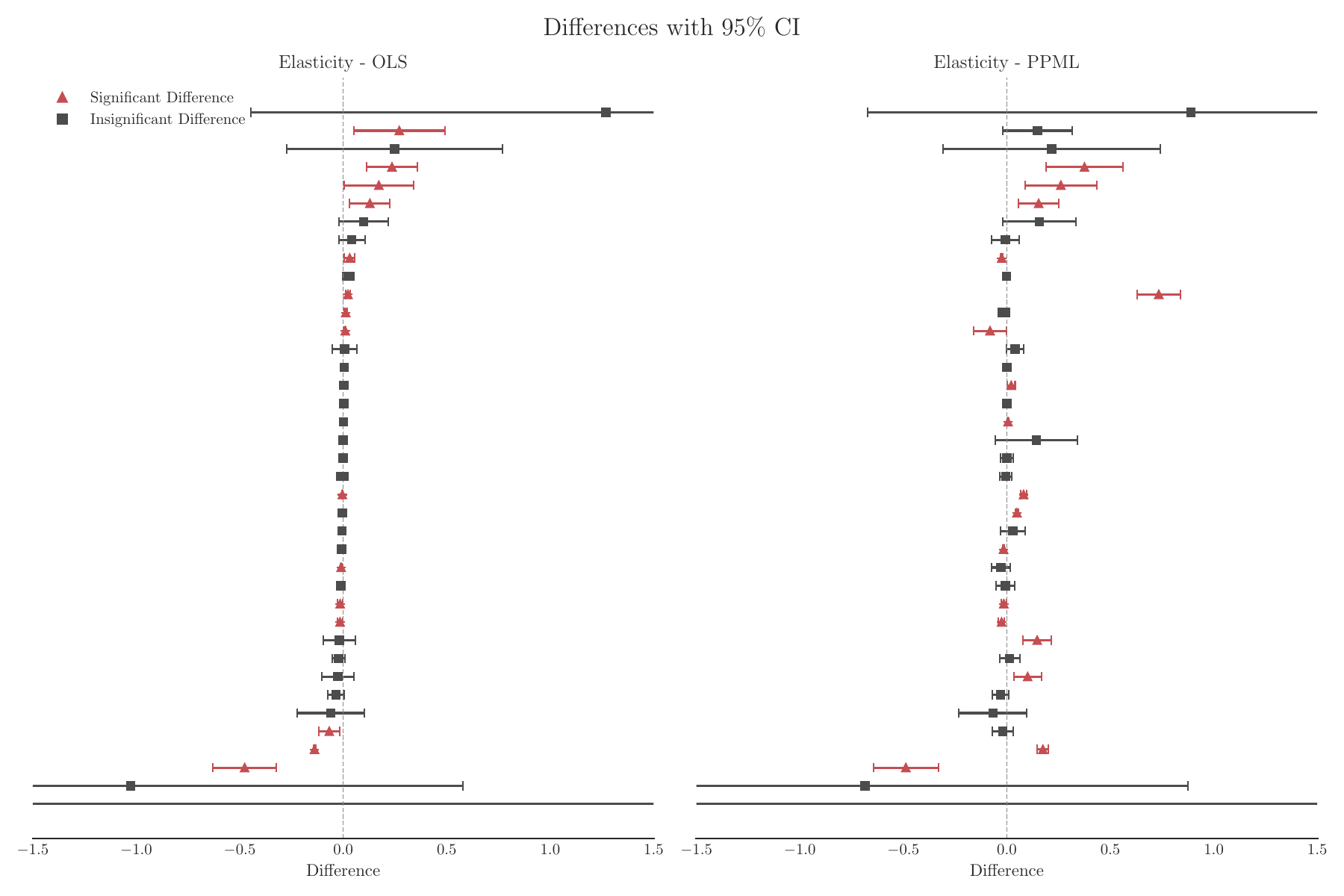}

}

\caption{\label{fig-elast-vs-both-ci}}

\end{figure}%

In total, this re-estimation of results from papers published in Top 5
journals in the year 2020 provide several takeaways. First, we show that
PPML and OLS estimates are frequently statistically significantly
different from each other, providing some suggestive evidence that
dependence may exist between the covariates and the error terms. Second,
we show that our estimator, which is consistent for the true
semi-elasticity, disagrees with OLS 38\% of the time, and disagrees with
PPML about 36\%. We see sign flips in 8\% of re-estimated results when
comparing DREAM to OLS, and 4\% when comparing DREAM to PPML.
Table~\ref{tbl-differences} shows that the differences between OLS and
DREAM, and PPML and DREAM estimates are remarkably consistent. When
directly comparing results from papers, we see that there are a mix of
outcomes, sometimes DREAM disagrees with both PPML and OLS in 22\% of
estimations, it disagrees with neither in 48\% of estimations, and
disagrees with one but not the other in 30\% of estimations.

\subsection{IV re-estimations}\label{iv-re-estimations}

Similarly to the OLS and DREAM case we re-estimate 19 results from 12
papers using the DREAM with control functions approach. Of these, 13
were significantly different at the 5\% level from the naive 2SLS
estimate. Six papers showed no difference when estimated using 2SLS or
IV-DREAM. Eight showed reduced effect size, of which five exhibited sign
reversal. Five showed increased effect size.

Among the statistically significant differences at the 5\% level, the
median absolute change was about 71\% of the original estimate.

\begin{figure}

\centering{

\includegraphics[width=1\linewidth,height=\textheight,keepaspectratio]{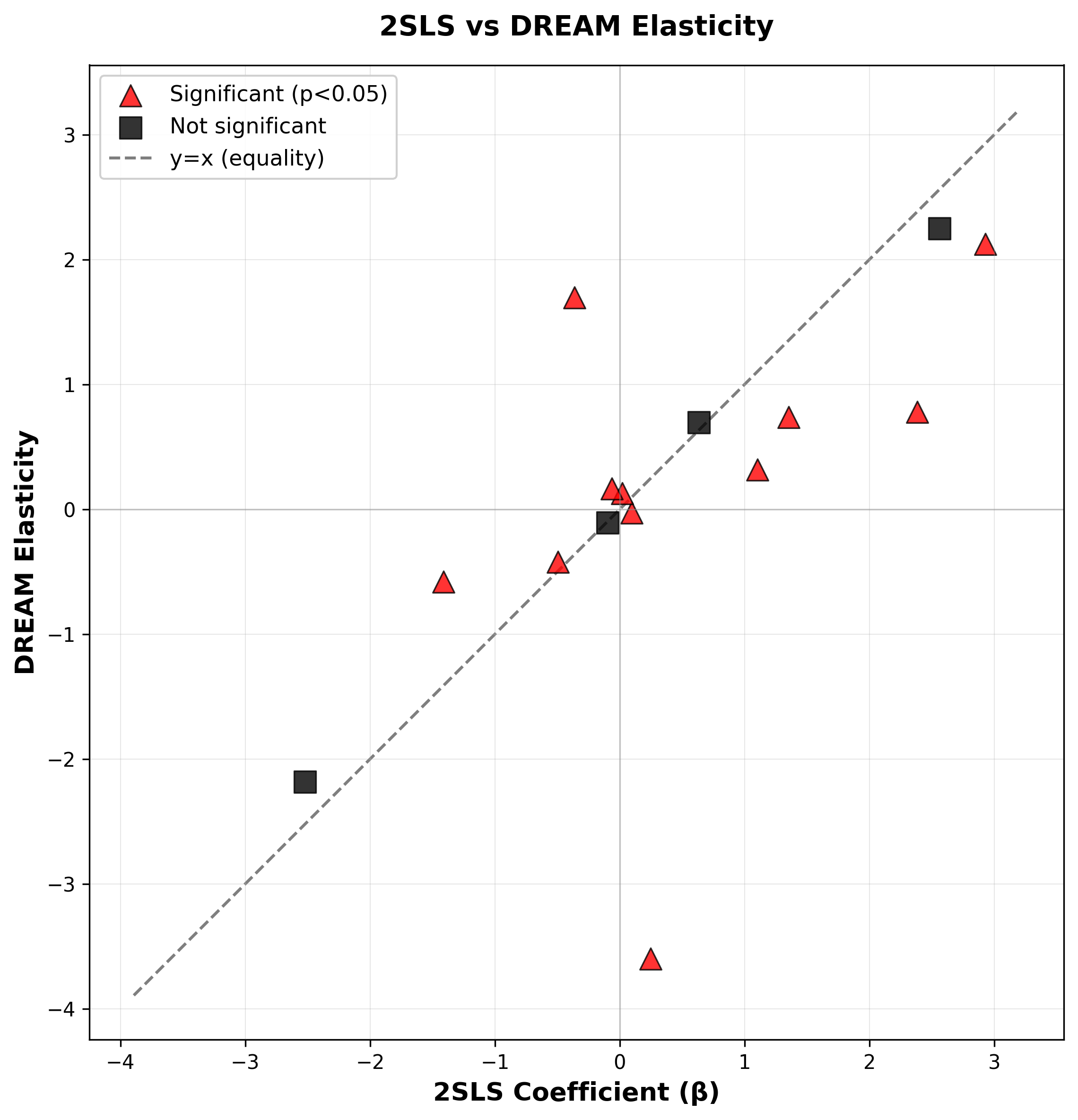}

}

\caption{\label{fig-2SLS-IV-DRNO}}

\end{figure}%

\section{Conclusion}\label{sec-conclusion}

Economists often use log dependent variable regressions to estimate
percentage changes. The coefficient estimates are consistent for the
geometric mean (semi-) elasticities. We give several examples where
economists specify estimands of arithmetic mean elasticities. We show
that arithmetic and geometric mean elasticities can be very different
when individual elasticities are heterogenous. We present them as a
sub-class of power-mean elasticities, and show that their difference can
be characterized using a weighted aggregation of the heterogenous
responses. We justify the estimation of power-mean elasticities by
appeal to a decision problem axiomatisation. We provide an estimator for
the average wedge between arithmetic and geometric mean elasticities.
Using this estimator, we replicate 50 papers, and find a median
difference of 67\%.

Future work could focus on eliciting the whole elasticity distribution,
clarifying the relationship between aggregation bias and
retransformation bias and further specifying which economic situations
demand which estimands.

Future drafts of this paper will include examples with randomised
controlled trials and discrete variables, for which we have the maths
but not yet the words.

\newpage
\bibliography{log-linear}

\newpage
\appendix

\section{Proofs}\label{proofs}

\subsection{\texorpdfstring{Proof of
\hyperref[myprp-noneqhet]{Proposition~\ref*{myprp-noneqhet}}}{Proof of }}\label{sec-noneqhetproof}

From \hyperref[mylem-represent]{Lemma~\ref*{mylem-represent}},

\[
\varepsilon_\phi(x)
=
\frac{\mathbb{E}\!\left[ Y(x)^\phi \, \varepsilon(\omega) \right]}
     {\mathbb{E}\!\left[ Y(x)^\phi \right]}.
\]

Using the outcome specification
\(\log Y(x;\omega)=a(\omega)+\varepsilon(\omega)\log x\), we can write
\[
Y(x;\omega)^\phi
=
\exp\!\left(\phi a(\omega)\right)\,x^{\phi \varepsilon(\omega)}.
\]

Substituting, \[
\varepsilon_\phi(x)
=
\frac{\mathbb{E}\!\left[ \varepsilon(\omega)\exp(\phi a(\omega))x^{\phi \varepsilon(\omega)} \right]}
     {\mathbb{E}\!\left[ \exp(\phi a(\omega))x^{\phi \varepsilon(\omega)} \right]}.
\]

For \(\phi = 0\), the weights collapse and \[
\varepsilon_0 = \mathbb{E}[\varepsilon(\omega)],
\] which is constant in \(x\).

For \(\phi \neq 0\), the weights depend on \(x\) through the term
\(x^{\phi \varepsilon(\omega)}\). If \(\varepsilon(\omega)\) is
non-degenerate, then different values of \(\varepsilon(\omega)\) are
differentially weighted as \(x\) changes. As a result,
\(\varepsilon_\phi(x)\) varies with \(x\) and cannot coincide with the
constant value \(\varepsilon_0\) except on a set of \(x\) of measure
zero.

Therefore, under heterogeneous responsiveness, aggregate elasticities
differ across aggregation rules and are generically non-constant
functions of \(x\) for all \(\phi \neq 0\).

\subsection{\texorpdfstring{Proof of
\hyperref[myprp-wedgedecomp]{Proposition~\ref*{myprp-wedgedecomp}}}{Proof of }}\label{sec-wedgedecompproof}

By definition of the power mean, for \(\phi \neq 0\), \[
\log M_\phi(x) = \frac{1}{\phi}\log \mathbb{E}[Y(x)^\phi] = \frac{1}{\phi}\log G_\phi(x).
\] Differentiating with respect to \(\log x\) yields \[
\varepsilon_\phi(x) \equiv \frac{d\log M_\phi(x)}{d\log x}
= \frac{1}{\phi}\frac{d \log G_\phi(x)}{d \log x}.
\]

Under \hyperref[asm-randomcoef]{Assumption~\ref*{asm-randomcoef}}, \[
Y(x;\omega)^\phi = \exp(\phi a(\omega))\,x^{\phi \varepsilon(\omega)}
= x^{\phi \varepsilon_0}\,\exp(\phi a(\omega))\,x^{\phi \tilde\varepsilon(\omega)}.
\] Therefore, \[
G_\phi(x)=\mathbb{E}[Y(x)^\phi]
= x^{\phi \varepsilon_0}\,\mathbb{E}\!\left[\exp(\phi a(\omega))\,x^{\phi \tilde\varepsilon(\omega)}\right].
\] Taking logs and differentiating, \[
\frac{1}{\phi}\frac{d\log G_\phi(x)}{d\log x}
=
\varepsilon_0
+
\frac{1}{\phi}\frac{d}{d\log x}\log \mathbb{E}\!\left[\exp(\phi a(\omega))\,x^{\phi \tilde\varepsilon(\omega)}\right],
\] which is the stated identity. The \(\phi=1\) case follows
immediately.

\subsection{\texorpdfstring{Proof of
\hyperref[myprp-representation]{Proposition~\ref*{myprp-representation}}}{Proof of }}\label{sec-proof-rep}

\textbf{Step 1: Individual Model (Axiom 1).}

From \hyperref[axm-scale-invariance]{Axiom~\ref*{axm-scale-invariance}},
\(Y(\lambda x) = g(\lambda) Y(x)\). Differentiating w.r.t. \(\lambda\)
and setting \(\lambda=1\) gives the differential equation
\(\frac{\partial \log Y}{\partial \log x} = g'(1) \equiv \varepsilon(\omega)\).
The solution is
\(Y(x; \omega) = \alpha(\omega) x^{\varepsilon(\omega)}\).

\textbf{Step 2: Quasi-Arithmetic Mean (Axioms 2 \& 3).}

The Kolmogorov-Nagumo Theorem states that any continuous aggregator
satisfying strict monotonicity, symmetry (Axiom 2), and decomposability
(Axiom 3) is a Quasi-Arithmetic Mean:
\(M(Y) = \psi^{-1}\left( \mathbb{E}[\psi(Y)] \right)\) for some
continuous monotonic \(\psi\).

\textbf{Step 3: Power Mean (Axiom 4).}

Imposing Homotheticity,
\(\psi^{-1}( \mathbb{E}[\psi(kY)] ) = k \psi^{-1}( \mathbb{E}[\psi(Y)] )\).
The only continuous solutions (up to affine transformation) are
\(\psi(y) = y^\phi\) (\(\phi \neq 0\)) and \(\psi(y) = \log y\)
(\(\phi = 0\)). Combining Step 1 and 3 yields the result.

\subsection{\texorpdfstring{Proof of
\hyperref[myprp-sufficiency]{Proposition~\ref*{myprp-sufficiency}}}{Proof of }}\label{sec-proof-sufficiency}

The DM solves \(\max_x \mathcal{V}(x, M_\phi(x))\). The First Order
Condition is:

\[\frac{\partial \mathcal{V}}{\partial x} + \frac{\partial \mathcal{V}}{\partial M_\phi} \frac{d M_\phi}{d x} = 0.\]

By definition,
\(\frac{d M_\phi}{d x} = M_\phi(x) \frac{\varepsilon_\phi(x)}{x}\).
Substituting this into the FOC:

\[\frac{\partial \mathcal{V}}{\partial x} + \frac{\partial \mathcal{V}}{\partial M_\phi} \frac{M_\phi(x)}{x} \varepsilon_\phi(x) = 0.\]

The terms \(\frac{\partial \mathcal{V}}{\partial x}\) and
\(\frac{\partial \mathcal{V}}{\partial M_\phi}\) depend on the policy
design and the aggregate level \(M_\phi\). The only additional term
depending on the micro-structure of heterogeneity is
\(\varepsilon_\phi(x)\). Thus, knowledge of \(F(\omega)\) is not
required; the scalar \(\varepsilon_\phi(x)\) is sufficient.

\subsection{Neyman Orthogonality Proof}\label{sec-noproof}

This appendix establishes the integration-by-parts identity assumed in
the main text and proves Neyman orthogonality of the score function
Theorem~\ref{thm-neymanorthogonality}.

\subsubsection{Integration by Parts
Identity}\label{integration-by-parts-identity}

We assume a conditional integration-by-parts identity: for any
sufficiently regular function \(q(x,z)\),

\[ \mathbb{E}\left[\nabla_x q(X,Z)\right] = -\mathbb{E}\left[s_0(X,Z) , q(X,Z)\right], \tag{IBP} \]

where \(s_0(x,z) := \nabla_x f_0(x \mid z) / f_0(x \mid z)\) is the
density score.

Sufficient conditions include (i) conditional compact support of
\(X \mid Z = z\) with \(q(\cdot, z)\) vanishing on the boundary, or (ii)
tail conditions ensuring the boundary term from integration by parts
vanishes.

\subsubsection{Moment Condition}\label{moment-condition}

\textbf{Lemma A.1.}
\(\mathbb{E}[\phi(W; \theta_0, \beta_0, \gamma_0, m_0, f_0)] = 0.\)

\emph{Proof.} By definition of \(\theta_0\),

\[ \mathbb{E}\left[\beta_0 + \frac{m_0'(X,Z)}{m_0(X,Z)} - \theta_0\right] = 0. \]

For the correction term, since \(\mathbb{E}[\delta \mid X, Z] = 0\),

\[ \mathbb{E}\left[\alpha_0(X,Z)(e^u - m_0(X,Z)) \mid X, Z\right] = \alpha_0(X,Z) , \mathbb{E}[\delta \mid X, Z] = 0. \]

Taking unconditional expectations yields the result.

\subsubsection{Proof of Neyman
Orthogonality}\label{proof-of-neyman-orthogonality}

\begin{proof}
We prove orthogonality with respect to each nuisance component. By
linearity of Gateaux derivatives, orthogonality in each direction
separately implies orthogonality for joint perturbations.

\paragraph{\texorpdfstring{Orthogonality with Respect to
\(m\)}{Orthogonality with Respect to m}}\label{orthogonality-with-respect-to-m}

Let \(m_t(x,z) = m_0(x,z) + t , h(x,z)\) be an admissible path with
\(m_t > 0\) for small \(|t|\). Define

\[ \Psi(t) := \mathbb{E}\left[\phi(W; \theta_0, \beta_0, \gamma_0, m_t, f_0)\right]. \]

We show \(\Psi'(0) = 0\).

\textbf{Step 1: Differentiate the \(m'/m\) term.} Since
\(m_t'/m_t = \nabla_x \log m_t\), we have
\[ \frac{d}{dt} \mathbb{E}\left[\frac{m_t'(X,Z)}{m_t(X,Z)}\right]\bigg|_{t=0} = \frac{d}{dt} \mathbb{E}\left[\nabla_x \log m_t(X,Z)\right]\bigg|_{t=0}. \]
Exchanging differentiation with respect to \(t\) and \(\nabla_x\)
(justified under regularity conditions in A2):
\[ = \mathbb{E}\left[\nabla_x \left(\frac{d}{dt} \log m_t(X,Z)\bigg|_{t=0}\right)\right]. \]
Now, \(\frac{d}{dt} \log(m_0 + th)\big|_{t=0} = h/m_0\), so
\[ \frac{d}{dt} \mathbb{E}\left[\frac{m_t'(X,Z)}{m_t(X,Z)}\right]\bigg|_{t=0} = \mathbb{E}\left[\nabla_x \left(\frac{h(X,Z)}{m_0(X,Z)}\right)\right]. \tag{D1} \]

\textbf{Step 2: Differentiate the correction term.}

With \(f_0\) fixed, \(\alpha_t(x,z) = -s_0(x,z)/m_t(x,z)\). The
correction term is

\[ \alpha_t(X,Z)(e^u - m_t(X,Z)) = -s_0(X,Z)\left(\frac{e^u}{m_t(X,Z)} - 1\right). \]

Differentiating and using \(\frac{d}{dt}(1/m_t)|_{t=0} = -h/m_0^2\),

\[ \frac{d}{dt}\left[-s_0(X,Z)\left(\frac{e^u}{m_t(X,Z)} - 1\right)\right]\bigg|_{t=0} = s_0(X,Z)  e^u  \frac{h(X,Z)}{m_0(X,Z)^2}. \]

Taking expectations and using \(\mathbb{E}[e^u \mid X, Z] = m_0(X,Z)\),

\[ \frac{d}{dt} \mathbb{E}\left[\alpha_t(X,Z)(e^u - m_t(X,Z))\right]\bigg|_{t=0} = \mathbb{E}\left[s_0(X,Z)  \frac{h(X,Z)}{m_0(X,Z)}\right]. \tag{D2} \]

\textbf{Step 3: Cancellation via integration by parts.}

Applying (IBP) with \(q(X,Z) = h(X,Z)/m_0(X,Z)\),

\[ \mathbb{E}\left[\nabla_x\left(\frac{h(X,Z)}{m_0(X,Z)}\right)\right] = -\mathbb{E}\left[s_0(X,Z) , \frac{h(X,Z)}{m_0(X,Z)}\right]. \tag{D3} \]

Combining (D1), (D2), and (D3):

\[ \Psi'(0) = -\mathbb{E}\left[s_0 , \frac{h}{m_0}\right] + \mathbb{E}\left[s_0 , \frac{h}{m_0}\right] = 0. \]

\paragraph{\texorpdfstring{Orthogonality with Respect to
\(f\)}{Orthogonality with Respect to f}}\label{orthogonality-with-respect-to-f}

Holding \(m_0\) fixed, the only dependence on \(f\) is through
\(\alpha\). For any path \(f_t\) through \(f_0\) with corresponding
\(\alpha_t\),

\[ \frac{d}{dt} \mathbb{E}\left[\alpha_t(X,Z)(e^u - m_0(X,Z))\right]\bigg|_{t=0} = \mathbb{E}\left[\dot{\alpha}_0(X,Z) \cdot \mathbb{E}[\delta \mid X, Z]\right] = 0, \]

where \(\dot{\alpha}_0\) denotes the Gateaux derivative at \(t = 0\).
The final equality uses \(\mathbb{E}[\delta \mid X, Z] = 0\).

\paragraph{\texorpdfstring{Orthogonality with Respect to
\((\beta, \gamma)\)}{Orthogonality with Respect to (\textbackslash beta, \textbackslash gamma)}}\label{orthogonality-with-respect-to-beta-gamma}

The parameters \((\beta, \gamma)\) enter only through
\(u(\beta, \gamma) = \log Y - \beta^\top X - \gamma^\top Z\).
Differentiating the correction term with respect to \(\beta\) in
direction \(b\):

\[ \frac{d}{dt} \mathbb{E}\left[\alpha_0(X,Z)\left(e^{u(\beta_0 + tb, \gamma_0)} - m_0(X,Z)\right)\right]\bigg|_{t=0} = -\mathbb{E}\left[\alpha_0(X,Z) , e^u , b^\top X\right]. \]

By iterated expectations and \(\mathbb{E}[e^u \mid X, Z] = m_0(X,Z)\):

\[ \mathbb{E}\left[\alpha_0(X,Z) , e^u , b^\top X\right] = \mathbb{E}\left[\alpha_0(X,Z) , m_0(X,Z) , b^\top X\right] = -\mathbb{E}\left[s_0(X,Z) , b^\top X\right], \]

using the definition \(\alpha_0 = -s_0/m_0\).

Meanwhile, the \(\beta\) term in the score contributes \(+b\) to the
derivative. Applying (IBP) with \(q(X,Z) = b^\top X\):

\[ \mathbb{E}[b] = b = -\mathbb{E}[s_0(X,Z) , b^\top X], \]

which is the identity \(\mathbb{E}[\nabla_x(b^\top X)] = b\). Hence the
two contributions cancel.

An analogous argument applies to \(\gamma\), completing the proof.
\end{proof}

\subsection{\texorpdfstring{Proof of
\hyperref[myprp-iv-impossibility]{Proposition~\ref*{myprp-iv-impossibility}}}{Proof of }}\label{proof-of-myprp-iv-impossibility}

This proof uses \(x\) instead of \(\log x\) but is not affected by this
change.

\begin{proof}
IV identifies \(\beta_0\) from
\(\text{Cov}(\log Y, Z)/\text{Cov}(X, Z)\). The arithmetic mean
semi-elasticity is
\[ \theta(x) = \frac{\mathbb{E}[\varepsilon_i , e^{a_i + \varepsilon_i x}]}{\mathbb{E}[e^{a_i + \varepsilon_i x}]}, \]
which depends on the full joint distribution of \((a_i, \varepsilon_i)\)
beyond first moments. The IV assumptions constrain only
\(\mathbb{E}[\varepsilon_i \mid Z]\) and \(\mathbb{E}[a_i \mid Z]\),
leaving higher moments --- and the joint dependence between \(a_i\) and
\(\varepsilon_i\) --- unrestricted.

We construct two observationally equivalent models with different
arithmetic semi-elasticities. Let \(X \in {0,1}\) and set
\(\alpha_0 = 0\) without loss.

\textbf{Model A} (homogeneous semi-elasticity, selection on levels). All
individuals share \(\varepsilon_i = \beta_0\). The intercept satisfies
\(a_i \mid X = 0 \sim N(0, 1)\) and
\(a_i \mid X = 1 \sim N(0, 1 + \sigma^2)\), with
\(\mathbb{E}[a_i \mid Z] = 0\). Heteroskedasticity arises from
selection: high-variance individuals choose treatment.

\textbf{Model B} (heterogeneous semi-elasticity, no selection). The
intercept and semi-elasticity are \(a_i \sim N(0, 1)\) and
\(\varepsilon_i \sim N(\beta_0, \sigma^2)\), mutually independent and
independent of \((X, Z)\).

Both models produce the same conditional distributions of observables.
At \(X = 0\): Model A gives \(\log Y \mid X!=!0 \sim N(0, 1)\); Model B
gives \(\log Y \mid X!=!0 = a_i \sim N(0, 1)\). At \(X = 1\): Model A
gives \(\log Y \mid X!=!1 = a_i + \beta_0\) with
\(a_i \mid X!=!1 \sim N(0, 1 + \sigma^2)\), so
\(\log Y \mid X!=!1 \sim N(\beta_0, 1 + \sigma^2)\); Model B gives
\(\log Y \mid X!=!1 = a_i + \varepsilon_i\) with
\(a_i + \varepsilon_i \sim N(\beta_0, 1 + \sigma^2)\) by independence.
Both satisfy \(\mathbb{E}[\varepsilon_i \mid Z] = \beta_0\) and
\(\mathbb{E}[a_i \mid Z] = 0\).

Yet the models imply different arithmetic mean semi-elasticities. In
Model A, \(\varepsilon_i = \beta_0\) for all individuals, so
\(\theta_A(x) = \beta_0\) for all \(x\). In Model B, \(a_i\) and
\(\varepsilon_i\) are independent normals, so
\(a_i + \varepsilon_i x \sim N(\beta_0 x,, 1 + \sigma^2 x^2)\). The
moment generating function gives
\[ \log \mathbb{E}[Y_B(x)] = \beta_0 x + \tfrac{1}{2}(1 + \sigma^2 x^2), \]
so \(\theta_B(x) = \beta_0 + \sigma^2 x\). At any \(x \neq 0\),
\(\theta_B(x) \neq \theta_A(x)\).
\end{proof}

\subsection{\texorpdfstring{Proof of
\hyperref[myprp-iv-exact-nonid]{Proposition~\ref*{myprp-iv-exact-nonid}}}{Proof of }}\label{proof-of-myprp-iv-exact-nonid}

This proof uses \(x\) instead of \(\log x\) but is not affected by this
change.

\begin{proof}
We begin with a lemma that provides compactly supported elements of the
null space of the Radon transform.

\begin{lemma}[Lemma (Directional derivative null
space)]\protect\hypertarget{lem-directional-derivative}{}\label{lem-directional-derivative}

For fixed \(x\), let \(\mathcal{R}_x\) denote the Radon projection along
lines \({(a, \varepsilon) : a + \varepsilon x = \ell}\). For any
\(\Psi \in C_c^\infty(\mathbb{R}^2)\), the directional derivative
\(D_x[\Psi] := (-x, \partial_a + \partial_\varepsilon)\Psi\) satisfies
\(\mathcal{R}_x[D_x\Psi] = 0\). Moreover,
\(D_x[\Psi] \in C_c^\infty(\mathbb{R}^2)\) with
\(\text{supp}(D_x[\Psi]) \subset \text{supp}(\Psi)\).

\end{lemma}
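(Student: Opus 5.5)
The plan is to make the Radon projection explicit as a line integral and then recognise $D_x\Psi$ as an exact derivative along each line, so that the projection vanishes by the fundamental theorem of calculus. I read the notation $(-x,\partial_a+\partial_\varepsilon)$ as the derivative of $\Psi$ in the direction $(-x,1)$ of the $(a,\varepsilon)$-plane, that is, $D_x\Psi = -x\,\partial_a\Psi + \partial_\varepsilon\Psi$. This is the direction that leaves $a+\varepsilon x$ unchanged. If the intended reading differs, the same argument applies with whatever vector is tangent to the lines $\{a+\varepsilon x=\ell\}$.

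\textbf{Parametrise the lines.} For fixed $x$ and $\ell\in\mathbb{R}$, the line $\{(a,\varepsilon): a+\varepsilon x=\ell\}$ is traced out by $\varepsilon\mapsto(\ell-x\varepsilon,\varepsilon)$. Hence the Radon projection of an integrable $g$ is
\[
\mathcal{R}_x[g](\ell)=c_x\int_{\mathbb{R}} g(\ell-x\varepsilon,\varepsilon)\,d\varepsilon .
\]
Here $c_x=\sqrt{1+x^2}$ if the projection is taken with respect to arc length, or $c_x=1$ if it is the pushforward density of $a+\varepsilon x$. Either way $c_x$ is a positive constant that plays no role in showing the projection is zero. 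The pushforward version is the one relevant to the observable density $f(\ell,x\mid z)$ discussed after Proposition~\ref{myprp-iv-exact-nonid}.

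\textbf{Chain rule along the line.} Fix $\ell$ and set $\gamma(\varepsilon):=\Psi(\ell-x\varepsilon,\varepsilon)$. By the chain rule,
\[
\gamma'(\varepsilon)=-x\,\partial_a\Psi(\ell-x\varepsilon,\varepsilon)+\partial_\varepsilon\Psi(\ell-x\varepsilon,\varepsilon)=(D_x\Psi)(\ell-x\varepsilon,\varepsilon).
\]
Therefore $\mathcal{R}_x[D_x\Psi](\ell)=c_x\int_{\mathbb{R}}\gamma'(\varepsilon)\,d\varepsilon$.

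\textbf{Apply the fundamental theorem of calculus.} Because $\Psi$ has compact support, the set of $\varepsilon$ for which $(\ell-x\varepsilon,\varepsilon)\in\operatorname{supp}\Psi$ is bounded; it lies inside the bounded projection of $\operatorname{supp}\Psi$ onto the $\varepsilon$-axis. So $\gamma$ vanishes outside some interval $[-R,R]$, and
\[
\int_{\mathbb{R}}\gamma'=\gamma(R)-\gamma(-R)=0.
\]
This holds for every $\ell$, which gives $\mathcal{R}_x[D_x\Psi]=0$.

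\textbf{Regularity and support.} $D_x$ is a first-order differential operator with constant coefficients. It therefore maps $C^\infty$ to $C^\infty$. It also cannot enlarge supports: on the open set where $\Psi$ vanishes identically, all partial derivatives of $\Psi$ vanish too. Hence $\operatorname{supp}(D_x\Psi)\subset\operatorname{supp}(\Psi)$, which is compact, so $D_x\Psi\in C_c^\infty(\mathbb{R}^2)$.

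\textbf{Main obstacle.} None of these steps is hard. The only real point of care is fixing the normalisation and parametrisation of $\mathcal{R}_x$ consistently with how the projection is used in Proposition~\ref{myprp-iv-exact-nonid}, and checking that the differentiated direction is genuinely tangent to the projection lines. I would state the parametrisation explicitly at the start of the proof so that this is transparent.
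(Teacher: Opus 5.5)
Your proposal is correct and follows essentially the same route as the paper: parametrise each line $\{a+\varepsilon x=\ell\}$ by $s\mapsto(\ell-sx,s)$, identify $D_x\Psi$ as the derivative of $\Psi$ along that parametrisation via the chain rule, and conclude by the fundamental theorem of calculus and compact support; smoothness and support containment follow because $D_x$ is a constant-coefficient differential operator. Your remarks on the normalising constant $c_x$ and on why supports cannot grow are harmless refinements of the paper's briefer argument.
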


\begin{proof}
Parametrise the line \({a + \varepsilon x = \ell}\) by
\(s \mapsto (\ell - sx,, s)\). The chain rule gives
\(\frac{d}{ds}\Psi(\ell - sx, s) = -x,\partial_a\Psi + \partial_\varepsilon\Psi = (D_x\Psi)(\ell - sx, s)\),
so
\[ \mathcal{R}_x[D_x\Psi] = \int_{-\infty}^{\infty} \frac{d}{ds}\Psi(\ell - sx,, s), ds = \big[\Psi(\ell - sx,, s)\big]_{s = -\infty}^{s = +\infty} = 0, \]
where the boundary terms vanish by the compact support of \(\Psi\). That
\(D_x[\Psi]\) is smooth and supported inside \(\text{supp}(\Psi)\) is
immediate from the fact that differentiation preserves smoothness and
cannot enlarge the support.
\end{proof}

\emph{Proof of the Proposition.}

Let
\(k(x \mid a, \varepsilon, z) := f(X = x \mid a_i = a, \varepsilon_i = \varepsilon, Z = z)\)
denote the selection mechanism. The structural equation
\(\log Y = a + \varepsilon X\) and full independence imply that the
observable joint density satisfies
\[ f(\ell, x \mid z) = \int k(x \mid \ell - sx, s, z), f(\ell - sx, s), ds \tag{$\star$} \]
for all \((\ell, x, z)\), where \(\ell = \log y\). Define
\(h(a, \varepsilon, x, z) := k(x \mid a, \varepsilon, z), f(a, \varepsilon)\),
so (\(\star\)) states that for each fixed \((x,z)\), the observable
\(f(\ell, x \mid z)\) is the Radon projection of
\(h(\cdot, \cdot, x, z)\) along lines
\({(a, \varepsilon) : a + \varepsilon x = \ell}\). Normalisation of
\(k\) gives \(\int h(a, \varepsilon, x, z), dx = f(a, \varepsilon)\) for
all \((a, \varepsilon, z)\). The target
\(\theta(x) = \mathbb{E}[\varepsilon_i e^{a_i + \varepsilon_i x}] / \mathbb{E}[e^{a_i + \varepsilon_i x}]\)
depends on \(f\) alone: writing
\(M(x) := \mathbb{E}[e^{a + \varepsilon x}] = \int e^{a + \varepsilon x} f(a, \varepsilon), da, d\varepsilon\),
we have \(\theta(x) = M'(x)/M(x)\).

We construct two models \((f_0, k_0)\) and \((f_1, k_1)\) that produce
identical observables but yield \(\theta_1(x_0) \neq \theta_0(x_0)\).

\emph{Construction of the perturbation \(\delta f\).} We work with a
product form
\(\delta f(a, \varepsilon) = \rho(a), \varphi(\varepsilon)\). Let
\(\rho \in C_c^\infty(\mathbb{R})\) with \(\rho > 0\) on some interval
and \(\int \rho, da > 0\). We seek
\(\varphi \in C_c^\infty(\mathbb{R})\) satisfying four conditions:
\[ \int \varphi = 0, \quad \int \varepsilon, \varphi = 0, \quad \int e^{\varepsilon x_0}, \varphi \neq 0, \quad \frac{\int \varepsilon, e^{\varepsilon x_0}, \varphi}{\int e^{\varepsilon x_0}, \varphi} \neq \theta_0(x_0). \tag{1} \]
Such \(\varphi\) exists. Choose four smooth nonnegative bumps
\(\eta_1, \ldots, \eta_4 \in C_c^\infty(\mathbb{R})\) with disjoint
supports centred at distinct points
\(\varepsilon_1 < \varepsilon_2 < \varepsilon_3 < \varepsilon_4\) in a
compact interval \(I\), and set \(\varphi = \sum_{j=1}^4 c_j \eta_j\).
The first two conditions form a homogeneous \(2 \times 4\) linear system
in \((c_1, \ldots, c_4)\) whose coefficient matrix
\([\int \eta_j;\ \int \varepsilon,\eta_j]\) has rank 2 (since the
\(\varepsilon_j\) are distinct), so the solution space is 2-dimensional.
Define \(D_j := \int e^{\varepsilon x_0} \eta_j > 0\) and
\(N_j := \int \varepsilon, e^{\varepsilon x_0} \eta_j\). On the 2D
solution space, the third condition excludes the hyperplane
\({\sum c_j D_j = 0}\) (a line in 2D), and the fourth excludes
\({\sum c_j N_j = \theta_0(x_0) \sum c_j D_j}\) (another line). Since
\(N_j/D_j\) is approximately \(\varepsilon_j\) (the
\(e^{\varepsilon x_0}\)-weighted mean of \(\varepsilon\) under
\(\eta_j\)) and these are distinct, not all \(N_j/D_j\) equal
\(\theta_0(x_0)\), so the two excluded lines are distinct and do not
cover the 2D space. A point avoiding both gives the required \((c_j)\),
hence \(\varphi\).

Set
\(\delta f(a, \varepsilon) := \rho(a), \varphi(\varepsilon) \in C_c^\infty(\mathbb{R}^2)\).
By the product structure,
\(\int \delta f = (\int \rho)(\int \varphi) = 0\),
\(\int \varepsilon, \delta f = (\int \rho)(\int \varepsilon, \varphi) = 0\),
and
\(\int e^{a + \varepsilon x_0}, \delta f = (\int e^a \rho)(\int e^{\varepsilon x_0} \varphi) \neq 0\)
since both factors are nonzero. Define \(f_1 := f_0 + t, \delta f\).
Since \(f_0 > 0\) is continuous on \(\mathbb{R}^2\) and
\(\text{supp}(\delta f)\) is compact,
\(\inf_{\text{supp}(\delta f)} f_0 > 0\) by the extreme value theorem,
so \(f_1 > 0\) for
\(|t| < \inf_{\text{supp}(\delta f)} f_0 / |\delta f|_\infty\).

\emph{Compactly supported null-space construction.} We construct a
family \({n_x}\) satisfying three properties:
\(n_x \in \ker \mathcal{R}_x\) for each \(x\),
\(\int n_x, dx = \delta f\), and each \(n_x\) is compactly supported in
\((a, \varepsilon)\).

Define the antiderivative
\(\Phi(\varepsilon) := \int_{-\infty}^{\varepsilon} \varphi(s), ds\).
Since \(\varphi \in C_c^\infty\) and \(\int \varphi = 0\), the function
\(\Phi\) is compactly supported: below \(\text{supp}(\varphi)\) the
integrand vanishes, and above \(\text{supp}(\varphi)\) the total
integral is zero. Moreover \(\Phi \in C_c^\infty(\mathbb{R})\) with
\(\Phi' = \varphi\) and
\(\text{supp}(\varphi) \subset \text{supp}(\Phi)\).

Set
\(\Psi_0(a, \varepsilon) := \rho(a), \Phi(\varepsilon) \in C_c^\infty(\mathbb{R}^2)\),
with
\(\text{supp}(\Psi_0) = \text{supp}(\rho) \times \text{supp}(\Phi) =: S\).
Let \(p \in C_c^\infty(\mathbb{R})\) be a (possibly signed) function
with \(\int p = 1\), \(\int x, p(x), dx = 0\), and compact support
\(I_p \subset \text{supp}(X)\). Such \(p\) exists on any interval of
positive length: take two smooth bumps at distinct points in \(I_p\) and
solve the \(2 \times 2\) system for the coefficients. Define
\[ n_x(a, \varepsilon) := p(x), D_x[\Psi_0](a, \varepsilon) = p(x)\big[-x, \rho'(a), \Phi(\varepsilon) + \rho(a), \varphi(\varepsilon)\big]. \tag{2} \]

The three required properties now follow. For the null-space property:
for each fixed \(x\), \(n_x = p(x) \cdot D_x[\Psi_0]\), and the Lemma
gives \(\mathcal{R}_x[D_x\Psi_0] = 0\); by linearity,
\(\mathcal{R}_x[n_x] = p(x) \cdot 0 = 0\). For the integral condition:
integrating (2) over \(x\),
\[ \int n_x, dx = -\rho'(a),\Phi(\varepsilon) \underbrace{\int x, p(x), dx}_{= 0} + \rho(a),\varphi(\varepsilon) \underbrace{\int p(x), dx}_{= 1} = \rho(a), \varphi(\varepsilon) = \delta f(a, \varepsilon). \]
Fubini is justified since \(p\) has compact support and the integrand is
smooth with support in the fixed compact set \(S\) for each \(x\). For
the compact support property: since
\(\text{supp}(\rho') \subset \text{supp}(\rho)\) and
\(\text{supp}(\varphi) \subset \text{supp}(\Phi)\), both terms in (2)
are supported in \(S\) for each \(x\), and \(n_x = 0\) for
\(x \notin I_p\).

\emph{Observational equivalence.} Define
\(h_1(a, \varepsilon, x, z) := h_0(a, \varepsilon, x, z) + t, n_x(a, \varepsilon)\).
The perturbation \(n_x\) does not depend on \(z\). The Radon transform
is linear and \(n_x \in \ker \mathcal{R}_x\), so
\[ \mathcal{R}_x[h_1] = \mathcal{R}_x[h_0] + t \cdot \underbrace{\mathcal{R}_x[n_x]}_{= 0} = f(\ell, x \mid z) \]
for every \((\ell, x, z)\); this is exact, not a first-order
approximation. The marginal condition holds since
\(\int h_1, dx = \int h_0, dx + t \int n_x, dx = f_0 + t, \delta f = f_1\).

\emph{Positivity.} The function \(n_x(a, \varepsilon)\) is supported in
the compact set \(S \times I_p\) (in \((a, \varepsilon, x)\)-space) and
does not depend on \(z\). On \(S \times I_p\): for each
\((a, \varepsilon, x) \in S \times I_p\), the hypothesis
\(\inf_{z} k_0(x \mid a, \varepsilon, z) > 0\) together with
\(f_0(a, \varepsilon) > 0\) gives
\(\inf_{z} h_0(a, \varepsilon, x, z) > 0\). Since this infimum is a
positive lower-semicontinuous function on the compact set
\(S \times I_p\), it attains a positive minimum:
\[ c := \inf_{(a,\varepsilon) \in S,; x \in I_p,; z} h_0(a, \varepsilon, x, z) > 0. \]
Let
\(\overline{M} := \sup_{(a,\varepsilon,x)} |n_x(a, \varepsilon)| < \infty\)
(continuous on a compact set). Then for all \(z\),
\[ h_1(a, \varepsilon, x, z) = h_0(a, \varepsilon, x, z) + t, n_x(a, \varepsilon) \geq c - |t| \overline{M} > 0 \quad \text{for } |t| < c/\overline{M}. \]
Outside \(S \times I_p\): either \((a, \varepsilon) \notin S\) or
\(x \notin I_p\), so \(n_x(a, \varepsilon) = 0\) and \(h_1 = h_0 > 0\)
(since \(f_0 > 0\) and \(k_0 > 0\)). Setting \(k_1 := h_1 / f_1\): since
\(h_1 > 0\) and \(f_1 > 0\), \(k_1\) is well-defined and positive, and
\(\int k_1, dx = (1/f_1)\int h_1, dx = f_1/f_1 = 1\) (using that \(f_1\)
does not depend on \(x\)), so \(k_1\) is a valid conditional density.

\emph{Change in \(\theta\).} The two models \((f_0, k_0)\) and
\((f_1, k_1)\) produce identical observables. It remains to show
\(\theta_1(x_0) \neq \theta_0(x_0)\). Since \(\theta(x) = M'(x)/M(x)\)
where \(M(x) = \int e^{a + \varepsilon x} f, da, d\varepsilon\), write
\(M_j = \int e^{a+\varepsilon x_0} f_j\) and
\(N_j = \int \varepsilon, e^{a+\varepsilon x_0} f_j\) for \(j = 0, 1\),
so \(\theta_j(x_0) = N_j / M_j\). Then
\[ \theta_1(x_0) = \theta_0(x_0) \iff \frac{N_0 + t,\delta N}{M_0 + t,\delta M} = \frac{N_0}{M_0} \iff \frac{\delta N}{\delta M} = \frac{N_0}{M_0} = \theta_0(x_0), \]
where
\(\delta M = \int e^{a+\varepsilon x_0}\delta f = (\int e^a \rho)(\int e^{\varepsilon x_0}\varphi)\)
and
\(\delta N = \int \varepsilon, e^{a + \varepsilon x_0}\delta f = (\int e^a \rho)(\int \varepsilon, e^{\varepsilon x_0}\varphi)\),
so
\(\delta N / \delta M = \int \varepsilon, e^{\varepsilon x_0}\varphi / \int e^{\varepsilon x_0}\varphi\).
The fourth condition in (1) was precisely
\(\delta N/\delta M \neq \theta_0(x_0)\), so
\(\theta_1(x_0) \neq \theta_0(x_0)\).
\end{proof}

\begin{refremark}[Contrast with triangularity]
Under a triangular first stage \(X = g(Z, V)\) with
\(V \perp\!\!\perp Z\) and \(g\) monotone in \(V\), observing
\((X = x, Z = z)\) pins down \(V = g^{-1}(z, x)\). Conditional on
\(V = v\), the selection mechanism is degenerate (\(k\) places all mass
at \(x = g(z, v)\)), and the observable at each \((x, z)\) is a Radon
projection of the single conditional density
\(f(a, \varepsilon \mid V!=!v)\). For each fixed \(v\), varying \(z\)
produces projections at multiple angles (since \(x = g(z, v)\) varies
with \(z\)); provided \(g(\cdot, v)\) generates a sufficient range of
treatment values, the Radon transform is injective on
\(L^2(\mathbb{R}^2)\), so \(f(a, \varepsilon \mid V\!=\!v)\) is
identified for each \(v\), and integrating over \(f_V\) recovers
\(\mathbb{E}[e^{a + \varepsilon x}]\). Without triangularity,
\(h(\cdot, \cdot, x, z)\) is a different unknown function for each
\((x, z)\) --- we observe one projection of each, rather than many
projections of one --- and the construction above shows that
reconstruction fails.

\label{rem-triangularity-exact}

\end{refremark}

\subsection{\texorpdfstring{Proof of
\hyperref[myprp-control-function-identification]{Proposition~\ref*{myprp-control-function-identification}}}{Proof of }}\label{proof-of-myprp-control-function-identification}

This proof uses \(x\) instead of \(\log x\) but is not affected by this
change.

\begin{proof}
Under the triangular first stage \(X = g(Z, V)\) with \(g\) strictly
monotone in \(V\) and \(V \perp\!\!\perp Z\), the control variable
\(V = g^{-1}(Z, X)\) is identified from observed data. The exogeneity
condition \((a_i, \varepsilon_i) \perp\!\!\perp Z \mid V\) ensures that,
conditional on \(V = v\), the structural equation
\(\log Y = a + \varepsilon X\) generates observables whose distribution
depends only on \(f(a, \varepsilon \mid V = v)\) and the deterministic
assignment \(X = g(Z, v)\).

For each fixed \(v\), varying \(Z = z\) moves \(X = g(z, v)\) across the
set \(\mathcal{X}(v) := {g(z, v) : z \in \text{supp}(Z)}\). At each such
\(x \in \mathcal{X}(v)\), the conditional density of \(\log Y\) given
\(X = x, V = v\) is the Radon projection of
\(f(a, \varepsilon \mid V = v)\) along lines
\({(a, \varepsilon) : a + \varepsilon x = \ell}\):
\[ f_{\log Y \mid X, V}(\ell \mid x, v) = \int f(\ell - sx, s \mid V = v), ds = \mathcal{R}_x[f(\cdot, \cdot \mid V = v)]). \]

By the sufficient support assumption, \(\mathcal{X}(v)\) has nonempty
interior for each \(v\). The Radon transform on \(\mathbb{R}^2\) is
injective on \(L^2(\mathbb{R}^2)\) when projections are available at a
set of angles (equivalently, slopes \(x\)) containing an interval. Since
\(f(\cdot, \cdot \mid V = v) \in L^2(\mathbb{R}^2)\) (implied by the
finiteness of \(\mathbb{E}[e^{a + \varepsilon x}]\)), the conditional
density \(f(a, \varepsilon \mid V = v)\) is identified for each \(v\).

With \(f(a, \varepsilon \mid V = v)\) identified for each \(v\) and
\(f_V\) identified from the marginal of \(V\),
\[ \mathbb{E}[e^{a + \varepsilon x}] = \int \left(\int e^{a + \varepsilon x}, f(a, \varepsilon \mid V = v), da, d\varepsilon\right) f_V(v), dv \]
is identified for all \(x\) in the support. Similarly,
\(\mathbb{E}[\varepsilon, e^{a + \varepsilon x}]\) is identified by the
same integration with the additional \(\varepsilon\) factor in the inner
integral. Hence
\(\theta(x) = \mathbb{E}[\varepsilon, e^{a + \varepsilon x}] / \mathbb{E}[e^{a + \varepsilon x}]\)
is identified.

Note that the argument places no restriction on how the joint of
\((a_i, \varepsilon_i)\) depends on \(V\). In particular,
\(\varepsilon_i\) may be correlated with \(V\) and identification is
unaffected because \(f(a, \varepsilon \mid V = v)\) is recovered in full
for each \(v\).
\end{proof}

\subsection{Binary Treatment PPML is equivalent to Manning 1998
correction estimator.}\label{sec-manningppml}

\begin{proof}
Take a finite sample

\[
(y_{i}, x_{i})
\]

With \(x_i\) binary.

Consider the PPML estimator of \(y_i\) on a constant and \(x_i\)

Model:

\[
\mathbb{E}[y_{i}|x_{i}] = e^{ \gamma_{0} + \gamma_{1} x_{i}}
\]

It stands to reason that

\[
e^{\hat{\gamma_{0}}} = \bar{y_{0}}
\]

and

\[
e^{ \hat{\gamma}_{1}} = \frac{\bar{y_{1}}}{\bar{y_{0}}}
\]

where \[
\bar{y_{i}} = \frac{1}{n}\sum_{i=1}^n y_{i} \mathcal{I}({x_{i} = i})
\]

Now, consider the OLS estimator of \(\log{y_{i}}\) on a constant and
\(x_i\).

\[
\mathbb{E}[\log{y_{i}}|x_{i}] = \beta_{0} + \beta_{1} x_{i}
\]

By method of moments:

\[
\hat{\beta_{0}} = \bar{\log{y_{0}}}
\] and \[
\hat{\beta_{1}} = \bar{\log{y_{1}}} - \bar{\log{y_{0}}}
\]

where \[
\bar{\log{y_{i}}} = \frac{1}{n} \sum_{i=1}^{n} \log{y_{i}} \mathcal{I}(x_{i} = i)
\]

Then exponentiated OLS residuals are given by

\[
e^{\hat{u_{i}}} = \begin{cases}
y_{i}e^{ -\bar{\log y_{0} }} \ \text{if } x_{i} = 0  \\ 
y_{i}e^{ -\bar{\log y_{1} }} \ \text{if } x_{i} = 1
\end{cases} 
\]

The corrected estimator then is

\[
e^{ \hat{\beta_{1}}} \frac{\bar{y_{1}} e^{ -\bar{\log{y_{1}}} }}{\bar{y_{0}}e^{\bar{-\log{y_{0}}}}} - 1 = e^{\bar{\log{y_{1}}} - \bar{\log{y_{0}}}} \frac{\bar{y_{1}} e^{ -\bar{\log{y_{1}}} }}{\bar{y_{0}}e^{\bar{-\log{y_{0}}}}} - 1 = \frac{\bar{y_{1}}}{\bar{y_{0}}} - 1 =  e^{\hat{\gamma_{1}}} - 1
\]

Thus, PPML in levels is equivalent to the
\citet{manningLoggedDependentVariable1998} correction estimator with a
binary variable.
\end{proof}

\subsection{Proofs for IV Debiased Estimators}\label{sec-iv-proof}

\subsubsection{Preliminaries}\label{preliminaries}

We establish Neyman orthogonality for the score
\[ \phi(W; \theta, \eta) = \beta + \frac{\mu'(X)}{\mu(X)} - \theta - \frac{\omega(X,V) S_X(X)}{\mu(X)}\left(e^u - m(X,V)\right) - \lambda(Z)^\top(X - g(Z)), \]
where \(\eta = (\beta, \rho, m, g, \omega, S_X, \lambda)\) collects the
nuisance functions. The residual \(R(X,V) := e^u - m_0(X,V)\) satisfies
\(\mathbb{E}[R \mid X, V] = 0\) by construction.

\subsubsection{Orthogonality with respect to the outcome
nuisance}\label{orthogonality-with-respect-to-the-outcome-nuisance}

\begin{lemma}[]\protect\hypertarget{lem-orthog-m}{}\label{lem-orthog-m}

The score \(\phi\) is Neyman-orthogonal with respect to the outcome
nuisance \(m\), that is,
\(\partial_m \mathbb{E}[\phi]\big|_{m_0}(\delta_m) = 0\) for all
admissible perturbations \(\delta_m\).

\end{lemma}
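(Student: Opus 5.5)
We perturb $m$ along a path $m_t = m_0 + t\,\delta_m$, holding $(\beta_0,\rho_0,g_0,\omega_0,S_X,\lambda_0)$ and $\theta_0$ fixed. We then show that $\frac{d}{dt}\mathbb{E}[\phi(W;\theta_0,\eta_t)]\big|_{t=0}=0$. With $g=g_0$ fixed, $V$ is the true control variable and the term $\phi^g$ does not depend on $m$, so it drops out.

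The key observation is that $m$ enters the score in two places: through the average structural function $\mu_t(x) = \int m_t(x,v) f_V(v)\,dv = \mu_0(x) + t\,H(x)$, where $H(x) := \int \delta_m(x,v) f_V(v)\,dv$, and through the residual $e^u - m_t(X,V)$ in $\phi^m$. Following the structure of the proof of Theorem~\ref{thm-neymanorthogonality} in Section~\ref{sec-noproof}, we compute the two Gateaux derivatives separately and show that they cancel.

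\textbf{Plug-in term.} Since $\mu_t'/\mu_t = \nabla_x \log \mu_t$, we exchange $d/dt$ and $\nabla_x$, as in step D1 of Section~\ref{sec-noproof}. This gives
\[
\frac{d}{dt}\mathbb{E}\!\left[\frac{\mu_t'(X)}{\mu_t(X)}\right]\Big|_{t=0} = \mathbb{E}\!\left[\nabla_x\!\left(\frac{H(X)}{\mu_0(X)}\right)\right].
\]
Next we apply the integration-by-parts identity (IBP), now with respect to the \emph{marginal} density $f_X$ and its score $S_X$. The boundary terms vanish under the tail and compact-support conditions stated there, with the bounded-derivative conditions of Assumption~\ref{asm-iv-regularity}. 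This turns the expression into
\[
-\mathbb{E}\!\left[S_X(X)\,\frac{H(X)}{\mu_0(X)}\right].
\]

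\textbf{Correction term.} We differentiate $-\frac{\omega_0 S_X}{\mu_t}\,(e^u - m_t)$, which produces two pieces.
\begin{itemize}
\item The piece from differentiating $1/\mu_t$ is multiplied by $R = e^u - m_0(X,V)$. Conditioning on $(X,V)$ and using $\mathbb{E}[R\mid X,V]=0$, it vanishes.
\item The piece from differentiating $-m_t$ gives $+\mathbb{E}\!\left[\omega_0(X,V) S_X(X)\,\delta_m(X,V)/\mu_0(X)\right]$.
\end{itemize}
To evaluate the second piece, we condition on $X$ and write the inner expectation over $V\mid X$ as an integral against $f_{V\mid X}(v\mid X)$. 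The density ratio $\omega_0 = f_V/f_{V\mid X}$ then converts this integral into one against $f_V$:
\[
\mathbb{E}\!\left[\omega_0 S_X \delta_m/\mu_0\right] = \mathbb{E}_X\!\left[\frac{S_X(X)}{\mu_0(X)}\int \delta_m(X,v) f_V(v)\,dv\right] = \mathbb{E}\!\left[S_X(X)\,\frac{H(X)}{\mu_0(X)}\right].
\]
This step uses $f_{V\mid X}>0$ wherever $f_V>0$, as assumed in Assumption~\ref{asm-iv-regularity}.

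\textbf{Conclusion.} Adding the two derivatives gives $-\mathbb{E}[S_X H/\mu_0] + \mathbb{E}[S_X H/\mu_0] = 0$, which is the claim.

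The main obstacle is justifying the IBP step for the marginal score, together with the interchange of $d/dt$, $\nabla_x$ and the $v$-integral defining $\mu_t$. For these I would rely on dominated convergence, using that $m_0$ and $\mu_0$ are bounded away from zero (so $\mu_t>0$ for small $|t|$) and that the admissible directions $\delta_m$ are bounded with bounded $x$-derivative. I would also state explicitly the vanishing-boundary condition for $H/\mu_0$ under $f_X$, analogous to the condition stated for (IBP) in Section~\ref{sec-noproof}.
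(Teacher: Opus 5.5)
Your proposal is correct and follows essentially the same route as the paper. You differentiate the plug-in term, integrate by parts against the marginal density $f_X$ to obtain $-\mathbb{E}[S_X(X)H(X)/\mu_0(X)]$, and use the density-ratio identity $\mathbb{E}[\omega_0(X,V)k(X,V)\mid X]=\int k(X,v)f_V(v)\,dv$ to show the correction term contributes exactly the opposite amount. You are also slightly more careful than the paper: it holds the weight $\alpha_0=-\omega_0S_X/\mu_0$ fixed and so silently drops the piece coming from differentiating $1/\mu_t$, which you correctly show vanishes because $\mathbb{E}[R\mid X,V]=0$.
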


\begin{proof}
Define the functional
\[ J(m) := \mathbb{E}_X\left[\frac{\mu'(X)}{\mu(X)}\right], \qquad \mu(x) = \mathbb{E}_V[m(x, V)]. \]
For the perturbation \(m_t = m_0 + t\delta_m\), we have
\(\mu_t(x) = \mu_0(x) + t\Delta\mu(x)\) where
\(\Delta\mu(x) = \mathbb{E}_V[\delta_m(x, V)]\). The Gateaux derivative
is
\[ \partial_m J(\delta_m) = \mathbb{E}_X\left[\frac{\Delta\mu'(X)}{\mu_0(X)} - \frac{\mu_0'(X)\Delta\mu(X)}{\mu_0(X)^2}\right]. \]
Integrating by parts on the first term, with boundary terms vanishing by
regularity, yields
\[ \mathbb{E}_X\left[\frac{\Delta\mu'(X)}{\mu_0(X)}\right] = -\mathbb{E}_X\left[\frac{\Delta\mu(X)}{\mu_0(X)}\left(S_X(X) - \frac{\mu_0'(X)}{\mu_0(X)}\right)\right]. \]
Combining terms gives
\[ \partial_m J(\delta_m) = -\mathbb{E}_X\left[\frac{S_X(X)\Delta\mu(X)}{\mu_0(X)}\right] = -\mathbb{E}_{X,V}\left[\frac{\omega_0(X,V) S_X(X)}{\mu_0(X)}\delta_m(X,V)\right], \]
where the final equality uses the identity
\(\mathbb{E}_X[h(X)\mathbb{E}_V[k(X,V)]] = \mathbb{E}_{X,V}[\omega_0(X,V)h(X)k(X,V)]\).
The Riesz representer for \(m\) is therefore
\(\alpha_0(x,v) = -\omega_0(x,v) S_X(x)/\mu_0(x)\).

The derivative of the debiasing term
\(\mathbb{E}[\alpha_0(X,V)(e^u - m(X,V))]\) with respect to \(m\) equals
\(-\mathbb{E}_{X,V}[\alpha_0(X,V)\delta_m(X,V)]\). The total derivative
is then
\[ \partial_m \mathbb{E}[\phi] = \partial_m J(\delta_m) + \mathbb{E}[\alpha_0 \delta_m] = \mathbb{E}[\alpha_0 \delta_m] - \mathbb{E}[\alpha_0 \delta_m] = 0. \]
\end{proof}

\subsubsection{Orthogonality with respect to the first-stage
nuisance}\label{orthogonality-with-respect-to-the-first-stage-nuisance}

\begin{lemma}[]\protect\hypertarget{lem-orthog-g}{}\label{lem-orthog-g}

The score \(\phi\) is Neyman-orthogonal with respect to the first-stage
nuisance \(g\), that is,
\(\partial_g \mathbb{E}[\phi]\big|_{g_0}(\delta_g) = 0\) for all
admissible perturbations \(\delta_g\).

\end{lemma}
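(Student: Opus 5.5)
The plan is to show that the score $\phi$ is orthogonal in $g$ by construction: the derivative of the uncorrected part of the score in the direction $\delta_g$ is a continuous linear functional of $\delta_g$, and the correction $\phi^g$ is built to cancel it.

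I would perturb the first stage along $g_t = g_0 + t\delta_g$. This induces $V_t = X - g_t(Z) = V - t\,\delta_g(Z)$. The nuisance $g$ enters $\phi$ through five channels:
\begin{enumerate}
\item the residual $u(\beta,\rho) = \log Y - \beta^\top X - \rho^\top V_t$;
\item the arguments $m(X,V_t)$ and $\omega(X,V_t)$;
\item the marginal average $\mu_t(x) = \mathbb{E}[m_0(x,V_t)]$, which enters through the law of $V_t$;
\item the debiasing term $\phi^m$;
\item the first-stage correction $\phi^g = -\lambda(Z)^\top(X - g_t(Z))$.
\end{enumerate}
I would write $\Psi(t) := \mathbb{E}[\phi(W;\theta_0,\beta_0,\rho_0,m_0,g_t,\omega_0,S_X,\lambda_0)]$ and show $\Psi'(0)=0$. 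Throughout I take $\mu$ as recomputed from $m_0$ and $V_t$, as the estimator does.

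\textbf{Step 1: the $\phi^g$ term.} Differentiating $-\lambda_0(Z)^\top(X - g_0(Z) - t\delta_g(Z))$ gives $\mathbb{E}[\lambda_0(Z)^\top\delta_g(Z)]$. I would also record that at the truth $\mathbb{E}[\lambda(Z)^\top V] = \mathbb{E}[\lambda(Z)^\top\mathbb{E}[V\mid Z]] = 0$ for any $\lambda$. This uses $Z\perp V$ and the normalisation $\mathbb{E}[V]=0$, which holds because $g_0 = \mathbb{E}[X\mid Z]$. It gives orthogonality in $\lambda$ for free.

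\textbf{Step 2: the remaining terms.} I would differentiate $\mathbb{E}[\beta_0 + \mu_t'(X)/\mu_t(X) + \phi^m_t]$ term by term.
\begin{itemize}
\item In $\phi^m$, perturbing the weight $\alpha(X,V_t)$ produces terms of the form $\mathbb{E}[\nabla_v\alpha_0(X,V)\,\delta_g(Z)\,R]$. These vanish provided $\mathbb{E}[R\mid X,Z]=0$. Since $(X,Z)$ determines $V$, this reduces to $\mathbb{E}[\epsilon\mid X,V]=0$ in the triangular system.
\item The surviving contributions are:
\begin{itemize}
\item from $e^{u}$: $\mathbb{E}[\alpha_0(X,V)\,e^{u}\,\rho_0^\top\delta_g(Z)]$, which equals $\mathbb{E}[\alpha_0 m_0\,\rho_0^\top\delta_g]$ by iterated expectations;
\item from $m(X,V_t)$: $\mathbb{E}[\alpha_0\,\nabla_v m_0(X,V)^\top\delta_g(Z)]$;
\item from the plug-in term, through $\mu_t$: $\mathbb{E}[\nabla_v m_0(x,V)^\top\delta_g(Z)]$ evaluated inside $\mu'/\mu$. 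I would handle this with the same integration-by-parts and $\omega_0$ change-of-measure identity used in Lemma~\ref{lem-orthog-m}.
\end{itemize}
\end{itemize}
Each surviving contribution is linear in $\delta_g$ and, after conditioning on $Z$, has the form $\mathbb{E}[\kappa_j(Z)^\top\delta_g(Z)]$ for some explicit function $\kappa_j$.

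\textbf{Step 3: existence of the representer.} Summing gives $\partial_g\mathbb{E}[\beta + \mu'/\mu + \phi^m](\delta_g) = \mathbb{E}[\kappa(Z)^\top\delta_g(Z)]$. Under Assumption~B3 this is a bounded linear functional on $L^2(Z)$, so by the Riesz representation theorem there is a unique $\lambda_0 = -\kappa$ (up to the sign convention above) for which the total derivative is
\[
\mathbb{E}[\kappa^\top\delta_g] + \mathbb{E}[\lambda_0^\top\delta_g] = 0 .
\]
This is exactly the object the automatic debiasing procedure of \citet{chernozhukovAutomaticDebiasedMachine2022} targets, so no closed form for $\lambda_0$ is needed.

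\textbf{Main obstacle.} I expect the hardest step to be establishing boundedness of the functional. This requires:
\begin{itemize}
\item justifying the interchange of $d/dt$ with the expectation over the marginal of $V_t$ inside $\mu_t$;
\item ensuring the integration-by-parts boundary terms vanish;
\item ensuring $\nabla_v m_0$, $\nabla_v\omega_0$ and $S_X$ are square-integrable against the relevant weights.
\end{itemize}
I would impose these as part of the ``bounded derivatives'' clause of Assumption~B3, with $m_0$ bounded away from zero and infinity. I would first attempt dominated convergence using those bounds, and if that fails I would strengthen the assumption with an explicit $L^2$ envelope condition on $\nabla_v m_0$.
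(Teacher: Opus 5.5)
Your proposal follows the same route as the paper: perturb $g_t=g_0+t\delta_g$, collect every first-order term as $\mathbb{E}[(\cdot)^\top\delta_g(Z)]$, project onto $Z$, and define $\lambda_0$ as the resulting representer so that the $\phi^g$ term cancels by construction. The paper writes this representer as $\lambda_0(z)=\mathbb{E}[\Lambda\mid Z=z]$ rather than invoking the Riesz theorem, and it is less explicit than you about signs, $\mathbb{E}[V]=0$, and boundedness.

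One small slip: since $\sigma(X,V)\subseteq\sigma(X,Z)$, the $R$-terms vanish only under the control-function exclusion $\mathbb{E}[e^{\epsilon}\mid X,Z]=m_0(X,V)$ (e.g.\ from $(a_i,\varepsilon_i)\perp\!\!\perp Z\mid V$), not from $\mathbb{E}[\epsilon\mid X,V]=0$. You do not actually need them to vanish, because they are also linear in $\delta_g(Z)$ and can be absorbed into $\kappa$, just as the paper absorbs everything into $\Lambda$.
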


\begin{proof}
Consider the perturbation \(g_t = g_0 + t\delta_g\), which induces
\(V_t = X - g_t(Z) = V - t\delta_g(Z)\). This affects \(\phi\) through
evaluation points in \(m\), \(\omega\), and \(\mu\); through the density
ratio \(\omega_t\); and through the first-stage correction term.

The derivative of the main term \(J(m)\) with respect to \(g\) arises
from the shift in \(V\), giving
\[ \partial_g \mathbb{E}\left[\frac{\mu'(X)}{\mu(X)}\right] = \mathbb{E}\left[\nabla_v\left(\frac{\mu_0'(X)}{\mu_0(X)}\right) \cdot (-\delta_g(Z))\right]. \]
Since \(\mu_0(x) = \mathbb{E}_V[m_0(x,V)]\), this term involves
\(\bar{m}_v(x) := \mathbb{E}_V[\partial_v m_0(x,V)]\). The derivative of
the debiasing term \(-\mathbb{E}[\alpha_0(X,V)R(X,V)]\) includes
contributions from shifting the evaluation point in \(\alpha_0\) and
\(R\), and from the perturbation of the density ratio \(\omega_t\).

Collecting all terms proportional to \(\delta_g(Z)\), we obtain
\[ \partial_g \mathbb{E}[\phi] = \mathbb{E}[\Lambda(X,V,Z)^\top \delta_g(Z)] - \mathbb{E}[\lambda_0(Z)^\top \delta_g(Z)], \]
where \(\Lambda(X,V,Z)\) aggregates the pathwise derivatives. Setting
\(\lambda_0(z) := \mathbb{E}[\Lambda(X,V,Z) \mid Z = z]\) and applying
the tower property yields
\[ \partial_g \mathbb{E}[\phi] = \mathbb{E}[\Lambda^\top \delta_g(Z)] - \mathbb{E}[\mathbb{E}[\Lambda \mid Z]^\top \delta_g(Z)] = 0. \]
In practice, we estimate \(\lambda_0\) via automatic debiasing rather
than computing the explicit form of \(\Lambda\).
\end{proof}

\subsubsection{Orthogonality with respect to auxiliary
nuisances}\label{orthogonality-with-respect-to-auxiliary-nuisances}

\begin{lemma}[]\protect\hypertarget{lem-orthog-auxiliary}{}\label{lem-orthog-auxiliary}

The score \(\phi\) is Neyman-orthogonal with respect to the auxiliary
nuisances \((\omega, S_X, \lambda)\).

\end{lemma}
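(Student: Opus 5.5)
The plan is to treat each of the three auxiliary nuisances in turn. The key fact is that each enters the score $\phi$ only \emph{multiplicatively}, attached to a factor whose conditional mean vanishes at the truth. So the Gateaux derivative in each direction is the expectation of (direction) $\times$ (mean-zero residual), and this vanishes by iterated expectations. By linearity of Gateaux derivatives (as in the proof of Theorem~\ref{thm-neymanorthogonality}), directional orthogonality in each component implies joint orthogonality.

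\emph{Step 1 ($\omega$ and $S_X$).} Fix $m=m_0$, $g=g_0$, $(\beta,\rho)=(\beta_0,\rho_0)$. Then $\omega$ and $S_X$ enter only through the correction $\phi^m(W) = -\omega(X,V)S_X(X)\,\mu_0(X)^{-1}R(X,V)$, where $R = e^{u}-m_0(X,V)$. Take paths $\omega_t=\omega_0+t\delta_\omega$ and $S_{X,t}=S_X+t\delta_S$. The derivatives at $t=0$ are
\[
-\mathbb{E}\bigl[\delta_\omega(X,V)S_X(X)\mu_0(X)^{-1}R\bigr] \quad\text{and}\quad -\mathbb{E}\bigl[\omega_0(X,V)\delta_S(X)\mu_0(X)^{-1}R\bigr].
\]
Each multiplier is $(X,V)$-measurable. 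Conditioning on $(X,V)$ and using $\mathbb{E}[R\mid X,V]=0$ therefore gives zero in both cases. I would note that this requires the perturbation directions to be square integrable, so that Assumption~\ref{asm-iv-regularity} and the moment bound in Assumption~\ref{asm-iv-moments} justify exchanging differentiation and expectation via dominated convergence.

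\emph{Step 2 ($\lambda$).} With $g=g_0$, the parameter $\lambda$ enters only through $\phi^g(W)=-\lambda(Z)^\top(X-g_0(Z)) = -\lambda(Z)^\top V$. Along $\lambda_t=\lambda_0+t\delta_\lambda$ the derivative is $-\mathbb{E}[\delta_\lambda(Z)^\top V]$. By Assumption~\ref{asm-iv-first-stage}, $V\perp Z$, and $g_0(z)=\mathbb{E}[X\mid Z=z]$ gives $\mathbb{E}[V]=0$. Hence $\mathbb{E}[V\mid Z]=0$, and the term vanishes after conditioning on $Z$.

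\emph{Main obstacle.} The delicate point is Step 1: the same residual $R$ must be mean-zero given \emph{both} $X$ and $V$, so that the $X$-only factor $S_X$ and the $(X,V)$ factor $\omega$ can both be pulled out. This is exactly the definition $m_0(x,v)=\mathbb{E}[e^{u}\mid X=x,V=v]$, so it holds. What remains is to check integrability of $\omega_0 S_X/\mu_0$ times $R$, which I would obtain from $m_0$ bounded away from zero (so $\mu_0$ is too) together with the bounded density derivatives in Assumption~\ref{asm-iv-regularity}. Combining with Lemmas~\ref{lem-orthog-m} and~\ref{lem-orthog-g} then completes Theorem~\ref{thm-iv-neymanorthogonality}.
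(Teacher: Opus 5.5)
Your proposal is correct and follows the paper's proof. In both, \(\omega\) and \(S_X\) multiply the residual \(R=e^u-m_0(X,V)\) by an \((X,V)\)-measurable factor, so their Gateaux derivatives vanish by \(\mathbb{E}[R\mid X,V]=0\), and the \(\lambda\)-derivative \(-\mathbb{E}[\delta_\lambda(Z)^\top V]\) vanishes because \(\mathbb{E}[V\mid Z]=0\). You also make explicit a step the paper leaves implicit when it cites only \(Z\perp V\): this needs the normalization \(\mathbb{E}[V]=0\), which you get from \(g_0(z)=\mathbb{E}[X\mid Z=z]\).
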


\begin{proof}
For the density ratio \(\omega\), the only relevant term is
\(-\mathbb{E}[\omega(X,V) S_X(X) R(X,V) / \mu_0(X)]\). The derivative
with respect to \(\omega\) is
\[ \partial_\omega \mathbb{E}[\phi] = -\mathbb{E}\left[\frac{\delta_\omega(X,V) S_X(X) R(X,V)}{\mu_0(X)}\right]. \]
Since \(\mathbb{E}[R(X,V) \mid X, V] = 0\) and all other terms are
\((X,V)\)-measurable, this equals zero. The same argument applies to
perturbations in \(S_X\).

For the Riesz representer \(\lambda\), the derivative is
\[ \partial_\lambda \mathbb{E}[\phi] = -\mathbb{E}[\delta_\lambda(Z)^\top (X - g_0(Z))] = -\mathbb{E}[\delta_\lambda(Z)^\top V]. \]
By iterated expectations and \(Z \perp V\), we have
\(\mathbb{E}[\delta_\lambda(Z)^\top V] = \mathbb{E}[\delta_\lambda(Z)^\top \mathbb{E}[V \mid Z]] = 0\).
\end{proof}

\subsubsection{Orthogonality with respect to finite-dimensional
parameters}\label{orthogonality-with-respect-to-finite-dimensional-parameters}

\begin{lemma}[]\protect\hypertarget{lem-orthog-beta-rho}{}\label{lem-orthog-beta-rho}

The score \(\phi\) is Neyman-orthogonal with respect to the
finite-dimensional parameters \((\beta, \rho)\).

\end{lemma}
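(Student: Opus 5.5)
The plan mirrors the $(\beta,\gamma)$ step in the proof of Theorem~\ref{thm-neymanorthogonality}. The parameters $(\beta,\rho)$ enter the IV score in two places: directly through the leading $\beta$ term, and through $u(\beta,\rho)=\log Y-\beta^\top X-\rho^\top V$ inside the correction $\phi^m$. Neither $\mu$, $\omega$, $S_X$, $g$ nor $\lambda$ depends on $(\beta,\rho)$ when those nuisances are held fixed at the truth. So I would perturb $\beta_t=\beta_0+tb$ and $\rho_t=\rho_0+tr$ separately and compute $\frac{d}{dt}\mathbb{E}[\phi]\big|_{t=0}$ in each direction. By linearity of the Gateaux derivative, orthogonality in each direction gives joint orthogonality.

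\textbf{Direction $b$ for $\beta$.} The leading term contributes $+b$. Since $\frac{d}{dt}e^{u(\beta_t,\rho_0)}\big|_{t=0}=-e^u\,b^\top X$, the correction term contributes
\[ \mathbb{E}\!\left[\frac{\omega_0(X,V)S_X(X)}{\mu_0(X)}\,e^u\,b^\top X\right]. \]
By iterated expectations on $(X,V)$, $e^u$ can be replaced by $m_0(X,V)$. The identity used in Lemma~\ref{lem-orthog-m}, $\mathbb{E}_{X,V}[\omega_0 h(X)k(X,V)]=\mathbb{E}_X[h(X)\mathbb{E}_V k(X,V)]$, then collapses the $V$-average to $\mu_0(X)$. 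The contribution therefore becomes $\mathbb{E}[S_X(X)\,b^\top X]$. Integration by parts with respect to the marginal density of $X$, under the same boundary conditions as (IBP), gives $\mathbb{E}[S_X(X)b^\top X]=-\mathbb{E}[\nabla_x(b^\top X)]=-b$. This cancels the $+b$ from the leading term.

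\textbf{Direction $r$ for $\rho$.} There is no direct term. The correction contributes
\[ \mathbb{E}\!\left[\frac{\omega_0 S_X}{\mu_0}\,m_0(X,V)\,r^\top V\right]=\mathbb{E}_X\!\left[\frac{S_X(X)}{\mu_0(X)}\,\mathbb{E}_V[m_0(X,V)\,r^\top V]\right], \]
using the same two steps as above. This is not obviously zero, and it is the main obstacle.

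What I would try first is to exploit the normalisation implicit in the triangular model. Because $\mathbb{E}[\epsilon\mid X,V]=0$, the control $V$ absorbs all the endogeneity, so the natural convention is $\rho_0$ defined so that $m_0$ does not vary with $V$ in the direction that $\rho$ loads on. If that fails, I would argue that in the estimand $\theta_0=\mathbb{E}[\beta_0+\mu_0'/\mu_0]$, the pair $(\rho,m)$ is not separately identified. A shift $\rho\mapsto\rho+tr$ is exactly compensated by $m\mapsto m\,e^{t r^\top V}$, so $\mu$ and $\theta$ are invariant along that path. Given Lemma~\ref{lem-orthog-m}, the derivative of $\mathbb{E}[\phi]$ along this combined path is zero. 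Hence the $\rho$-derivative equals minus the $m$-derivative in the direction $\delta_m=m_0\,r^\top V$, and orthogonality in $m$ forces that to vanish.

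I expect this reparametrisation argument to be the delicate step, because it must be checked that the compensating path is admissible, i.e.\ that $m_t$ stays bounded away from zero and infinity under B3. The $\beta$ case, by contrast, is a routine integration by parts.
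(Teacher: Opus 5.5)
\textbf{The $\beta$ step is correct; the $\rho$ step has a genuine gap, and the paper's proof has the same gap.}

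Your $\beta$ step is correct. It also supplies the real reason for the identity $\mathbb{E}[\alpha_0 m_0 X^\top]=I$. The paper attributes that identity to ``the first-order condition for $\beta_0$ in the control function regression''. But that condition is $\mathbb{E}[(X^\top,V^\top)^\top u]=0$ and involves neither $e^u$ nor $\alpha_0$. Your $\omega_0$-reweighting followed by integration by parts against $f_X$ is what actually proves the identity.

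The paper dismisses the $\rho$ direction with ``an analogous argument applies'', but there is no analogue, as you noticed. There is no direct $\rho$ term, and the correction contributes $\mathbb{E}_X[S_X(X)\,r^\top\kappa(X)]$, where $\kappa(x):=\mathbb{E}_V[m_0(x,V)V]/\mu_0(x)$.

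Your reparametrisation argument does not remove this term. The compensating path is $\rho_0+tr$, $m_t=m_0e^{-tr^\top v}$, since $e^{u(\beta_0,\rho_0+tr)}=e^{u_0}e^{-tr^\top V}$. Along it the correction term does stay mean zero. But $\mu$ is not invariant, because the score uses $\mu(x)=\mathbb{E}_V[m(x,V)]$ with no factor $e^{\rho^\top V}$. So $\mu_t(x)=\mathbb{E}_V[m_0(x,V)e^{-tr^\top V}]$ moves in an $x$-dependent way. Differentiating $\mathbb{E}[\mu_t'/\mu_t]$ with the formula from Lemma~\ref{lem-orthog-m} returns exactly $\mathbb{E}_X[S_X(X)\,r^\top\kappa(X)]$. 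Asserting that the path derivative is zero is therefore the claim you set out to prove, so the argument is circular.

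The normalisation route is not available either. $\rho_0$ is pinned down by $\mathbb{E}[\epsilon\mid X,V]=0$, and B1--B3 leave the $v$-dependence of $m_0$ unrestricted.

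In fact the missing step cannot be supplied from the stated assumptions, because the term is generically nonzero. Here is a design satisfying B1--B3:
\begin{itemize}
\item $d_x=1$, with $Z,V$ i.i.d.\ $N(0,1)$ and $X=Z+V$.
\item $\epsilon\mid X,V\sim N(0,1+\tanh(XV))$.
\end{itemize}
Then $m_0(x,v)=\exp\{(1+\tanh(xv))/2\}\in(1,e)$. For $x>0$, $m_0(x,\cdot)$ is strictly increasing, so $\kappa$ is odd with $x\kappa(x)>0$ for $x\neq 0$. Since $S_X(x)=-x/2$, the $\rho$-derivative equals $-\tfrac{r}{2}\mathbb{E}[X\kappa(X)]\neq 0$.

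So orthogonality in $\rho$ needs one of two fixes.
\begin{itemize}
\item \textbf{An extra restriction,} $\mathbb{E}_X[S_X(X)\kappa(X)^\top]=0$. By integration by parts, this says the $m_0$-tilted mean of $V$ has zero average slope in $x$. It holds, for example, if $m_0$ is free of $v$ or multiplicatively separable in $(x,v)$.
\item \textbf{A change of estimand,} for which your invariance idea is exactly right. Since $\mathbb{E}[Y\mid X=x,V=v]=e^{\beta_0^\top x+\rho_0^\top v}m_0(x,v)$, the average structural function really requires $\tilde\mu(x)=\mathbb{E}_V[e^{\rho^\top V}m(x,V)]$. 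Use that $\tilde\mu$ with representer $-\omega_0 S_X e^{\rho_0^\top V}/\tilde\mu_0$. Then $(\rho,m)$ enter the plug-in term only through $e^{\rho^\top v}m$, your compensating path leaves $\mathbb{E}[\phi]$ constant, and your argument goes through verbatim.
\end{itemize}
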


\begin{proof}
The parameters enter through
\(u(\beta, \rho) = \log Y - \beta^\top X - \rho^\top V\) in the residual
\(e^u - m_0(X,V)\). The derivative with respect to \(\beta\) is
\[ \partial_\beta \mathbb{E}[\phi] = I + \mathbb{E}\left[\alpha_0(X,V) \cdot e^{u_0} \cdot (-X)\right], \]
where \(I\) is the identity from \(\partial_\beta \beta = I\). At truth,
\(e^{u_0} = Y e^{-\beta_0^\top X - \rho_0^\top V}\), and
\[ \mathbb{E}[\alpha_0(X,V) e^{u_0} X] = \mathbb{E}[\alpha_0 \cdot \mathbb{E}[e^{u_0} X \mid X,V]] = \mathbb{E}[\alpha_0 \cdot X \cdot m_0(X,V)]. \]
The first-order condition for \(\beta_0\) in the control function
regression implies this equals \(I\), yielding
\(\partial_\beta \mathbb{E}[\phi] = 0\). An analogous argument applies
for \(\rho\).
\end{proof}

\subsubsection{\texorpdfstring{Proof of
Theorem~\ref{thm-iv-neymanorthogonality}}{Proof of Theorem~}}\label{proof-of-thm-iv-neymanorthogonality}

\begin{proof}
Combine Lemma~\ref{lem-orthog-m}, Lemma~\ref{lem-orthog-g},
Lemma~\ref{lem-orthog-auxiliary}, and Lemma~\ref{lem-orthog-beta-rho}.
\end{proof}

\subsubsection{\texorpdfstring{Proof of
Theorem~\ref{thm-iv-asympnorm}}{Proof of Theorem~}}\label{proof-of-thm-iv-asympnorm}

\begin{proof}
By Neyman orthogonality (Theorem~\ref{thm-iv-neymanorthogonality}), the
score satisfies
\(\partial_\eta \mathbb{E}[\phi(W; \theta_0, \eta)]\big|_{\eta_0} = 0\).
Under assumption B5, the product-rate conditions ensure
\[ \left|\mathbb{E}[\phi(W; \theta_0, \hat{\eta})] - \mathbb{E}[\phi(W; \theta_0, \eta_0)]\right| = O_p(|\hat{\eta} - \eta_0|^2) = o_p(n^{-1/2}). \]
The moment condition
\(\partial_\theta \mathbb{E}[\phi(W; \theta)]\big|_{\theta_0} = -1\)
implies the influence function equals \(\phi_0(W)\). By the central
limit theorem and Slutsky's lemma,
\[ \sqrt{n}(\hat{\theta} - \theta_0) = \frac{1}{\sqrt{n}} \sum_{i=1}^n \phi_0(W_i) + o_p(1) \Rightarrow N(0, V). \]
\end{proof}

\subsubsection{\texorpdfstring{Proof of
Corollary~\ref{cor-iv-consistency}}{Proof of Corollary~}}\label{proof-of-cor-iv-consistency}

\begin{proof}
Consistency follows from standard Z-estimation arguments. Under B4,
\(\theta_0\) is the unique solution to
\(\mathbb{E}[\phi(W; \theta)] = 0\). The sample analog \(\hat{\theta}\)
solves \(\mathbb{P}_n[\phi(W; \theta, \hat{\eta})] = 0\). By the uniform
law of large numbers and continuity of the moment function,
\(\hat{\theta} \xrightarrow{p} \theta_0\).
\end{proof}

\subsubsection{Causal identification}\label{causal-identification}

We now connect the control function estimand to the causal
semi-elasticity defined through potential outcomes. Decompose the
potential outcomes as \[ \log Y_i(x) = \beta x + M_i(x) + C_i, \] where
\(M_i(x)\) is a potential mediator and \(C_i\) is a confounder invariant
to intervention on \(x\). The causal arithmetic mean semi-elasticity is
\[ \theta(x) = \frac{\partial \log \mathbb{E}[Y_i(x)]}{\partial x} = \beta + \frac{\partial_x \mathbb{E}[\exp(M_i(x))]}{\mathbb{E}[\exp(M_i(x))]}. \]

As established in
\hyperref[myprp-iv-impossibility]{Proposition~\ref*{myprp-iv-impossibility}},
standard instrumental variables approaches cannot identify \(\theta(x)\)
under mean-independence assumptions alone. However,
\hyperref[myprp-control-function-identification]{Proposition~\ref*{myprp-control-function-identification}}
shows that control function assumptions are sufficient. The following
corollary formalizes this for our estimator.

\begin{corollary}[Control function estimand equals causal
semi-elasticity]\protect\hypertarget{cor-cf-causal-identification}{}\label{cor-cf-causal-identification}

Assume the triangular structure \(X = g_0(Z) + V\) with \(Z \perp V\),
and suppose (i) \(C_i \perp X \mid V\) and (ii) \(M_i(x) \perp V\) for
all \(x\). Then the control function estimand
\[ \theta_0 = \mathbb{E}\left[\beta + \frac{\nabla_x \mu_0(X)}{\mu_0(X)}\right], \qquad \mu_0(x) = \mathbb{E}_V[m_0(x, V)], \]
equals the average causal arithmetic mean semi-elasticity
\(\mathbb{E}[\theta(X)]\).

\end{corollary}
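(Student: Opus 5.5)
The plan is to show that, for every fixed $x$, the control-function object $\mu_0(x)$ equals $e^{-\beta x}$ times a constant multiple of the causal level $\mathbb{E}[Y_i(x)]$. Then $\nabla_x \log \mu_0(x) + \beta = \theta(x)$ pointwise, and averaging over the marginal law of $X$ gives $\theta_0 = \mathbb{E}[\theta(X)]$. Here $\beta_0$ in the estimand is identified with the structural $\beta$, and $m_0$ is read as $\mathbb{E}[e^{\log Y - \beta X}\mid X,V]$. Any linear term $\rho_0^\top V$ is a function of $V$ alone, so it contributes a factor $e^{-\rho_0^\top v}$ inside the $V$-integral. I will track that factor explicitly rather than absorb it.

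\textbf{Step 1: compute $m_0(x,v)$.} On the event $\{X=x, V=v\}$ the observed outcome is $Y = Y_i(x)$, so
\[ e^{u} = \exp\big(M_i(x) + C_i - \rho_0^\top V\big). \]
Hence
\[ m_0(x,v) = e^{-\rho_0^\top v}\,\mathbb{E}\big[e^{M_i(x)}e^{C_i}\mid X=x, V=v\big]. \]
Conditional on $V=v$, $X = g_0(Z)+v$ is a function of $Z$ alone, and $Z\perp V$. Assumption (i), $C_i \perp X \mid V$, lets me drop the conditioning on $X$ for $C_i$. For the mediator, assumption (ii) together with $Z$ being an instrument gives $M_i(x)\perp (X,V)$. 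I also need the joint independence $(M_i(x), C_i)$ factorizing given $V$, since (i) and (ii) are stated separately. I will state explicitly that the $M_i(x)$ and $C_i$ terms separate conditional on $V$, either by reading (i)–(ii) jointly or by adding this mild strengthening. This yields
\[ m_0(x,v) = e^{-\rho_0^\top v}\,\mathbb{E}[e^{M_i(x)}]\;\mathbb{E}[e^{C_i}\mid V=v]. \]

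\textbf{Step 2: integrate over $V$.} Integrating over $f_V$ gives
\[ \mu_0(x) = \mathbb{E}[e^{M_i(x)}]\cdot K, \qquad K := \mathbb{E}_V\big[e^{-\rho_0^\top V}\,\mathbb{E}[e^{C_i}\mid V]\big], \]
where $K$ is a positive constant free of $x$. Therefore $\nabla_x \log\mu_0(x) = \partial_x \mathbb{E}[e^{M_i(x)}]/\mathbb{E}[e^{M_i(x)}]$, and adding $\beta$ reproduces the displayed formula for $\theta(x)$. Exchanging differentiation and expectation needs a dominated-convergence condition. I will take it from the regularity in Assumption~B3, which bounds $m_0$ away from zero and infinity, plus differentiability of $x\mapsto M_i(x)$. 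Finally, $\theta_0 = \mathbb{E}[\beta + \nabla_x\mu_0(X)/\mu_0(X)] = \mathbb{E}[\theta(X)]$.

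\textbf{Main obstacle.} The hard part is Step 1's separation of $M_i(x)$ from $C_i$ and from the selection on $X$. Assumption (ii) only says $M_i(x)\perp V$, so I must argue that $M_i(x)$ is also independent of $Z$, hence of $X$ given $V$. My first attempt is to invoke the instrument exogeneity underlying Proposition~\ref{myprp-control-function-identification}, namely that potential outcomes are independent of $Z$ given $V$, and combine it with (ii). If that is insufficient, I will state joint independence $(M_i(\cdot),C_i)\perp Z\mid V$ with $M_i(\cdot)\perp (C_i,V)$ as the interpretation of (i)–(ii).
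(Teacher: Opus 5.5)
The paper states this corollary without a proof, so there is no authors' argument to compare with. Your strategy is the right one, and it works once your fallback hypotheses are in force. You show that $m_0(x,v)$ factors into $e^{(\beta-\beta_0)x}\,\mathbb{E}[e^{M_i(x)}]$ times a function of $v$ alone. You then integrate out $V$, correctly keeping the $e^{-\rho_0 v}$ weight, and let the $x$-free constant drop out of the log-derivative. The problem is that the fallback is not an optional strengthening. Without it, both your Step 1 and the corollary as stated fail.

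\textbf{The gap in Step 1.} The inference ``(ii) together with $Z$ being an instrument gives $M_i(x)\perp(X,V)$'' is invalid for two reasons.
\begin{itemize}
\item Marginal independences $M_i(x)\perp V$ and $M_i(x)\perp Z$ do not give joint independence from $(Z,V)$. Take $Z,V$ independent Rademacher and $M=ZV$.
\item More seriously, (i)--(ii) impose no exogeneity of $Z$ with respect to $M_i$ at all, and the conclusion is false without it. Take $g_0$ the identity, $C_i\equiv 0$, and $M_i(x)=\sigma(Z_i)\eta_i x$, with $(Z_i,V_i,\eta_i)$ mutually independent, $\eta_i\sim N(0,1)$ and $\sigma$ non-constant. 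Then (i), (ii), and even the linear control-function model hold, with $\beta_0=\beta$ and $\rho_0=0$. Yet $m_0(x,v)=\exp(\sigma(x-v)^2x^2/2)$, so $\mu_0(x)=\mathbb{E}_V[\exp(\sigma(x-V)^2x^2/2)]$, whereas $\mathbb{E}[e^{M_i(x)}]=\mathbb{E}_Z[\exp(\sigma(Z)^2x^2/2)]$. Their log-derivatives, and hence $\theta_0$ and $\mathbb{E}[\theta(X)]$, differ for generic $\sigma$.
\end{itemize}

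So you should state $(M_i(\cdot),C_i)\perp Z\mid V$ and $M_i(\cdot)\perp(C_i,V)$ as hypotheses from the outset. They replace (i)--(ii), both of which they imply, and they are exactly what Step 1 needs. The first lets you drop $X$ from the conditioning in $\mathbb{E}[e^{M_i(x)+C_i}\mid X=x,V=v]$, because $X=g_0(Z)+V$. The second then factors that expectation as $\mathbb{E}[e^{M_i(x)}]\,\mathbb{E}[e^{C_i}\mid V=v]$. The paper's own display for $\theta(x)$ silently relies on something like $M_i\perp C_i$ in its second equality. Joint rather than pairwise independence from $(C_i,V)$ is what lets the $e^{-\rho_0 V}$ weight factor out of $\mu_0$.

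\textbf{Two smaller corrections.}
\begin{itemize}
\item You do not need $\beta_0=\beta$, and you should not assert it. The split between $\beta x$ and $M_i(x)$ is not unique, and the control-function regression loads any linear trend in $\mathbb{E}[M_i(x)]$ onto $\beta_0$. Keep the factor $e^{(\beta-\beta_0)x}$ in $m_0$; the leading $\beta_0$ cancels it exactly, since $\beta_0+\nabla_x\log(e^{(\beta-\beta_0)x}h(x))=\beta+\nabla_x\log h(x)$.
\item No derivative--expectation interchange is needed, and B3's bounds on $m_0$ would not supply a dominating function anyway. The paper's $\theta(x)$ differentiates the function $x\mapsto\mathbb{E}[e^{M_i(x)}]$. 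Once $\mu_0(x)=K\,e^{(\beta-\beta_0)x}\,\mathbb{E}[e^{M_i(x)}]$ holds as an identity on an open set, the log-derivatives agree. What you do need from B3 is $f_{V\mid X}(v\mid x)>0$ wherever $f_V(v)>0$, so that $m_0(x,v)$ is defined for every $v$ in the integral defining $\mu_0(x)$. You also need $0<K<\infty$.
\end{itemize}
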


\section{Algorithms}\label{algorithms}

\subsection{Ordinary Algorithm: Debiased Estimation of Average
Semi-Elasticity}\label{ordinary-algorithm-debiased-estimation-of-average-semi-elasticity}

\textbf{Input:} Data \(\{(Y_i, X_i, Z_i)\}_{i=1}^n\), number of folds
\(K\)

\textbf{Output:} Estimate \(\hat{\theta}\) and standard error
\(\widehat{\text{SE}}\)

\textbf{Step 0: Sample Splitting}

Randomly partition \(\{1, \ldots, n\}\) into \(K\) folds of
approximately equal size. Let \(k(i)\) denote the fold containing
observation \(i\), and let \(\mathcal{I}_k\) denote the indices in fold
\(k\).

\textbf{Step 1: Cross-Fitted Nuisance Estimation}

For each fold \(k = 1, \ldots, K\):

\begin{quote}
\textbf{(a)} Using data excluding fold \(k\), estimate
\((\hat{\beta}_{-k}, \hat{\gamma}_{-k})\) from the regression
\(\log Y = \beta^\top X + \gamma^\top Z + u\).

\textbf{(b)} Form residuals
\(\hat{u}_i = \log Y_i - \hat{\beta}_{-k}^\top X_i - \hat{\gamma}_{-k}^\top Z_i\)
for \(i \notin \mathcal{I}_k\).

\textbf{(c)} Estimate
\(\hat{m}_{-k}(x,z) \approx \mathbb{E}[e^u \mid X=x, Z=z]\) by
regressing \(e^{\hat{u}}\) on \((X, Z)\) using a neural network on data
excluding fold \(k\).

\textbf{(d)} Obtain \(\nabla_x \hat{m}_{-k}(x,z)\) via automatic
differentiation.

\textbf{(e)} Estimate the density score
\(\widehat{\nabla_x \log f}_{-k}(x \mid z) \approx \nabla_x \log f_0(x \mid z)\)
via score matching on data excluding fold \(k\).

\textbf{(f)} Construct the correction weight:
\[\hat{\alpha}_{-k}(x,z) = -\frac{\widehat{\nabla_x \log f}_{-k}(x \mid z)}{\hat{m}_{-k}(x,z)}\]
\end{quote}

\textbf{Step 2: Compute Score Contributions}

For each observation \(i = 1, \ldots, n\), let \(k = k(i)\) and compute:

\[
\hat{\phi}_i = \hat{\beta}_{-k} + \frac{\nabla_x \hat{m}_{-k}(X_i, Z_i)}{\hat{m}_{-k}(X_i, Z_i)} + \hat{\alpha}_{-k}(X_i, Z_i) \left( e^{\hat{u}_i} - \hat{m}_{-k}(X_i, Z_i) \right)
\]

where
\(\hat{u}_i = \log Y_i - \hat{\beta}_{-k}^\top X_i - \hat{\gamma}_{-k}^\top Z_i\).

\textbf{Step 3: Estimate \(\theta\)}

\[
\hat{\theta} = \frac{1}{n} \sum_{i=1}^n \hat{\phi}_i
\]

\textbf{Step 4: Variance Estimation and Inference}

\[
\hat{V} = \frac{1}{n} \sum_{i=1}^n \left( \hat{\phi}_i - \hat{\theta} \right)^2
\]

\[
\widehat{\text{SE}} = \sqrt{\hat{V}/n}
\]

A \((1-\alpha)\) confidence interval is given by
\(\hat{\theta} \pm z_{1-\alpha/2} \cdot \widehat{\text{SE}}\).

These estimators are valid if \(X\) is as good as randomized. For other
settings, we provide an instrumental variables estimator.

\subsection{IV Algorithm: Debiased Estimation of Average Structural
Semi-Elasticity}\label{iv-algorithm-debiased-estimation-of-average-structural-semi-elasticity}

\textbf{Input:} Data \({(Y_i, X_i, Z_i)}_{i=1}^n\), number of folds
\(K\)

\textbf{Output:} Estimate \(\hat{\theta}\) and standard error
\(\widehat{\text{SE}}\)

\textbf{Sample Splitting.} Randomly partition \({1, \ldots, n}\) into
\(K\) folds. Let \(k(i)\) denote the fold containing observation \(i\).

\textbf{Cross-Fitted Nuisance Estimation.} For each fold
\(k = 1, \ldots, K\), using data excluding fold \(k\):

\begin{quote}
Estimate \(\hat{g}_{-k}(z) \approx \mathbb{E}[X \mid Z = z]\) and form
residuals \(\hat{V}_i = X_i - \hat{g}_{-k}(Z_i)\) for all \(i\).

Estimate \((\hat{\beta}_{-k}, \hat{\rho}_{-k})\) from
\(\log Y = \beta^\top X + \rho^\top \hat{V} + \epsilon\) and form
outcome residuals
\(\hat{u}_i = \log Y_i - \hat{\beta}_{-k}^\top X_i - \hat{\rho}_{-k}^\top \hat{V}_i\).

Estimate
\(\hat{m}_{-k}(x, v) \approx \mathbb{E}[e^u \mid X = x, V = v]\) via
neural network.

Compute \(\hat{\mu}_{-k}(x) = n^{-1} \sum_j \hat{m}_{-k}(x, \hat{V}_j)\)
and \(\nabla_x \hat{\mu}_{-k}(x)\) via automatic differentiation.

Estimate \(\hat{\omega}_{-k}(x, v)\) and \(\widehat{S_X}_{-k}(x)\) via
density ratio estimation and score matching.

Estimate \(\hat{\lambda}_{-k}(z)\) via automatic Riesz representer
learning.
\end{quote}

\textbf{Score Contributions.} For each observation \(i = 1, \ldots, n\),
let \(k = k(i)\) and compute
\[ \hat{\phi}_i = \hat{\beta}_{-k} + \frac{\nabla_x \hat{\mu}_{-k}(X_i)}{\hat{\mu}_{-k}(X_i)} - \frac{\hat{\omega}_{-k}(X_i, \hat{V}_i) \widehat{S_X}_{-k}(X_i)}{\hat{\mu}_{-k}(X_i)} \left( e^{\hat{u}_i} - \hat{m}_{-k}(X_i, \hat{V}_i) \right) - \hat{\lambda}_{-k}(Z_i)^\top (X_i - \hat{g}_{-k}(Z_i)). \]

\textbf{Estimation and Inference.} The estimator is
\[ \hat{\theta} = \frac{1}{n} \sum_{i=1}^n \hat{\phi}_i, \] with
variance estimator and standard error
\[ \hat{V} = \frac{1}{n} \sum_{i=1}^n \left( \hat{\phi}_i - \hat{\theta} \right)^2, \qquad \widehat{\text{SE}} = \sqrt{\hat{V}/n}. \]
A \((1-\alpha)\) confidence interval is given by
\(\hat{\theta} \pm z_{1-\alpha/2} \cdot \widehat{\text{SE}}\).

\end{document}